\theoremstyle{plain}
\newtheorem{thm}{Theorem}
\newtheorem{lemma}[thm]{Lemma}
\theoremstyle{definition}
\newtheorem{definition}[thm]{Definition}
\newtheorem{example}[thm]{Example}
\newtheorem{remark}[thm]{Remark}
 \newcommand{\coloneqq}{%
 \mathrel{%
   \rlap{\raisebox{0.3ex}{$\m@th\cdot$}}\raisebox{-0.3ex}{$\m@th\cdot$}%
   \rlap{\raisebox{0.3ex}{$\m@th\cdot$}}\raisebox{-0.3ex}{$\m@th\cdot$}%
 }=}
\newcommand{\dual}[1]{\overline{#1}}
\newcommand{\cneg}[1]{\dual{#1}}
\newcommand{\VAR}{\textsc{var}}
\newcommand{\FUN}{\textsc{fun}}
\newcommand{\PRED}{\textsc{pred}}
\newcommand{\ATOM}{\textsc{atom}}
\newcommand{\FORM}{\textsc{form}}
\newcommand{\TERM}{\textsc{term}}
\newcommand{\samecol}{\sim}
\newcommand{\fequ}{\equiv}
\newcommand{\graph}[1]{\mathcal{#1}}
\newcommand{\vertices}[1][]{\ifthenelse{\isempty{#1}}{V}{V_{{\graph{#1}}}}}
\newcommand{\edges}[1][]{\ifthenelse{\isempty{#1}}{E}{E_{\graph{#1}}}}
\newcommand{\bgraph}[1]{\mathcal{\vec{#1}}}
\newcommand{\lgraph}[1]{\mathcal{#1}^{\mathsf{L}}}
\newcommand{\leaps}[1]{L_{#1}}
\newcommand{\vgraphof}[1]{V_{\graphof{#1}}}
\newcommand{\gA}{\graph{A}}
\newcommand{\gB}{\graph{B}}
\newcommand{\gC}{\graph{C}}
\newcommand{\gD}{\graph{D}}
\newcommand{\gG}{\graph{G}}
\newcommand{\gH}{\graph{H}}
\newcommand{\bD}{\bgraph{D}}
\newcommand{\bG}{\bgraph{G}}
\newcommand{\bH}{\bgraph{H}}
\newcommand{\vB}{\vertices[B]}
\newcommand{\vC}{\vertices[C]}
\newcommand{\vG}{\vertices[G]}
\newcommand{\vH}{\vertices[H]}
\newcommand{\eG}{\edges[G]}
\newcommand{\eH}{\edges[H]}
\newcommand{\sysS}{\mathsf{S}}
\newcommand{\Deri}{\Phi}
\newcommand{\DDeri}{\Psi}
\newcommand{\Proof}{\Pi}
\newcommand{\lgC}{\lgraph{\gC}}
\newcommand{\lpC}{\leaps{\gC}}
\newcommand*{\FOLK}{\mathsf{LK1}}
\newcommand*{\FOLKcut}{\FOLK\mathord+\cut}
\newcommand*{\FOMLL}{\mathsf{MLL1^X}} 
\newcommand*{\FOMLLcut}{\FOMLL\mathord+\cut}
\newcommand*{\FOKS}{\mathsf{KS1}}
\newcommand*{\FOMLS}{\mathsf{MLS1^X}}
\newcommand{\rr}{\mathsf{r}}
\newcommand{\ax}{\mathsf{ax}}
\newcommand{\cut}{\mathsf{cut}}
\newcommand{\mix}{\mathsf{mix}}
\newcommand{\axr}{\mathsf{ax}}
\newcommand{\cutr}{\mathsf{cut}}
\newcommand{\mixr}{\mathsf{mix}}
\newcommand{\conr}{\mathsf{ctr}}
\newcommand{\weakr}{\mathsf{wk}}
\newcommand\aiD {\mathsf{ai}}
\newcommand\faD {\forall}
\newcommand\exD {\exists}
\newcommand\tttD {\ttt}
\renewcommand\wD {\mathsf{w}}
\newcommand\wrD {\mathsf{w}}
\renewcommand\cD {\mathsf{c}}
\renewcommand\acD {\mathsf{ac}}
\newcommand\acDx {\mathsf{ac}_x}
\newcommand\acDeq {\mathsf{ac}_x^\fequ}
\newcommand\wfaD {\mathsf{w_\forall}}
\newcommand\cfaD {\mathsf{c_\forall}}
\newcommand\mfaD {\mathsf{m_\forall}}
\newcommand\mexD {\mathsf{m_\exists}}
\newcommand{\cons}[1]{\{#1\}}
\newcommand{\Scons}[1]{S\cons{#1}}
\newcommand{\conhole}{\cons{\cdot}}
\newcommand{\Sconhole}{S\conhole}
\newcommand{\cor}{\vee}
\newcommand{\cand}{\wedge}
\newcommand{\PE}[1]{#1^\circ}
\newcommand\fv{\textsf{\small fv}}
\DeclareTextFontCommand{\bfit}{\bfseries\itshape}
\newcommand{\Pfour}{\mathbf{P_4}}
\newcommand{\tuple}[1]{\langle#1\rangle}
\newcommand{\pair}[1]{(#1)}
\newcommand{\set}[1]{\{#1\}}
\newcommand{\sqn}[1]{\vdash#1}
\newcommand{\sqns}[1]{\vdash#1\phantom{\vdash}}
\newcommand{\single}[1]{\bullet#1}
\newcommand{\rectif}[1]{\widehat{#1}}
\newcommand{\fographof}[1]{\llbracket#1\rrbracket}
\newcommand{\graphof}[1]{\llbracket#1\rrbracket}
\newcommand{\frameof}[1]{#1^\star}
\newcommand{\sublist}[1]{[#1]}
\newcommand{\subst}[2]{#1/#2}
\newcommand{\ssubst}[2]{\sublist{\subst{#1}{#2}}}
\newcommand{\substof}[1]{\sigma_{\!#1}}
\newcommand{\rsubstof}[1]{\rho_{#1}}
\newcommand{\dsubstof}[1]{\delta_{#1}}
\newcommand{\linkingof}[1]{\sim_{#1}}
\newcommand{\linking}{\sim}
\newcommand{\mapof}[1]{\lfloor{#1}\rfloor}
\newcommand{\labelof}[1]{\ell(#1)}
\newcommand{\form}[1]{\bigvee\mkern-1mu(\mkern-2mu #1\mkern-2mu)}
\renewcommand{\phi}{\varphi}
\newcommand{\quand}{\quad\mbox{and}\quad}
\newcommand{\qquand}{\qquad\mbox{and}\qquad}
\newcommand{\qquor}{\qquad\mbox{or}\qquad}
\newcommand\vctr[1]{\vcenter{\hbox{$\m@th{#1}$}}}
\newcommand\skewlifting[1]{\widetilde{#1}}
\newcommand\vertexa{w}
\newcommand\fib{\phi}
\renewcommand{\implies}{\Rightarrow}
\newcommand\tightwedgeveeshrink{\mkern-2mu}
\newcommand\tightvee{\mathbin{\tightwedgeveeshrink\vee\tightwedgeveeshrink}}
\newcommand\tightwedge{\mathbin{\tightwedgeveeshrink\wedge\tightwedgeveeshrink}}
\newcommand\shortimplies{\mathop{\mkern-0mu\implies\mkern-1mu}}
\newcommand\tightimplies{\shortimplies}
\newcommand\tightplus{\mathbin+}
\newcommand\tighttimes{\mathbin\times}
\newcommand\graphunion{\tightplus}
\newcommand\graphjoin{\tighttimes}
\newcommand\p{p}\newcommand\pp{\dual p}
\newcommand\q{q}\newcommand\qq{\dual q}
\newcommand\px{\p x}
\newcommand\py{\p y}
\newcommand\ppx{\pp x}
\newcommand\pa{\p a}
\newcommand\pb{\p b}
\newcommand\ppa{\pp a}
\newcommand\ppb{\pp b}
\newcommand\pz{\p z}
\newcommand\axpx{\forall x\mkern2.2mu\px}
\newcommand\aypy{\forall y \mkern3mu \py}
\newcommand\qab{\q a b}
\newcommand\qba{\q b a}
\newcommand\qqab{\qq a b}
\newcommand\qqba{\qq b a}
\newcommand\qqy{\qq y}
\newcommand\qxy{\q x y}
\newcommand\fx{f\mkern-1.5mu x}
\newcommand\fy{f\mkern-1.5mu y}
\newcommand\fz{f\mkern-1.5mu z}
\newcommand\pfx{\p\mkern-1mu \fx}
\newcommand\qfz{\q \fz}
\newcommand\pfy{\p\mkern-2mu \fy}
\newcommand\ffy{f\mkern-4mu \fy}
\newcommand\pffy{\p\mkern-2mu \ffy}
\newcommand\ppffy{\pp\mkern-2mu \ffy}
\newcommand\allx{\forall x}
\newcommand\ally{\forall y}
\newcommand\existsx{\exists x}
\newcommand\likex[1]{\raisebox{0ex}[1ex][0ex]{$#1$}}
\newlength\fheight\settoheight\fheight{$f$}
\newlength\pbarheight\settoheight\pbarheight{$\pp$}
\newcommand\likef[1]{\raisebox{0ex}[\the\fheight][0ex]{$#1$}}
\newcommand\likepbar[1]{\raisebox{0ex}[\the\pbarheight][0ex]{$#1$}}
\colorlet{dblue}{black!15!blue}
\colorlet{lblue}{blue!8}
\colorlet{dred}{black!55!red}
\colorlet{lred}{red!40}
\colorlet{dgreen}{black!65!green}
\colorlet{lgreen}{green!30}
\newcommand\roundvxwidth{3.6pt}
\tikzstyle{all vertices} = [inner sep=0pt, outer sep=0pt, draw]
\tikzstyle{round vx}     = [all vertices, circle, minimum width=\roundvxwidth, semithick]
\newcommand\squarewidth{4.8pt}
\tikzstyle{square vx}    = [all vertices, regular polygon, regular polygon sides = 4, minimum width=\squarewidth, minimum height=\squarewidth, semithick]
\tikzstyle{diamond vx}   = [square vx, shape border rotate = 45]
\tikzstyle{black vx}     = [round vx, draw=black, fill=black]
\tikzstyle{blue vx}      = [round vx, draw=dblue, fill=lblue]
\tikzstyle{red vx}       = [square vx, draw=dred, fill=lred]
\tikzstyle{green vx}     = [diamond vx, draw=dgreen, fill=lgreen]
\newcommand\vertextoitslabelgap{1.4pt}
\newcommand\edgepairvxgap{1.8ex}
\tikzset{
  graphlabel/.style={
	text height=1.4ex,
	text depth =.5ex,
	inner sep=0pt,
	outer sep=0pt,
        anchor=base
  }
}
\newcommand\vx[2]{
\node[black vx] ({#2}) at ({#1}) {};}
\newcommand\bluevx[2]{
\node[blue vx] ({#2}) at ({#1}) {};}
\newcommand\redvx[2]{
\node[red vx] ({#2}) at ({#1}) {};}
\newcommand\greenvx[2]{
\node[green vx] ({#2}) at ({#1}) {};}
\newcommand\collvxsep[6]{
\node[#6, label={[outer sep=0pt,inner sep=0,label distance={#5}]{#4}:${#3}$}] ({#2}) at ({#1}) {};}
\newcommand\lvxsep[5]{
\collvxsep{#1}{#2}{#3}{#4}{#5}{black vx}}
\newcommand\redlvxsep[5]{
\collvxsep{#1}{#2}{#3}{#4}{#5}{red vx}}
\newcommand\bluelvxsep[5]{
\collvxsep{#1}{#2}{#3}{#4}{#5}{blue vx}}
\newcommand\lvxgap{3pt}
\newcommand\lvx[4]{
\lvxsep{#1}{#2}{#3}{#4}{\lvxgap}}
\newcommand\redlvx[4]{
\redlvxsep{#1}{#2}{#3}{#4}{\lvxgap}}
\newcommand\lvxd[3]{
\lvxsep{#1}{#2}{#3}{-90}{\lvxgap}}
\newcommand\dummystart{\node (start) [graphlabel, inner sep=0, outer sep=0] {};}
\newcommand\vxsep[5][black]{
  \node (#2) [#1 vx, #5=#4 of #3] {};}
\newcommand\vxrightsep[4][black]{
  \vxsep[#1]{#2}{#3}{#4}{right}}
\newcommand\vxright[3][black]{
  \vxrightsep[#1]{#2}{#3}{\vertextoitslabelgap}}
\newcommand\labelright[4][\vertextoitslabelgap]{
  \node (#2) [graphlabel, right=#1 of #4] {$#3$};}
\newcommand\labelsolo[2]{
\dummystart\labelright[-1pt]{#1}{#2}{start}}
\newcommand\fullynamedcolouredpair[8]{
\labelsolo {#1} {#2}
\vxright[#4] {#3} {#1}
\vxrightsep[#6] {#5} {#3} {\edgepairvxgap}
\labelright {#7} {#8} {#5}
\e {#3} {#5}
}
\newcommand\fullynamedpair[6]{
  \fullynamedcolouredpair{#1}{#2}{#3}{black}{#4}{black}{#5}{#6}}
\newcommand\namedpair[4]{
  \fullynamedpair{leftlabel}{#1}{#2}{#3}{rightlabel}{#4}}
\newcommand\edgepair[2]{
  \namedpair{#1}{leftvx}{rightvx}{#2}}
\newcommand\edgepairpic[2]{
\begin{tikzpicture}[inlinegraph]\edgepair{#1}{#2}\end{tikzpicture}}
\newcommand\defaultarrowstyle{stealth'}
\tikzset{>=\defaultarrowstyle}
\newcommand\linewd{.5pt}
\newcommand\fibrewidth{1pt}
\newcommand\leapwidth{.7pt}
\newcommand\dualitywidth{.5pt}
\tikzset{%
  stretch dash/.code args={on #1 off #2}{%
    \tikz@addoption{%
      \pgfgetpath\currentpath%
      \pgfprocessround{\currentpath}{\currentpath}%
      \pgf@decorate@parsesoftpath{\currentpath}{\currentpath}%
      \pgfmathparse{max(round((\pgf@decorate@totalpathlength-#1)/(#1+#2)),0)}%
      \let\npattern=\pgfmathresult%
      \pgfmathparse{\pgf@decorate@totalpathlength/(\npattern*(#1+#2)+#1)}%
      \let\spattern=\pgfmathresult%
      \pgfsetdash{{\spattern*#1}{\spattern*#2}}{0pt}%
    }%
  }%
}
\newcommand\e[2]{\draw ({#1})--({#2});}
\newcommand\eseps[4]{\draw[{_[sep=#3]}-{_[sep=#4]}] (#1) -- (#2);}
\newcommand\fe[2]{\draw ([yshift=-2.7pt]#1.south)--([yshift=2.7pt]#2.north);} 
\newcommand\deseps[5][\defaultarrowstyle]{\draw[{_[sep=#4]}-{>[sep=#5]},>=#1] (#2) -- (#3);}
\newcommand\de[2]{\deseps{#1}{#2}{2.4pt}{2.3pt}}
\def\edgelen{.8}
\MULTIPLY{\edgelen}{\equitriangleheightmultiplier}{\height}
\DIVIDE{\edgelen}{2}{\halfedgelen}
\DIVIDE{\halfedgelen}{2}{\quarteredgelen}
\DIVIDE{\quarteredgelen}{2}{\eighthedgelen}
\DIVIDE{\height}{2}{\halfheight}
\MULTIPLY{\edgelen}{2}{\twoedgelen}
\MULTIPLY{\edgelen}{3}{\threeedgelen}
\tikzstyle{tree} = [graph, level distance=4ex, outer sep=0pt, inner sep=0, 
\tikzstyle{leaf} = [black vx, outer sep=2pt]
\newcommand\gr[1]{\vctr{\begin{tikzpicture}[graph]#1\end{tikzpicture}}}
\newcommand\tr[1]{\vctr{\begin{tikzpicture}[tree,
  level 1/.style={sibling distance=5ex},
  level 2/.style={sibling distance=5ex},
  level 3/.style={sibling distance=5ex}
]#1\end{tikzpicture}}}
\newcommand\namedleaflab[2] {  node[leaf, label={[label distance=1pt]{-90}:${#1}$}] (#2) {} } 
\newcommand\leaflab[1] {\namedleaflab{#1}{leaf}}
\newcommand\joinroot{\node[circle] {$\graphjoin$}}
\newcommand\unionnode{node[circle] {$\graphunion$}}
\tikzstyle{graph}        = [line width=\linewd]
\tikzstyle{fibres}       = 
\tikzstyle{leap}         = [every path/.style={graph, line width=\leapwidth, stretch dash=on 5pt off 3pt}]
\tikzstyle{duality}      = 
\tikzstyle{binding}      = [every path/.style={graph, line width=\dualitywidth, color=bindingcolour}]
\tikzstyle{semifibres}   = [graph]
\tikzstyle{inlinegraph} = [graph,baseline]
\newcommand\fibheight{1.2}
\newcommand\peircefibheight{1.7}
\newcommand\fibheighttall{1.5}
\def\lo{.13}%
\def\hi{.33}%
\def\vi{.53}
\tikzset{inlinecp/.style={graph,auto,inner sep=0pt,outer sep=0pt,node distance = 0pt}}
\tikzset{
  token/.style={
	text height=1.9ex,
	text depth =.5ex,
	inner sep=0pt,
	outer sep=0pt
  }
}
\newcommand\fibration[1]{
\begin{math}\begin{scope}[inlinecp,start chain=going right]#1\end{scope}\end{math}}
\newcommand\fibbase[2]{
\node[on chain,token] (#1) {${}#2{}$};}
\renewcommand\over[3]{\node[#1,above = #2 of \currentnode, anchor=center] (#3) {};}
\newcommand\fibre[3]{
\fibbase{#1}{#2}\def\currentnode{#1}#3}
\newcommand\symb[1]{\fibbase{symb}{#1}}
\newcommand\drinkerformula{\exists x\mkern1mu(\mkern1mu px\mkern-1mu\implies\mkern-1mu \aypy)}
\newcommand\veedrinkerformula{\exists x.(\cneg{p}x \vlor (\forall y.py))}
\newcommand\variantveedrinkerformula{\exists x\mkern2mu\forall y\mkern2mu(\py\mkern-1mu\vee\mkern-1mu\ppx)}
\newcommand\drinkergraph{\gD}
\newcommand\drinkerx{-1.1}
\newcommand\drinkerxx{-.4}
\newcommand\drinkerxxx{.4}
\newcommand\drinkerxxxx{1.15}
\newcommand\drinkerdown{.235}
\newcommand\drinkerdowndown{.37}
\newcommand\drinkercovergap{.445}
\newcommand\drinkercoverfirstdown{.3}
\newcommand\drinkercoverseconddown{.21}
\newcommand\drinkerbasevertices{
\lvxd{\drinkerx,0}{x}{x}
\lvxd{\drinkerxx,-\drinkerdowndown}{ppx}{\ppx}
\lvxd{\drinkerxxx,-\drinkerdown}{y}{y}
\lvxd{\drinkerxxxx,0}{py}{\py}
}
\newcommand\drinkerbaseedges{
\begin{scope}[graph]
\e x y
\e x {py}
\e x {ppx}
\end{scope}
}
\newcommand\drinkerbase{
\drinkerbasevertices
\drinkerbaseedges
}
\newcommand\drinkercoververticescoloured{
\begin{scope}[shift={(0,\drinkercovergap)}]
    \vx{\drinkerx,0}{xa}
    \vx{\drinkerxxx,-\drinkercoverseconddown}{ya}
    \bluevx{\drinkerxxxx,0}{pya}
  \end{scope}
  \vx{\drinkerx,0}{xb}
  \bluevx{\drinkerxx,-\drinkercoverfirstdown}{ppxb}
}
\newcommand\drinkercoveredges{
\begin{scope}[graph]
  \e{xb}{ppxb}
  \e{xa}{pya}
  \e{xa}{ya}
\end{scope}}
\newcommand\drinkercovercoloured{
  \drinkercoververticescoloured
  \drinkercoveredges
}
\newcommand\drinkerfibres{
\begin{scope}[fibres]
  \fe{xa}{xb}
  \fe{xb}{x}
  \fe{ppxb}{ppx}
  \fe{pya}{py}
  \fe{ya}{y}
\end{scope}
}
\newcommand\drinkerfibcoloured{
\begin{scope}[shift={(0,\fibheighttall)}]
  \drinkercovercoloured
\end{scope}
\drinkerbase
\drinkerfibres
}
\newcommand\drinkerfibcolouredpic{
  \begin{tikzpicture}
    \drinkerfibcoloured
  \end{tikzpicture}
}
\newcommand\pfyxone{-1.2}
\newcommand\pfyxtwo{-.6}
\newcommand\pfyxthree{0}
\newcommand\pfyxfour{.6}
\newcommand\pfyxfive{1.2}
\newcommand\pfycoverradius{.23}
\newcommand\pfycp{
  \begin{scope}[shift={(0,\fibheight)}]
    \begin{scope}[shift={(0,-\pfycoverradius)}]
      \vx{\pfyxone,0}{xa}
      \bluevx{\pfyxtwo,0}{ppxa}
    \end{scope}
    \begin{scope}[shift={(0,\pfycoverradius)}]
      \vx{\pfyxone,0}{xb}
      \redvx{\pfyxtwo,0}{ppxb}
    \end{scope}
    \vx{\pfyxthree,0}{yc}
    \bluevx{\pfyxfour,0}{pyc}
    \redvx{\pfyxfive,0}{pfyc}
  \end{scope}
  \lvxd{\pfyxone,0} x {\likepbar x}
  \lvxd{\pfyxtwo,0}{ppx}{\likepbar\ppx}
  \lvxd{\pfyxthree,0}{y}{\likepbar y}
  \lvxd{\pfyxfour,0}{py}{\likepbar\py}
  \lvxd{\pfyxfive,0}{pfy}{\mkern14mu\likepbar\pfy}
  \e {xa} {ppxa}
  \e {xb} {ppxb}
  \e {pyc} {pfyc}
  \e {x} {ppx}
  \e {py} {pfy}
  \begin{scope}[fibres]
    \fe{xb}{xa}
    \fe{xa}{x}
    \fe{ppxb}{ppxa}
    \fe{ppxa}{ppx}
    \fe{yc}{y}
    \fe{pfyc}{pfy}
    \fe{pyc}{py}
  \end{scope}
}
\newcommand\pabxone{-1.2}
\newcommand\pabxtwo{-.5}
\newcommand\pabxthree{.2}
\newcommand\pabxfour{.7}
\newcommand\pabxfive{1.2}
\newcommand\pabcoverradius{.3}
\newcommand\pabnudge{-.22}
\newcommand\pabcp{
  \begin{scope}[shift={(0,\fibheight)}]
    \bluevx{\pabxone,0}{pabm}
    \redvx{\pabxtwo,0}{pcdm}
    \begin{scope}[shift={(0,\pabcoverradius)}]
      \vx{\pabxthree,0}{xu}
      \vx{\pabxfour,\pabnudge}{yu}
      \bluevx{\pabxfive,0}{pxyu}
    \end{scope}
    \begin{scope}[shift={(0,-\pabcoverradius)}]     
      \vx{\pabxthree,0}{xl}
      \vx{\pabxfour,\pabnudge}{yl}
      \redvx{\pabxfive,0}{pxyl}
    \end{scope}
  \end{scope}
  \lvxd{\pabxone,0}{pab}{\hspace{-1ex}\likepbar\qqab}
  \lvxd{\pabxtwo,0}{pcd}{\likepbar\qqba\hspace{-.5ex}}
  \lvxd{\pabxthree,0}{x}{\likepbar x} 
  \lvxd{\pabxfour,\pabnudge}{y}{y}
  \lvxd{\pabxfive,0}{pxy}{\likepbar\qxy\hspace{-1ex}}
  \e {pabm} {pcdm}
  \e {xu} {yu}
  \e {xu} {pxyu}
  \e {yu} {pxyu}
  \e {xl} {yl}
  \e {xl} {pxyl}
  \e {yl} {pxyl}
  \e {pab} {pcd}
  \e {x} {y}
  \e {x} {pxy}
  \e {y} {pxy}
  \begin{scope}[fibres]
    \fe{pabm}{pab}
    \fe{pcdm}{pcd}
    \fe{xu}{xl}
    \fe{xl}{x}
    \fe{yu}{yl}
    \fe{yl}{y}
    \fe{pxyu}{pxyl}
    \fe{pxyl}{pxy}
  \end{scope}
}
\newcommand\pabcpinline{\fibration{
  \def\hi{0.43}
  \fibre{qabbase}{\qab}{
    \over{blue vx}{\lo}{pab}
  }
  \symb{\vee}
  \fibre{qbabase}{\qba}{
    \over{red vx}{\lo}{pcd}
  }
  \symb{\mkern5mu\implies\mkern4mu}
  \fibre{exbase}{\exists x}{
    \over{black vx}{\lo}{xl}
    \over{black vx}{\hi}{xu}
  }
  \symb{\mkern4mu}
  \fibre{eybase}{\exists y}{
    \over{black vx}{\lo}{yl}
    \over{black vx}{\hi}{yu}
  }
  \symb{\mkern3mu}
  \fibre{qxybase}{\qxy}{
    \over{red vx}{\lo}{pxyl}
    \over{blue vx}{\hi}{pxyu}
  }
  \e {pab} {pcd}
  \e {xl} {yl}
  \e {yl} {pxyl}
  \e {xu} {yu}
  \e {yu} {pxyu}
  \draw (xl) to[out=25,in=160,looseness=.9](pxyl);
  \draw (xu) to[out=25,in=160,looseness=.9](pxyu);
}}
\newcommand\onionxone{.3}
\newcommand\onionxtwo{.85}
\newcommand\onionxthree{1.4}
\newcommand\onionxfour{2.1}
\newcommand\onionxfive{2.8}
\newcommand\onionxsix{3.5}
\newcommand\onioncoverradius{.3}
\newcommand\onionnudge{-.22}
\newcommand\onioncp{
  \begin{scope}[shift={(0,\fibheight)}]
    \begin{scope}[shift={(0,-\onioncoverradius)}]
      \vx{\onionxone,0}{xa}
      \redvx{\onionxtwo,\onionnudge}{pfxa}
      \greenvx{\onionxthree,0}{ppxa}
    \end{scope}
    \begin{scope}[shift={(0,\onioncoverradius)}]
      \vx{\onionxone,0}{xb}
      \greenvx{\onionxtwo,\onionnudge}{pfxb}
      \bluevx{\onionxthree,0}{ppxb}
    \end{scope}
    \vx{\onionxfour,0}{yc}
    \redvx{\onionxfive,0}{ppffyc}
    \bluevx{\onionxsix,0}{pyc}
  \end{scope}
  \lvxd{\onionxone,0} x {\likepbar x\hspace{1ex}}
  \lvxd{\onionxtwo,\onionnudge}{pfx}{\pfx}
  \lvxd{\onionxthree,0}{ppx}{\hspace{1.5ex}\ppx}
  \lvxd{\onionxfour,0} y {\likepbar y}
  \lvxd{\onionxfive,0}{ppffy}{\likepbar\ppffy\hspace{.2ex}}
  \lvxd{\onionxsix,0}{py}{\hspace{.5ex}\likepbar\py}
  \e {xa} {pfxa}
  \e {xa} {ppxa}
  \e {pfxa} {ppxa}
  \e {xb} {pfxb}
  \e {xb} {ppxb}
  \e {pfxb} {ppxb}
  \e {x} {pfx}
  \e {x} {ppx}
  \e {pfx} {ppx}
  \begin{scope}[fibres]
    \fe{xb}{xa}
    \fe {xa}{x}
    \fe {pfxb}{pfxa}
    \fe {pfxa}{pfx}
    \fe {ppxb}{ppxa}
    \fe {ppxa}{ppx}
    \fe {yc}{y}
    \fe {ppffyc}{ppffy}
    \fe {pyc}{py}
  \end{scope}
}
\newcommand\onioncpinline{\fibration{
  \def\hi{.43}
  \symb{\left(\strut\mkern-5mu\right.}
  \fibre{axbase}{\allx}{
    \over{black vx}{\lo}{xl}
    \over{black vx}{\hi}{xu}
  }
  \symb{(}
  \fibre{pfxbase}{\pfx}{
    \over{red vx}{\lo}{pfxl}
    \over{green vx}{\hi}{pfxu}
  }
  \symb{\mkern-2mu\implies\mkern-2mu}
  \fibre{pxbase}{\px}{
    \over{green vx}{\lo}{pxl}
    \over{blue vx}{\hi}{pxu}
  }
  \symb{)\left.\mkern-5mu\strut\right)\mkern3mu\implies\mkern3mu}
  \fibre{aybase}{\ally}{
    \over{black vx}{\lo}{y}
  }
  \symb{\mkern2mu(\mkern1mu}
  \fibre{pffybase}{\pffy}{
    \over{red vx}{\lo}{pffy}
  }
  \symb{\mkern-1mu\implies\mkern-1mu}
  \fibre{pybase}{\py}{
    \over{blue vx}{\lo}{py}
  }
  \symb{\mkern1mu)}
  \e {xl} {pfxl}
  \e {pfxl} {pxl}
  \e {xu} {pfxu}
  \e {pfxu} {pxu}
  \draw (xu) to[out=23,in=170,looseness=.9](pxu);
  \draw (xl) to[out=23,in=170,looseness=.9](pxl);
}}
\newcommand\eabxone{-.8}
\newcommand\eabxtwo{-.3}
\newcommand\eabxthree{.3}
\newcommand\eabxfour{.825}
\newcommand\eabcoverradius{.22}
\newcommand\eaby{-.2}
\newcommand\eabyy{-.35}
\newcommand\eabfibheightboost{.12}
\newcommand\eabcp{
  \begin{scope}[shift={(0,\fibheight)}]
    \begin{scope}[shift={(0,\eabfibheightboost)}]
      \begin{scope}[shift={(0,-\eabcoverradius)}]
        \vx{\eabxone,0}{xl}
        \bluevx{\eabxtwo,\eabyy}{ppal}
        \redvx{\eabxthree,\eaby}{ppbl}
        \redvx{\eabxfour,0}{pxm}
      \end{scope}
      \begin{scope}[shift={(0,\eabcoverradius)}]
        \vx{\eabxone,0}{xu}
        \bluevx{\eabxfour,0}{pxu}
      \end{scope}
    \end{scope}
  \end{scope}
  \lvxd{\eabxone,0} x {x\hspace{1.3ex}}
  \lvxd{\eabxtwo,\eabyy}{ppa}{\likepbar\ppa\hspace{.2ex}}
  \lvxd{\eabxthree,\eaby}{ppb}{\hspace{.9ex}\likepbar\ppb}
  \lvxd{\eabxfour,0}{px}{\px\hspace{-1.5ex}}
  \e {xl} {ppal}
  \e {xl} {ppbl}
  \e {ppal} {ppbl}
  \e {xl} {pxm}
  \e {xu} {pxu}
  \e {x} {ppa}
  \e {x} {ppb}
  \e {ppa} {ppb}
  \e {x} {px}
  \begin{scope}[fibres]
    \fe {xu}{xl}
    \fe {xl}{x}
    \fe {ppal}{ppa}
    \fe {ppbl}{ppb}
    \fe {pxu}{pxm}
    \fe {pxm}{px}
  \end{scope}
}
\newcommand\pfyformula{(\axpx) \mkern1mu \implies \mkern1mu \forall y\mkern3mu (\mkern1mu\py\tightwedge \pfy\mkern1mu)}
\newcommand\eabformula{\exists x\mkern2mu (\mkern1mu\pa\tightvee\mkern1.2mu\pb\mkern-1mu\implies\mkern-1mu px\mkern1mu)}
\newcommand\onionformula{\left(\strut\forall x\mkern1mu  (\mkern1mu \pfx\tightimplies \px)\right)\mkern2mu \implies\mkern2mu  \forall y\mkern2mu  (\mkern1mu \pffy\tightimplies \py\mkern1mu )}
\newcommand\pabformula{\qab\vee \qba\mkern4mu \implies \mkern4mu \exists x\mkern4mu \exists y\mkern3mu \qxy}
\newcommand\cponformula[3]{
  \begin{scope}[graph,shift={(#2,1.3)}]{#1}\end{scope}
  \node[anchor=base] (formula) at (0,0){$#3$};
}
\newcommand\eabcponformula{\cponformula{\eabcp}{-.1}{\eabformula}}
\newcommand\pfycponformula{\cponformula{\pfycp}{-.15}{\pfyformula}}
\newcommand\pabcponformula{\cponformula{\pabcp}{0}{\pabformula}}
\newcommand\onioncponformula{\cponformula{\onioncp}{-2}{\onionformula}}
\newcommand\cponeonformula{\eabcponformula}
\newcommand\cptwoonformula{\pfycponformula}
\newcommand\cpthreeonformula{\pabcponformula}
\newcommand\cpfouronformula{\onioncponformula}
\newcommand\eabcpinline{\fibration{
  \def\hi{.39}
  \def\vi{.66}
  \fibre{exbase}{\existsx}{
    \over{black vx}{\lo}{x}
    \over{black vx}{\hi}{xm}
    \over{black vx}{\vi}{xu}
  }
  \symb{\mkern2mu(\mkern2mu}
  \fibre{pabase}{\pa}{
    \over{blue vx}{\lo}{pa}
  }
  \symb{\tightvee\mkern1.2mu}
  \fibre{pybase}{\pb}{
    \over{red vx}{\lo}{pb}
  }
  \symb{\mkern-1mu\implies\mkern-1mu}
  \fibre{pxbase}{\px}{
    \over{red vx}{\lo}{px}
    \over{blue vx}{\hi}{pxm}
  }
  \symb{\mkern2mu)}
  \e {x}{pa}
  \e {pa}{pb}
  \draw (x) to[out=20,in=165,looseness=.85] (pb);
  \draw (xm) to[out=20,in=163,looseness=.8] (px);
  \draw (xu) to[out=20,in=163,looseness=.8] (pxm);
}}
\newcommand\pfycpinline{\fibration{
  \symb{(}
  \fibre{axbase}{\allx}{
     \over{black vx}{\lo}{xl}
     \over{black vx}{\hi}{xu}
  }
  \symb{\mkern2mu}
  \fibre{pxbase}{\px}{
    \over{blue vx}{\lo}{pxl}
    \over{red vx}{\hi}{pxu}
  }
  \symb{\mkern2mu)\mkern4mu\implies\mkern4mu}
  \fibre{aybase}{\ally}{
    \over{black vx}{\lo}{y}
  }
  \symb{\mkern2mu(\mkern1mu}
  \fibre{pybase}{\py}{
    \over{blue vx}{\lo}{py}
  }
  \symb{\mkern-2mu\wedge\mkern-2mu}
  \fibre{pfybase}{\pfy}{
    \over{red vx}{\lo}{pfy}
  }
  \symb{\mkern1mu)}
  \draw (xu) -- (pxu);
  \draw (xl) -- (pxl);
  \draw (py) -- (pfy);
}}
\newcommand\cponeinline{\eabcpinline}
\newcommand\cptwoinline{\pfycpinline}
\newcommand\cpthreeinline{\pabcpinline}
\newcommand\cpfourinline{\onioncpinline}
\newcommand\drinkersquare{
\lvx{-\halfedgelen,\halfedgelen}{x}{x}{135}
\lvx{\halfedgelen,\halfedgelen}{px}{\mkern.5mu\likex\ppx}{45}
\lvx{-\halfedgelen,-\halfedgelen}{y}{\mkern.5mu\likex y}{-135}
\lvx{\halfedgelen,-\halfedgelen}{py}{\py}{-45}
}
\newcommand\drinkersquareedges{
  \e x y
  \e x {px}
  \e x {py}
}
\newcommand\drinkersquarewithedges{\drinkersquare\drinkersquareedges}
\newcommand\drinkergraphpic{\gr{\drinkersquarewithedges}}
\newcommand\drinkercotreepic{\tr{\tikzset{level 1/.style={sibling distance=8ex},level 2/.style={sibling distance=4ex}}
  \joinroot
    child { \leaflab x }
    child { 
      \unionnode
      child { \leaflab y }
      child { \leaflab \ppx }
      child { \leaflab \py }
    }
  ;
}}
\newcommand\drinkerbindinggraphpic{\gr{\drinkersquare \de x {px} \de y {py}}}
\newcommand\liftingdiagrams{%
\begin{center}\begin{tikzpicture}[graph,outer sep=1.5pt,inner sep=0pt]\begin{math}
  \newcommand\rad{.8}
  \newcommand\vht{1.55}
  \newcommand\shortrad{.4}
  \newcommand\commonsquarepart{
    \node (liftw) at (-\rad,\vht) {$\skewlifting\vertexa$};
    \node[left=-1pt of liftw,outer sep=0,inner sep=0] (uniqueliftw) {$\exists\mkern1mu!$};
    \node (w) at (-\rad,0) {$\vertexa$};
    \node (v) at (\rad,1.55) {$v$};
    \node (fv) at (\rad,0) {$\fib(v)$};
    \begin{scope}[fibres]
      \fe{liftw}{w}
      \fe{v}{fv}
    \end{scope}
  }
  \begin{scope}[shift={(-3,0)}]    
    \commonsquarepart
    \eseps {w} {fv} {1.7pt} {1.7pt}
    \eseps {liftw} {v} {1.7pt} {1.7pt}
  \end{scope}
  \begin{scope}
    \node (v) at (\rad,1.55) {$v$};
    \node (fv) at (\rad,0) {$\fib(v)$};
    \node (fliftw) at (-\rad,.3){$\fib(\skewlifting w)$};
    \node (liftw) at (-\rad,1.8){$\skewlifting\vertexa$};
    \node[left=-1pt of liftw,outer sep=0,inner sep=0] (uniqueliftw) {$\exists$};
    \node (w) at (-\shortrad,-.35) {$w$};
    \eseps {w} {fv} {1.7pt} {1.7pt}
    \eseps {liftw} {v} {1.7pt} {1.7pt}
    \eseps {fliftw} {fv} {.3pt} {1.7pt}
  \end{scope}
  \begin{scope}[fibres]
    \fe{liftw}{fliftw}
    \fe{v}{fv}
  \end{scope}
  \begin{scope}[shift={(3,0)}]    
    \commonsquarepart
    \deseps {w} {fv} {1.8pt} {1.5pt}
    \deseps {liftw} {v} {1.8pt} {1.5pt}
  \end{scope}
\end{math}\end{tikzpicture}\end{center}}
\tikzset{
    position/.style args={#1:#2 from #3}{
        at=(#3.#1), anchor=#1+180, shift=(#1:#2)
    }
}
\newcommand\labsep[4]{
\node[outer sep=0,inner sep=0,position=#3:#4 from #1] {\ensuremath{#2}};}
\newcommand\coverylabelangle{-35}
\newcommand\coverppxlabelangle{0}
\newcommand\coverppxlabeldist{2pt}
\newcommand\addalldrinkercoverlabels[3]{
  \labsep{xa}{#1}{170}{2pt}
  \labsep{xb}{#2}{-170}{2pt}
  \labsep{ya}{\likex #3}{\coverylabelangle}{2pt}
  \labsep{ppxb}{\likex{\mkern-1mu\pp\mkern-.6mu #2}}{\coverppxlabelangle}{\coverppxlabeldist}
  \labsep{pya}{\likex{p #3}}{0}{2pt}
}
\newcommand\adddrinkercoverlabels[2]{\addalldrinkercoverlabels{#1}{#2}{y}}
\newcommand\drinkercoververtices{
  \begin{scope}[shift={(0,\drinkercovergap)}]
    \vx{\drinkerx,0}{xa}
    \vx{\drinkerxxx,-\drinkercoverseconddown}{ya}
    \vx{\drinkerxxxx,0}{pya}
  \end{scope}
  \vx{\drinkerx,0}{xb}
  \vx{\drinkerxx,-\drinkercoverfirstdown}{ppxb}
}
\newcommand\drinkercover{
  \drinkercoververtices
  \drinkercoveredges
}
\newcommand\drinkerfib{
\begin{scope}[shift={(0,\fibheight)}]
  \drinkercover
\end{scope}
\drinkerbase
\drinkerfibres
}
\newcommand\drinkerfiblabelledpair[2]{
  \drinkerfib
  \adddrinkercoverlabels{#1}{#2}
}
\newcommand\drinkerfibvertices{
  \drinkerbasevertices
  \begin{scope}[shift={(0,\fibheight)}]\drinkercoververtices\end{scope}
}
\newcommand\drinkerfibverticesandfibres{
  \drinkerfibvertices
  \drinkerfibres
}
\newcommand\drinkerbindingfiblabelled[2]{
  \drinkerfibverticesandfibres
  \adddrinkercoverlabels{#1}{#2}
  \de{ya}{pya}
  \de{xb}{ppxb}
  \de{y}{py}
  \de{x}{ppx}
}
\newcommand\twolinkassignment{\sublist{\subst x z,\subst y{fz}}}
\newcommand\twolinkfographvertices{
  \begin{scope}[shift={(-\halfedgelen,0)}]
    \lvx{-\edgelen,\halfedgelen}{x}{x}{180}
    \redlvx{0,\halfedgelen}{ppx}{\ppx}{90}
    \lvx{-\edgelen,-\halfedgelen}{y}{y}{180}
    \bluelvxsep{0,-\halfedgelen}{qqy}{\likex\qqy}{-90}{5pt}
  \end{scope}
  \begin{scope}[shift={(\halfedgelen,0)}]
    \redlvx{0,\halfedgelen}{pz}{\pz}{90}
    \bluelvxsep{0,-\halfedgelen}{qfz}{\likex{\qfz}}{-90}{5pt}
  \end{scope}
  \begin{scope}[shift={(\edgelen,0)}]
     \lvx{\halfedgelen,0}zz0
  \end{scope}
}
\newcommand\twolinkfographedges{
  \e x {ppx}
  \e y {qqy}
  \e {pz} {qfz}
}
\newcommand\twolinkfograph{\twolinkfographvertices\twolinkfographedges}
\newcommand\twolinkleapgraphedges{
  \begin{scope}[leap]
    \draw (ppx) to[out=-30,in=-150] (pz);
    \draw (qqy) to[out=30,in=150] (qfz);
    \draw (x) to[out=65,in=95,looseness=1.5] (z);
    \draw (y) to[out=-65,in=-95,looseness=1.5] (z);
  \end{scope}
}
\newcommand\twolinkleapgraph{
  \twolinkfographvertices
  \twolinkleapgraphedges
}
\newcommand\peirceformula{((\cneg{p} \vlor q) \vlan \cneg{p}) \vlor p}
\newcommand\peirceimpliesformula{\left(\rule{0ex}{1.7ex}\mkern-1mu(p\mkern-1mu\tightimplies\mkern-2mu q)\mkern-1mu\tightimplies p\right)\mkern-2mu\tightimplies\mkern-.3mu p}
\newcommand\peircexone{-.9}
\newcommand\peircextwo{-.3}
\newcommand\peircexthree{.35}
\newcommand\peircexfour{1.05}
\newcommand\peircecoverradius{.22}
\newcommand\peircey{-.14}
\newcommand\peircecp{
  \begin{scope}[shift={(0,\peircefibheight)}]
    \begin{scope}
      \begin{scope}
        \redvx{\peircexone,0}{cppl}
        \bluevx{\peircexthree,0}{cppr}
      \end{scope}
      \begin{scope}[shift={(0,-\peircecoverradius)}]
        \bluevx{\peircexfour,0}{cpd}
      \end{scope}
      \begin{scope}[shift={(0,\peircecoverradius)}]
        \redvx{\peircexfour,0}{cpu}
      \end{scope}
    \end{scope}
  \end{scope}
  \lvxd{\peircexone,0} {ppl} {\pp}
  \lvxd{\peircextwo,\peircey}{q}{q}
  \lvxd{\peircexthree,0}{ppr}{\pp}
  \lvxd{\peircexfour,0}{p}{p}
  \e {cppl} {cppr}
  \e {ppl} {ppr}
  \e {q} {ppr}
  \begin{scope}[fibres]
    \fe {cppl}{ppl}
    \fe {cppr}{ppr}
    \fe {cpu}{cpd}
    \fe {cpd}{p}
  \end{scope}
}
\newcommand\peircecppic{
  \begin{tikzpicture}
    \peircecp
  \end{tikzpicture}
}
\begin{document}
\vlnosmallleftlabels%
\bstctlcite{refs:BSTcontrol}%
%
\title{Combinatorial Proofs and Decomposition Theorems for First-order Logic}


\author{
  \IEEEauthorblockN{Dominic J.\ D.\ Hughes}
  \IEEEauthorblockA{%
	Logic Group\\
	U.C.\ Berkeley\\
	USA}
\and
\IEEEauthorblockN{Lutz Straßburger}
\IEEEauthorblockA{Inria, Equipe Partout\\
Ecole Polytechnique, LIX \\ 
France}
\and
\IEEEauthorblockN{Jui-Hsuan Wu}
\IEEEauthorblockA{Ecole Normale Supérieure\\
France}}


%




\IEEEoverridecommandlockouts
\IEEEpubid{\makebox[\columnwidth]{\emph{[Long version of the LICS 2021 paper, with full proofs in the appendix.]}\hfill}
    \hspace{\columnsep}\makebox[\columnwidth]{ }}

\maketitle

\begin{abstract}
  We uncover a close relationship between combinatorial and
  syntactic proofs for first-order logic (without equality). Whereas syntactic proofs are
  formalized in a deductive proof system based on inference rules,
  a combinatorial proof is a syntax-free presentation of a proof
  that is independent from any set of inference rules. We show that
  the two proof representations are related via a deep inference
  decomposition theorem that establishes a new kind of normal form
  for syntactic proofs. This yields (a) a simple proof
  of soundness and completeness for first-order combinatorial
  proofs, and (b) a full completeness theorem: every combinatorial
  proof is the image of a syntactic proof.
\end{abstract}


%



\section{Introduction}

First-order predicate logic is a cornerstone of modern
logic. Since its formalisation by Frege~\cite{frege:79} it has seen a
growing usage in many fields of mathematics and computer science. Upon
the development of proof theory by Hilbert~\cite{hilbert:22},
\emph{proofs} became first-class citizens as mathematical objects that
could be studied on their own. Since Gentzen's \emph{sequent
calculus}~\cite{gentzen:35:I,gentzen:35:II}, many other proof systems
have been developed that allow the implementation of efficient proof
search, for example \emph{analytic tableaux}~\cite{smullyan:68} or
\emph{resolution}~\cite{robinson:65}. Despite the immense progress
made in proof theory in general and in the area of
automated and interactive theorem provers in
particular, we still have
no satisfactory notion of proof identity for first-order logic. In
this respect, proof theory is quite different from any other
mathematical field. For example in group theory, two groups are
\emph{the same} iff they are isomorphic; in topology, two spaces are
\emph{the same} iff they are homeomorphic; etc. In proof theory, we
have no such notion telling us when two proofs are \emph{the same},
even though Hilbert was considering this problem as a possible 24th
problem~\cite{thiele:03} for his famous lecture in
1900~\cite{hilbert:00}, before proof theory existed as a mathematical
field.

The main reason for this problem is that formal proofs, as they are
usually studied in logic, are inextricably tied to the syntactic
(inference rule based) proof system in which they are carried out. And
it is difficult to compare two proofs that are produced within two
different syntactic proof systems, based on different sets of inference rules.
Consider the derivations in Figure~\ref{fig:translate}, showing two proofs of the
formula $\peirceformula$ and two proofs of the formula $\veedrinkerformula$,
in sequent calculus (top) and in a deep inference system (bottom).
It is, \emph{a priori}, not clear how to compare them.%
\newcommand\lkpeirceproof{
\vlderivation{
  \vlin{\vlor}{}{\sqn{\peirceformula}}{
    \vlin{\conr}{}{\sqn{(\cneg{p} \vlor q) \vlan \cneg{p}, p}}{
      \vliin{\vlan}{}{\sqn{(\cneg{p} \vlor q) \vlan \cneg{p}, p, p}}{
        \vlin{\vlor}{}{\sqn{\cneg{p} \vlor q, p}}{
          \vlin{\weakr}{}{\sqn{\cneg{p}, q, p}}{
            \vlin{\axr}{}{\sqn{\cneg{p}, p}}{
              \vlhy{}}}}}{
        \vlin{\axr}{}{\sqn{\cneg{p}, p}}{
          \vlhy{}}}}}}
}%
\newcommand\dipeirceproof{
\vlderivation{
  \vlin{\acD}{}{\peirceformula}{
    \vlin{\fequ}{}{((\cneg{p} \vlor q) \vlan \cneg{p}) \vlor (p \vlor p)}{
      \vlin{\sw}{}{(\cneg{p} \vlan (\cneg{p} \vlor q)) \vlor p) \vlor p}{
        \vlin{\fequ}{}{(\cneg{p} \vlan ((\cneg{p} \vlor q) \vlor p)) \vlor p)}{
          \vlin{\sw}{}{(p \vlor (\cneg{p} \vlor q)) \vlan \cneg{p}) \vlor p}{
            \vlin{\aiD}{}{(p \vlor (\cneg{p} \vlor q)) \vlan (\cneg{p} \vlor p)}{
              \vlin{\tttD}{}{(p \vlor (\cneg{p} \vlor q)) \vlan \ttt}{
                \vlin{\fequ}{}{p \vlor (\cneg{p} \vlor q)}{
                  \vlin{\wrD}{}{(p \vlor \cneg{p}) \vlor q}{
                    \vlin{\aiD}{}{p \vlor \cneg{p}}{
                      \vlhy{\ttt}}}}}}}}}}}}
}%
\newcommand\lkdrinkerproof{
\vlderivation{
  \vlin{\conr}{}{\sqn{\exists x.(\cneg{p}x \vlor (\forall y.py))}}{
    \vlin{\exists}{}{\sqn{\exists x.(\cneg{p}x \vlor (\forall y.py)),\exists
x.(\cneg{p}x \vlor (\forall y.py))}}{
      \vlin{\vlor}{}{\sqn{\cneg{p}w \vlor (\forall y.py), \exists
x.(\cneg{p}x \vlor (\forall y.py))}}{
        \vlin{\forall}{}{\sqn{\cneg{p}w, \forall y.py, \exists
x.(\cneg{p}x \vlor (\forall y.py))}}{
          \vlin{\exists}{}{\sqn{\cneg{p}w, pz, \exists
x.(\cneg{p}x \vlor (\forall y.py))}}{
            \vlin{\vlor}{}{\sqn{\cneg{p}w, pz, \cneg{p}z \vlor (\forall y.py)}}{
              \vlin{\weakr}{}{\sqn{\cneg{p}w, pz, \cneg{p}z, \forall y.py}}{
                \vlin{\weakr}{}{\sqn{\cneg{p}w, pz, \cneg{p}z}}{
                \vlin{\axr}{}{\sqn{pz, \cneg{p}z}}{
                  \vlhy{}}}}}}}}}}}
}%
\newcommand\ksonedrinkerproof{
\vlderivation{
  \vlin{\acD}{}{\exists x.(\cneg{p}x \vlor (\forall y.py))}{
    \vlin{\mfaD}{}{\exists x.(\cneg{p}x \vlor (\forall y.(py \vlor py)))}{
      \vlin{\acD}{}{\exists x.(\cneg{p}x \vlor ((\forall y.py) \vlor (\forall y.py)))}{
        \vlin{\fequ}{}{\exists x.((\cneg{p}x \vlor \cneg{p}x) \vlor ((\forall y.py) \vlor (\forall y.py)))}{
          \vlin{\mexD}{}{\exists x.((\cneg{p}x \vlor (\forall y.py)) \vlor (\cneg{p}x \vlor (\forall y.py)))}{
            \vlin{\exists}{}{(\exists x.(\cneg{p}x \vlor (\forall y.py))) \vlor (\exists x.(\cneg{p}x \vlor (\forall y.py)))}{
              \vlin{\fequ}{}{(\cneg{p}w \vlor (\forall y.py)) \vlor (\exists x.(\cneg{p}x \vlor (\forall y.py)))}{
                \vlin{\exists}{}{\forall y.((\cneg{p}w \vlor py) \vlor (\exists x.(\cneg{p}x \vlor (\forall y.py))))}{
                  \vlin{\fequ}{}{\forall y.((\cneg{p}w \vlor py) \vlor (\cneg{p}y \vlor (\forall y.py)))}{
                    \vlin{\wD}{}{\forall y.((py \vlor \cneg{p}y) \vlor (\cneg{p}w \vlor (\forall y.py)))}{
                      \vlin{\aiD}{}{\forall y.(py \vlor \cneg{p}y)}{
                        \vlin{\forall}{}{\forall y.\ttt}{
                          \vlhy{\ttt}}}}}}}}}}}}}}
}%
\begin{figure}
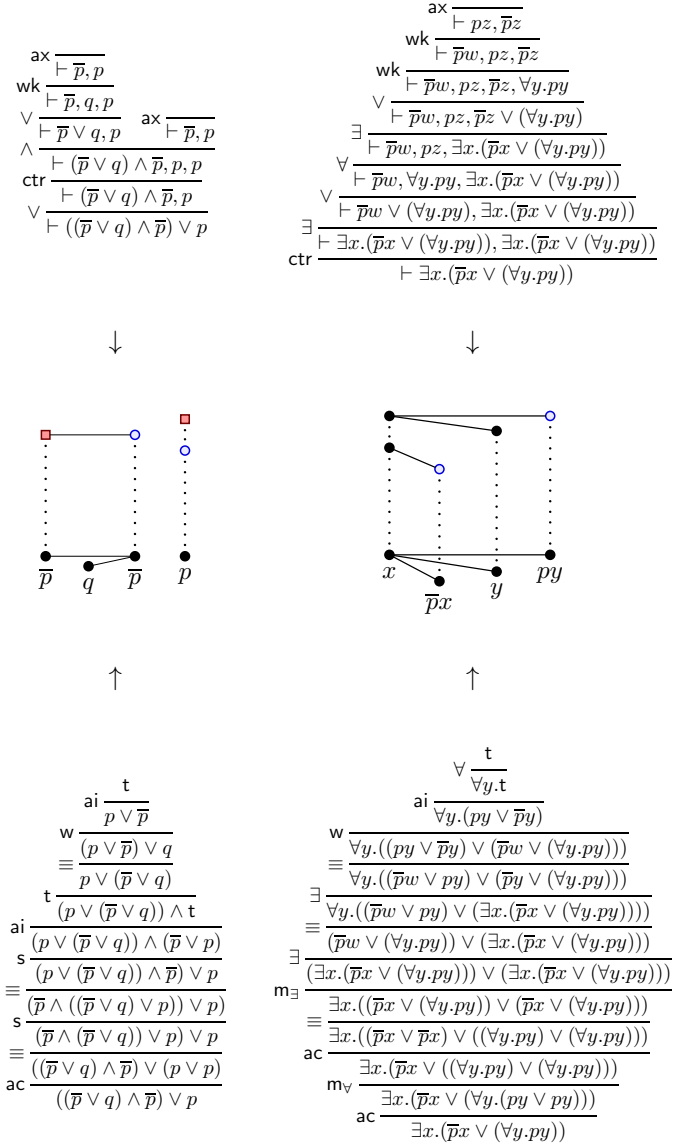
\vspace{-4ex}\begin{center}\hspace*{-0ex}\begin{math}
\newcommand\peircescale{.78}
\newcommand\drinkerscale{\peircescale}
\newcommand\cpscale{.95}
\begin{array}{@{}c@{\hspace{4ex}}c@{}}
{}\scalebox{\peircescale}{$\lkpeirceproof$}
&
{}\scalebox{\drinkerscale}{$\lkdrinkerproof$}
\\[15ex]
\downarrow
&
\downarrow
\\[3ex]
\vctr{\scalebox{\cpscale}{\peircecppic}}
&
\vctr{\raisebox{-2ex}[16.5ex]{\scalebox{\cpscale}{\drinkerfibcolouredpic}}}
\\[12ex]
\uparrow
&
\uparrow
\\[4ex]
{}\scalebox{\peircescale}{$\dipeirceproof$}
&
{}\scalebox{\drinkerscale}{$\ksonedrinkerproof$}
\end{array}
\end{math}\hspace*{-8ex}\end{center}\caption{Left:
syntactic proofs in sequent calculus (above)
and the calculus of structures (below)
which translate to the same propositional combinatorial proof (centre).
Right:
syntactic proofs 
in sequent calculus (above)
and the new calculus $\FOKS$ introduced in this paper (below),
which translate to the same first-order combinatorial proof (centre).%
}\label{fig:translate}\vspace{-4ex}\end{figure}%

This is where \emph{combinatorial proofs} come in. They were
introduced by Hughes~\cite{hughes:pws} for classical propositional
logic as a syntax-free notion of proof, and as a potential solution
to Hilbert's 24th problem~\cite{hughes:invar} (see
also~\cite{str:hilbert:24}). The basic idea is to abstract away from
the syntax of the inference rules used in inductively-generated proofs
and consider the proof as a combinatorial object, more precisely
as a special kind of
graph homomorphism. For example, a propositional combinatorial
proof of Peirce's law $\peirceimpliesformula=\peirceformula$ is shown mid-left in
Fig.\,\ref{fig:translate},
a homomorphism from a 4-vertex graph with two colours (above) to a 4-vertex graph labelled with
propositional variables (below); dotted vertical lines define the homomorphism, from above to below.

Several authors have illustrated how syntactic proofs in various
proof systems can be translated to propositional combinatorial proofs:
for sequent proofs in~\cite{hughes:invar}, for deep inference proofs
in~\cite{str:fscd17}, for Frege systems in~\cite{str:RR-9048}, and for
tableaux systems and resolution in~\cite{acc:str:18}. This enables a
natural definition of proof identity for propositional logic: two
proofs are \emph{the same} if they are mapped to the same
combinatorial proof.
For example, the left side of Fig.\,\ref{fig:translate} translates
syntactic proofs from sequent calculus and the calculus of structures into
the same combinatorial proofs, witnessing that the two syntactic proofs,
from different systems, are \emph{the same}.

Recently, Acclavio and Stra\ss burger extended this notion to relevant
logics~\cite{acc:str:relevant} and to modal
logics~\cite{acc:str:modal}, and Heijlties, Hughes and Stra\ss burger
have provided combinatorial proofs for intuitionistic propositional
logic~\cite{HHS:lics19}.

In this paper we advance the idea that combinatorial proofs can provide
a notion of proof identity for first-order logic. \emph{First-order
combinatorial proofs} were introduced by Hughes in~\cite{hughes:fopws}.
For example, a first-order combinatorial proof of Smullyan's 
\emph{drinker paradox} \mbox{$\drinkerformula$} $=$ \mbox{$\veedrinkerformula$}
is shown on the right of Fig.\,\ref{fig:translate}, 
a homomorphism from a 5-vertex partially coloured graph (with one colour) to a 4-vertex labelled graph.
However, even though Hughes proves soundness and completeness, the 
proof is unsatisfactory: (1) the soundness argument is long, intricate
and cumbersome, and (2) the completeness proof does not 
allow a syntactic proof to be read back from a combinatorial proof, i.e.,
completeness is not
\emph{sequentializable} \cite{girard:87} nor
\emph{full}~\cite{abramsky:jagadeesan:94}. 
A fundamental problem is that not all combinatorial
proofs can be obtained as translations of sequent calculus proofs.

We solve these issues by moving to a deep inference
system. More precisely, we introduce a new proof system,
$\FOKS$, for first-order logic, that (a) reflects every combinatorial
proof, i.e., there is a surjection from $\FOKS$ proofs to
combinatorial proofs, (b) yields simpler proofs of
soundness and completeness for combinatorial proofs, and (c) admits
new decomposition theorems establishing a precise correspondence
between certain syntactic inference rules and certain combinatorial
notions.
The right of Fig.\,\ref{fig:translate}
illustrates surjection in (a), and since the syntactic proofs 
in the two systems translate to the same combinatorial proof,
they can be considered \emph{the same}.

In general, a \emph{decomposition theorem} provides normal forms of
proofs, separating subsets of inference rules of a proof system. A
prominent example of a decomposition theorem is Herbrand's
theorem~\cite{herbrand:phd}, which allows a separation between the
propositional part and the quantifier part in a first-order
proof~\cite{gentzen:35:II,brunnler:06:herbrand}. Through the advent of
deep inference, new kinds of proof decompositions became possible,
most notably the separation between the linear part of a proof and the
resource management of a proof. It has been shown by
Stra{\ss}burger~\cite{str:07:RTA} that a proof in classical
propositional logic can be decomposed into a proof of multiplicative
linear logic, followed by a proof consisting only of contractions and
weakenings (see also \cite[\S4]{hughes:invar}).
In this paper we show that the same is possible for
first-order logic.

Combinatorial proofs and deep inference can be seen as opposite ends
of a spectrum: whereas deep inference allows for a very fine
granularity of inference rules---one inference rule in a standard
formalism, like sequent calculus or semantic tableaux, is usually
simulated by a sequence of different deep inference
rules---combinatorial proofs have completely abolished the concept of
inference rule. And yet, there is a close relationship between the
two, realized through a decomposition theorem, as we establish in this
paper.

\medskip
\noindent
    {\bf Outline:} This paper has three parts. First,
    in Sections~\ref{sec:fologic}--\ref{sec:foks} we present the
    preliminaries on first-order logic, first-order graphs,
    first-order combinatorial proofs, and the first-order proof system
    $\FOKS$. Second, in Section~\ref{sec:main} we state the main
    results. And third, in
    Sections~\ref{sec:LK1-KS1}--\ref{sec:summary} we give their
    proofs.


\section{Preliminaries: First-order Logic}\label{sec:fologic}

\subsection{Terms and Formulas}

Fix pairwise disjoint countably infinite sets $\VAR=\set{x,y,z,\ldots}$ of variables,
$\FUN=\set{f,g,\ldots}$ of function symbols, and 
$\PRED=\set{p,q,\ldots}$ of predicate symbols. Each function symbol and
each predicate symbol has a finite arity. Each predicate symbol $p$ has a
\bfit{dual}
$\dual{p}$ with $\dual{\dual p}\mkern6mu{\neq}\mkern5mu\dual p$. The grammars below generate
the set $\TERM$ of \bfit{terms}, denoted by $s,t,u,\ldots$, the set
$\ATOM$ of \bfit{atoms}, denoted by $a,b,c,\ldots$, and the set
$\FORM$ of \bfit{formulas}, denoted by $A,B,C,\ldots$:
\begin{equation*}
  \begin{array}{r@{~}l}
    t &\coloneqq~ x \mid f(t_1,\dots,t_n)
    \\[.5ex]
    a &\coloneqq~ \ttt\mid\fff\mid p(t_1,\dots,t_n) \mid \dual p(t_1,\dots,t_n)
    \\[.5ex]
    A &\coloneqq~ a \mid A\vlan A \mid A\vlor A \mid \exists x.A \mid \forall x.A
\end{array}
\end{equation*}
where the arity of $f$ and $p$ is $n$.
For better readability we often omit parentheses and write $ft_1 \dots t_n$ or $pt_1 \dots t_n$.
We consider the truth constants
$\ttt$ (\emph{true}) and $\fff$
(\emph{false}) as additional atoms, and consider all formulas in negation
normal form, where \bfit{negation} $(\dual\cdot)$ is defined on
atoms and formulas via De Morgan's laws:
\begin{equation*}
  \begin{array}{c@{\hspace{6ex}}c}
  \dual\ttt =\fff &
  \dual{p(t_1,\ldots,t_n)} = \dual{p}(t_1,\ldots,t_n) \\[.5ex]
  \dual\fff = \ttt &
  \dual{\dual p(t_1,\ldots,t_n)} = p(t_1,\ldots,t_n) \\[.5ex]
  \dual{\exists x.A} = \forall x.\dual{A}
  &
  \dual{A \vlan B} = \dual{A} \ \vlor \ \dual{B} \\[.5ex]
  \dual{\forall x.A} = \exists x.\dual{A}
  &
  \dual{A \vlor B} = \dual{A} \ \vlan \ \dual{B}
  \end{array}
\end{equation*}
Note $\dual{\dual a} = a$.
We write $A\implies B$ as an abbreviation for $\dual A\vlor B$.

A formula is \bfit{rectified} if all bound variables are distinct from
one another and from all free variables. Every formula can be
transformed into a logically equivalent rectified form by
bound variable renaming, e.g.\
\mbox{$(px \vee \exists xqx) \wedge \exists x r$} $\mapsto$ \mbox{$(px \vee \exists yqy) \wedge \exists zrz$}.
If we consider formulas equivalent
modulo bound variable renaming ($\alpha$-conversion), the
rectified form of a formula $A$ is unique, and we denote it
by $\rectif A$.

A \bfit{substitution} is a function $\sigma\colon\VAR\to\TERM$ that is
the identity almost everywhere. We denote substitutions as
$\sigma=\sublist{\subst{x_1}{t_1},\ldots,\subst{x_n}{t_n}}$, where
$\sigma(x_i)=t_i$ for $i=1..n$ and $\sigma(x)=x$ for all
$x\notin\set{x_1,\ldots,x_n}$. 
Write $A\sigma$ for the formula
obtained from $A$ by applying $\sigma$, i.e., by simultaneously
replacing all occurrences of $x_i$ by $t_i$.
A \bfit{variable renaming} is a substitution $\rho$ with $\rho(x)\in\VAR$ for all variables $x$.


\subsection{Sequent Calculus $\FOLK$}

\bfit{Sequents}, denoted by $\Gamma,\Delta,\ldots$, are finite
multisets of formulas, written as lists, separated by comma. The
\bfit{corresponding formula} of a (non-empty) sequent $\Gamma=A_1,A_2,\ldots,A_n$
is the disjunction of its formulas: $\form\Gamma=A_1\vlor
A_2\vlor\cdots\vlor A_n$. A sequent is \bfit{rectified} iff its
corresponding formula is.

In this paper we use the sequent calculus $\FOLK$, shown in
Figure~\ref{fig:LK1}, which is a one-sided variant of Gentzen's
original calculus~\cite{gentzen:35:I} for first-order logic. To
simplify some technicalities later in this paper, we include
the $\mixr$ rule.
\begin{thm}
  $\FOLK$ is sound and complete for first-order logic.
\end{thm}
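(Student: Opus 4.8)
The plan is to prove soundness and completeness of $\FOLK$ for first-order logic in the standard way, adapted to this one-sided, negation-normal-form presentation with the $\mixr$ rule. Since $\FOLK$ is a routine variant of Gentzen's $\mathsf{LK}$, the cleanest route is to reduce the claim to the well-known soundness and completeness of classical first-order sequent calculus, treating the non-standard features (one-sidedness, the duality on predicate symbols with $\dual{\dual p}\neq\dual p$, and the extra $\mixr$ rule) as the only things genuinely requiring checking.

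For \textbf{soundness}, I would argue by induction on the height of a given $\FOLK$ derivation $\Pi$ of a sequent $\Gamma$, showing that the corresponding formula $\form\Gamma$ is valid. The base case is the axiom $\axr$, whose conclusion $\sqn{\dual a, a}$ has corresponding formula $\dual a \vlor a$, a tautology under any interpretation respecting the De Morgan laws and the declared duality $\dual{\dual a}=a$. For each inference rule I would check that validity of the premises' corresponding formulas implies validity of the conclusion's: this is immediate for $\vlor$ (which does nothing semantically, since $\form\Gamma$ is already a big disjunction), straightforward for $\vlan$, and for the quantifier rules $\forall$ and $\exists$ it is the usual eigenvariable / witness-term argument. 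The only slightly nonstandard rule is $\mixr$, which merges two provable sequents into their multiset union; since validity of $\form\Gamma$ and $\form\Delta$ clearly gives validity of $\form{\Gamma,\Delta}=\form\Gamma\vlor\form\Delta$, the $\mixr$ rule preserves validity trivially. Because the excerpt defers the actual rule figure to Figure~\ref{fig:LK1}, I would phrase the soundness step as a uniform semantic check over each rule shape once that figure is consulted.

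For \textbf{completeness}, the most economical approach is to transfer completeness from standard two-sided $\LK$ rather than re-run a Henkin or proof-search argument from scratch. Given a valid formula $A$, I would first replace $A$ by its rectified form $\rectif A$ (logically equivalent, per the preliminaries) so that the eigenvariable conditions of the quantifier rules can always be met. Then I would invoke completeness of ordinary first-order sequent calculus to obtain a cut-free proof, and translate it into $\FOLK$ by pushing negations to the atoms via the De Morgan laws already fixed in the preliminaries, identifying the two-sided sequent $B_1,\dots,B_m \vdash C_1,\dots,C_n$ with the one-sided sequent $\sqn{\dual{B_1},\dots,\dual{B_m}, C_1,\dots,C_n}$. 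Under this translation each two-sided rule becomes one of the $\FOLK$ rules (right rules map directly; left rules map to the dual connective's rule on the negated formula), and the identity axiom maps to $\axr$. Since this gives a genuine $\FOLK$ proof without ever using $\mixr$, the $\mixr$ rule is a harmless addition that cannot destroy completeness.

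The main obstacle I expect is bookkeeping rather than conceptual depth: verifying that the negation-normal-form/one-sided encoding is faithful, in particular that the unusual convention $\dual{\dual p}\neq\dual p$ (distinct primitive dual predicate symbols rather than an involutive syntactic negation) does not break the axiom or the atom-level De Morgan clauses. One must check that $\dual{\dual a}=a$ still holds at the level of general atoms $a$ (which the excerpt explicitly asserts), so that the $\axr$ instances and the connective dualities line up, and that the quantifier duality $\dual{\forall x.A}=\exists x.\dual A$ correctly matches the $\forall$/$\exists$ rules after translation. Once these encoding checks are discharged, both directions are the standard Gentzen argument, so I would present soundness in full and state completeness as a transfer from the classical result, flagging the duality convention as the single point meriting care.
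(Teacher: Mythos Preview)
Your proposal is correct and follows the standard route; the paper itself gives no proof at all beyond the sentence ``For a proof, see any standard textbook, e.g.~\cite{TS:00}.'' Your sketch is therefore strictly more detailed than the paper's treatment, and the approach (soundness by induction on derivations, completeness by transfer from two-sided $\LK$, with the one-sidedness, negation-normal-form encoding, and the admissible $\mixr$ rule as the only points needing explicit checking) is exactly what such a textbook reference would unpack to.
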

\noindent For a proof, see any standard textbook, e.g.~\cite{TS:00}.

The linear fragment of $\FOLK$, i.e., the fragment without the rules
$\conr$ (\emph{contraction}) and $\weakr$ (\emph{weakening}) defines
\emph{first-order multiplicative linear
logic}~\cite{girard:87,girard:88} \emph{with
mix}~\cite{fleury:retore:94,bellin:97}~(MLL1+mix). We denote that
system here with $\FOMLL$ (shown in Figure~\ref{fig:LK1} in the dashed
box).

We will use the cut elimination theorem. The \bfit{cut} rule is
\begin{equation}
  \vliinf{\cutr}{}{\sqn{\Gamma,\Delta}}{\sqn{\Gamma,A}}{\sqn{\dual A,\Delta}}
\end{equation}

\begin{thm}
  \label{thm:cutelim}
  If a sequent $\sqn\Gamma$ is provable in $\FOLKcut$ then it is also
  provable in $\FOLK$. Furthermore, if $\sqn\Gamma$ is provable in
  $\FOMLLcut$ then it is also provable in $\FOMLL$.
\end{thm}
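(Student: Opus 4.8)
The plan is to give a standard Gentzen-style cut-elimination argument, organised as a double induction. Call the number of connectives and quantifiers of a formula $A$ its \emph{degree} $\deg(A)$, and define the \emph{cut rank} of a derivation to be $\sup\set{\deg(A)+1 : A \text{ a cut formula}}$, so that the rank is $0$ exactly when the derivation is cut-free. The outer induction is on the cut rank: it suffices to show that any derivation of positive rank can be transformed into one of strictly smaller rank, after which iterating drives the rank to $0$. To lower the rank I would repeatedly pick a topmost cut of maximal degree — so that both its premises are derived without any cut of that degree — and eliminate it by an inner induction on the sum of the heights of the two premise derivations, pushing the $\cutr$ upwards and replacing it, in the end, by cuts of strictly smaller degree.

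First I would treat the \emph{principal} cases, where the cut formula is principal in the last rule of both premises. The axiom case is immediate. For $\vlan$ against $\vlor$, the cut on $A\vlan B$ against $\dual A\vlor\dual B$ splits into two cuts on $A$ and on $B$, both of smaller degree. For the quantifier case, a cut on $\forall x.A$ against $\exists x.\dual A$ reduces to a cut on $A\ssubst xt$ against $\dual A\ssubst xt$, where $t$ is the witness term introduced by the $\exists$ rule; here I would substitute $t$ for the eigenvariable of the $\forall$ rule throughout its subderivation, relying on rectification and variable renaming (Section~\ref{sec:fologic}) to avoid capture, and note that substituting a term leaves the degree unchanged, so the resulting cut has degree $\deg(A)<\deg(\forall x.A)$. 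Next come the \emph{commutative} cases, where the cut formula is not principal in one premise: the cut is permuted above the last rule of that premise. This is routine for every rule of the calculus, including $\mixr$, and does not raise the cut rank.

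The main obstacle is the structural rules in the classical system. Weakening is harmless: a cut against a $\weakr$ that introduces the cut formula is discarded and replaced by weakenings. Contraction is the genuine difficulty, since cutting against a $\conr$ that contracts the cut formula duplicates the cut and therefore does not decrease the inner induction measure. The standard remedy is to generalise $\cutr$ to a \emph{multicut} (Gentzen's original \emph{Mischung}), which removes several copies of the cut formula simultaneously and thereby absorbs the contraction; one then eliminates the multicut by the same rank-and-height induction and recovers ordinary derivations at the end. (This device is unrelated to the combining rule $\mixr$ already present in $\FOLK$.) All of this is classical and can be found in a standard reference such as~\cite{TS:00}.

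Finally, the second statement, for the linear fragment $\FOMLL$, falls out of the same argument restricted to the multiplicative rules together with $\mixr$. The one point to check is that none of the reduction steps above ever introduces a $\conr$ or a $\weakr$: the axiom, $\vlan/\vlor$, quantifier, and commutative reductions only rearrange and substitute inside rules already available in $\FOMLL$, and the problematic contraction and weakening cases simply do not arise, since those rules are absent. Hence an $\FOMLLcut$ derivation reduces to an $\FOMLL$ derivation; in fact the linear case is the cleaner one, because without contraction no cut is ever duplicated and the inner double induction on heights goes through directly.
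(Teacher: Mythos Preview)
Your proposal is correct and follows the standard Gentzen-style argument; the paper itself does not actually give a proof of this theorem but simply defers to a textbook reference (\cite{TS:00}), so your sketch is a faithful expansion of exactly what that reference contains. One small omission: you should also mention the (trivial) principal case for the truth constants, where a cut on $\ttt$ against $\fff$ collapses immediately since the $\fff$-premise already derives the desired sequent.
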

\noindent As before, this is standard, see e.g.~\cite{TS:00} for a proof.


\section{Preliminaries: First-order Graphs}\label{sec:fographs}


\subsection{Graphs}

A \bfit{graph} $\gG=\tuple{\vG,\eG}$ is a pair where $\vG$ is a finite
set of \bfit{vertices} and $\eG$ is a finite set of \bfit{edges},
which are two-element subsets of $\vG$. We write $vw$ for an edge
$\set{v,w}$.

Let $\gG=\tuple{\vG,\eG}$ and $\gH=\tuple{\vH,\eH}$ be graphs such
that $\vG\cap\vH=\emptyset$. A \bfit{homomorphism}
$\phi\colon\gG\to\gH$ is a function $\phi\colon\vG\to\vH$ such that if
$vw\in\eG$ then $\phi(v)\phi(w)\in\eH$. The \bfit{union} $\gG +\gH$ is
the graph $\tuple{\vG \cup\vH,\eG\cup \eH}$ and the \bfit{join} $\gG
\times \gH$ is the graph $\tuple{\vG \cup\vH,\eG \cup \eH \cup \set{vw
\mid v \in \vG, w \in \vH}}$.  A graph $\gG$ is \bfit{disconnected}
if $\gG=\gG_1+\gG_2$ for two non-empty graphs $\gG_1,\gG_2$, otherwise
it is \bfit{connected}.

A graph $\gG$ is \bfit{labelled} in a set $L$ if each vertex $v\in\vG$
has an associated \bfit{label} $\labelof v\in L$.
A graph $\gG$ is (partially) \bfit{coloured} if it carries a partial
equivalence relation $\linkingof\gG$ on $\vG$;
each equivalence class is a \bfit{colour}.\footnote{In
\cite{hughes:pws} and \cite{hughes:fopws}
adjacent vertices must have distinct colours, following
the standard definition of colouring in graph
theory.
We choose to omit this condition here, as it is implied
by the preclusion of bimatchings in Def.\,\ref{def:fonet}.}
A \bfit{vertex renaming} of
$\gG=\tuple{\vG,\eG}$ along a bijection
$(\hat\cdot)\colon\vG\to\hat\vG$ is the graph
$\hat\gG=\tuple{\hat\vG,\set{\hat v\hat w\mid vw\in\eG}}$, with
colouring and/or labelling inherited (i.e., $\hat v\samecol\hat w$ if
$v\samecol w$, and $\ell(\hat v)=\ell(v)$). Following standard graph
theory, we identify graphs modulo vertex renaming.

A \bfit{directed graph} $\gG =\tuple{\vG,\eG}$ is a set $\vG$ of
\bfit{vertices} and a set $\eG\subseteq \vG\times\vG$ of \bfit{direct edges}.
A \bfit{directed graph homomorphism}
$\phi\colon\tuple{\vG,\eG}\to\tuple{\vH,\eH}$ is a function $\phi\colon\vG\to\vH$ such that if
$\pair{v,w}\in\eG$ then $\pair{\phi(v),\phi(w)}\in\eH$.


\begin{figure}[!t]
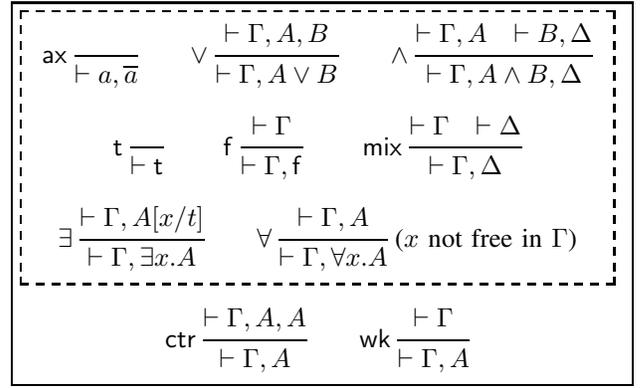

  \begin{center}
    \framebox{%
      $
      \begin{array}{@{}c@{}}
        \dbox{%
          $
          \begin{array}{c}
            \vlinf{\axr}{}{\sqn{a,\cneg a}}{}
            \qquad
            \vlinf{\vlor}{}{\sqn{\Gamma,A\vlor B}}{\sqn{\Gamma,A,B}}
            \qquad
            \vliinf{\vlan}{}{\sqn{\Gamma,A\vlan B, \Delta}}{\sqn{\Gamma,A}}{\sqn{B,\Delta}}
            \\ \\[-1ex]
            \vlinf{\ttt}{}{\sqn{\ttt}}{}
            \qquad\;
            \vlinf{\fff}{}{\sqn{\Gamma,\fff}}{\sqn{\Gamma}}
            \qquad\;
            \vliinf{\mixr}{}{\sqn{\Gamma,\Delta}}{\sqn\Gamma}{\sqn\Delta}
            \\ \\[-1ex]
            \vlinf{\exists}{}{\sqn{\Gamma,\exists x.A}}{\sqn{\Gamma,A\ssubst{x}{t}}}
            \qquad
            \vlinf{\forall}{\text{($x$ not free in $\Gamma$)}}{\sqn{\Gamma,\forall x.A}}{\sqn{\Gamma,A}}
          \end{array}
          $%
        }%
        \\\\[-1ex]
        \vlinf{\conr}{}{\sqn{\Gamma,A}}{\sqn{\Gamma,A,A}}
        \qquad
        \vlinf{\weakr}{}{\sqn{\Gamma,A}}{\sqn{\Gamma}}
      \end{array}
      $
    }
  \end{center}
  \caption{Sequent calculi $\FOLK$ (all rules) and $\FOMLL$ (rules in the dashed box)}
  \label{fig:LK1}
\end{figure}

\subsection{Cographs}

A graph $\gH=\tuple{\vH,\eH}$ is a \bfit{subgraph} of a graph
$\gG=\tuple{\vG,\eG}$ if $\vH\subseteq\vG$ and $\eH\subseteq\eG$. It
is \bfit{induced} if $v,w\in\vH$ and $vw\in\eG$ implies $vw\in\eH$. An induced
subgraph of $\gG=\tuple{\vG,\eG}$ is uniquely determined by its set of vertices
$V$ and we denote it by $\gG[V]$.
A graph is \bfit{$\gH$-free} if it does not contain $\gH$ as an induced
subgraph.  The graph $\Pfour$ is the (undirected) graph
$\tuple{\set{v_1, v_2, v_3, v_4},\set{v_1v_2, v_2v_3, v_3v_4}}$.  A
\bfit{cograph} is a $P_4$-free undirected graph. The interest in
cographs for our paper comes from the following well-known fact.

\begin{thm}[\cite{Ler71,CLS81}]\label{thm:cograph}
  A graph is a cograph iff it can be constructed from the singletons
  via the operations $+$ and~$\times$.
\end{thm}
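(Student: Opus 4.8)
The plan is to prove both directions of the biconditional in Theorem~\ref{thm:cograph}, characterizing cographs (i.e.\ $\Pfour$-free graphs) as exactly those built from singletons by union and join.

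For the easy direction, I would show by structural induction that every graph constructed from singletons via $+$ and $\times$ is $\Pfour$-free. A single vertex is trivially $\Pfour$-free. For the inductive step, suppose $\gG = \gG_1 + \gG_2$ or $\gG = \gG_1 \times \gG_2$ with $\gG_1,\gG_2$ both $\Pfour$-free, and suppose for contradiction that $\gG$ contains an induced path $v_1v_2v_3v_4$ (a $\Pfour$). The four vertices cannot all lie in $\gG_1$ or all in $\gG_2$, since those subgraphs are $\Pfour$-free and the $\Pfour$ would be induced there too. So the path straddles both parts. In the union case, there are no edges between $\gG_1$ and $\gG_2$, so the induced subgraph on $\set{v_1,v_2,v_3,v_4}$ is disconnected, contradicting that $\Pfour$ is connected. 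In the join case, \emph{every} vertex of $\gG_1$ is adjacent to every vertex of $\gG_2$; one checks that whichever way the four vertices split between the parts, the join edges force either an extra edge (e.g.\ $v_1v_3$ or $v_1v_4$) that $\Pfour$ lacks, or a missing edge, contradicting that the subgraph is an \emph{induced} $\Pfour$. A short case analysis on the split $(1,3),(2,2),(3,1)$ closes this direction.

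For the harder converse, I would show every $\Pfour$-free graph $\gG$ is built from singletons via $+$ and $\times$, by induction on $|\vG|$. The base case $|\vG| = 1$ is a singleton. For $|\vG| \ge 2$, the key structural fact is: \emph{if $\gG$ is a cograph on at least two vertices, then either $\gG$ or its complement $\compl\gG$ is disconnected.} Granting this, if $\gG$ itself is disconnected, write $\gG = \gG_1 + \gG_2$ with $\gG_1,\gG_2$ nonempty; each $\gG_i$ is an induced subgraph of a $\Pfour$-free graph, hence $\Pfour$-free, so by induction each is built from singletons, and then $\gG = \gG_1 + \gG_2$ is as well. If instead $\compl\gG$ is disconnected, then $\gG$ decomposes as a join $\gG = \gG_1 \times \gG_2$ (since a graph is a join exactly when its complement is a union), and the same inductive argument applies to the two $\Pfour$-free parts.

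The main obstacle is the structural fact that a cograph on $\ge 2$ vertices has $\gG$ or $\compl\gG$ disconnected. I would prove its contrapositive: if both $\gG$ and $\compl\gG$ are connected (and $|\vG| \ge 2$), then $\gG$ contains an induced $\Pfour$. The cleanest route is to take any induced path of maximal length in $\gG$ and analyze its endpoints using connectivity of both $\gG$ and $\compl\gG$ to force the existence of four vertices in the $\Pfour$ configuration; alternatively one argues that a minimal connected graph with connected complement must already realize a $\Pfour$. Since the theorem is cited from \cite{Ler71,CLS81} as well-known, I expect the intended proof to invoke this complementation/connectivity dichotomy rather than reprove it from scratch, but the dichotomy is precisely where the real combinatorial content of cographs resides, so it is the step warranting the most care.
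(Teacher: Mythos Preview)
The paper does not prove this theorem at all: it is stated with a citation to \cite{Ler71,CLS81} and no proof is given, as it is a classical graph-theoretic result imported as background. So there is no ``paper's own proof'' to compare against.

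That said, your proposal is a correct outline of the standard proof found in the cited references. The easy direction (closure of $\Pfour$-freeness under $+$ and $\times$) is fine as written. For the converse, the key lemma you identify---that a graph on at least two vertices with both $\gG$ and $\compl\gG$ connected must contain an induced $\Pfour$---is indeed exactly where the content lies, and your inductive scheme using it is the standard one. Your sketch of how to prove that lemma is a bit vague (``take any induced path of maximal length \ldots\ and analyze its endpoints''); the usual clean argument picks two nonadjacent vertices $u,v$ at distance $\ge 2$ in $\gG$ (possible since $\compl\gG$ is connected, so $\gG$ is not complete) and a shortest $u$--$v$ path in $\gG$ (possible since $\gG$ is connected), whose first four vertices then induce a $\Pfour$. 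But since the paper treats this as a black-box citation, any correct proof suffices, and yours is on the right track.
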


In a graph $\gG$, the \bfit{neighbourhood} $N(v)$ of a vertex $v \in\vG$ is
$\set{w \mkern-2mu\mid\mkern-2mu vw \mkern-2mu\in\mkern-2mu\eG}$. A \bfit{module} is a set $M \mkern-2mu\subseteq\mkern-2mu \vG$
with $N(v) \mkern-1mu\setminus\mkern-1mu M = N(w) \mkern-1mu\setminus\mkern-1mu M$
for all $v, w \in M$. A module $M$ is \bfit{strong} if for every
module $M'$ we have $M' \mkern-2mu\subseteq\mkern-2mu M$, $M \mkern-2mu\subseteq\mkern-2mu M'$ or $M \cap M' =
\emptyset$. A module is \bfit{proper} if it has two or more vertices.


\subsection{Fographs}

A cograph is \bfit{logical} if every vertex is labelled by either an
atom or variable, and it has at least one atom-labelled vertex. An
atom-labelled vertex is a \bfit{literal} and a
variable-labelled vertex is a \bfit{binder}. A binder labelled
with $x$ is an \bfit{$x$-binder}. The \bfit{scope} of a binder
$b$ is the smallest proper strong module containing $b$.
An \bfit{$x$-literal} is a literal whose
atom contains the variable $x$. An $x$-binder \bfit{binds} every
$x$-literal in its scope.  In a logical cograph $\gG$, a binder $b$ is
\bfit{existential} (resp. \bfit{universal}) if, for every other vertex
$v$ in its scope, we have $bv \in \eG$ (resp. $bv \notin \eG$). An
$x$-binder is \bfit{legal} if its scope contains no other $x$-binder
and at least one literal.
\begin{definition}[{\cite[\S3]{hughes:fopws}}]
A \bfit{first-order graph} or \bfit{fograph} $\gG$ is a logical cograph whose binders are all legal. The
\bfit{binding graph} of $\gG$ is the directed graph $\bG=\tuple{\vG,
\set{(b, l) \mid b \text{ binds } l}}$.
\end{definition}
\noindent We define a mapping $\graphof\cdot$ from formulas to (labelled)
graphs, inductively as follows:
\begin{equation*}
  \begin{array}{r@{\;}l@{\hskip4em}r@{\;}l}
    \graphof{a} &= \single a \rlap{\hskip1em(for any atom $a$)}  \\[.9ex]
    \graphof{A \cor B} &= \graphof{A} + \graphof{B} &
    \graphof{\exists x . A} &= \single x \times \graphof{A} \\[.9ex]
    \graphof{A \cand B} &= \graphof{A} \times \graphof{B} &
    \graphof{\forall x . A} &= \single x + \graphof{A}    
  \end{array}
\end{equation*}
where we
write $\single\alpha$ for a single-vertex  labelled by $\alpha$.

\begin{example}
  Here is the fograph of the drinker formula $\drinkerformula=\veedrinkerformula$:
  \;\;
  \hbox{\begin{tikzpicture}[baseline={([yshift={-1.5ex}]current bounding box.north)}]
      \drinkerbase
  \end{tikzpicture}}
\end{example}

\begin{lemma}
  If $A$ is a rectified formula then $\graphof A$ is a fograph.
\end{lemma}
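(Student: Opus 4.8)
The plan is to verify the three clauses in the definition of a fograph---that $\graphof A$ is a cograph, that it is logical, and that all its binders are legal---by induction on $A$, the last clause being where the actual content lies. That $\graphof A$ is a cograph is immediate from Theorem~\ref{thm:cograph}, since by construction $\graphof A$ is built from singletons using only $+$ and $\times$. For the \emph{logical} clause, a routine induction shows that each vertex carries either an atom label (introduced by the clause $\graphof a=\single a$) or a variable label (introduced by the clauses for $\exists$ and $\forall$), and that there is at least one atom-labelled vertex, because the grammar forces every formula to contain an atom.

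The work is in the legality of the binders. Fix a binder $v_x$, arising from a subformula occurrence $Q x.\,B$ of $A$ with $Q\in\{\exists,\forall\}$. The first legality condition---no other $x$-binder in the scope of $v_x$---comes for free from rectification: since $A$ is rectified, $Q x$ is the unique quantifier binding $x$, so $v_x$ is the only $x$-labelled vertex of $\graphof A$, and hence no other $x$-binder lies anywhere, let alone in its scope.

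For the second condition---the scope of $v_x$ contains a literal---I would use the standard cotree (modular decomposition) of $\graphof A$, in which the strong modules are exactly the singletons together with the leaf-sets of the internal nodes, so that the scope of $v_x$, being the smallest \emph{proper} strong module containing $v_x$, is the leaf-set of the parent of the leaf $v_x$. This cotree is obtained from the formula tree of $A$ by turning each $\wedge$ and each $\exists$ into a join node, each $\vee$ and each $\forall$ into a union node, and then contracting every chain of like-typed nodes. I would then observe that under this contraction both the leaf $v_x$ and the entire subtree arising from $B$ remain inside the subtree rooted at the parent of $v_x$: contracting with the root of $B$ only promotes the leaves of $B$ to descendants of that parent, and contracting with ancestors only enlarges the subtree. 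Consequently the scope of $v_x$ always contains $\vgraphof{B}$, which in turn contains a literal because $B$ is a formula. This gives legality and finishes the induction.

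The main obstacle is precisely this last step. Scope is a \emph{global} notion, defined through the strong modules of the whole graph, and it is in general strictly larger than the vertex set $\{v_x\}\cup\vgraphof{B}$ of $\graphof{Q x.B}$---placing $Q x.B$ immediately under a like-typed connective merges its top node into an ancestor and so enlarges the scope. One therefore cannot argue that the scope is simply ``the binder plus its body''; the care needed is to show that, whatever the surrounding context does, the scope can only grow and never discards the body $B$, so that a literal of $B$ always survives in the scope.
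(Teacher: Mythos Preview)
Your proposal is correct and follows the same outline as the paper's own proof, which is extremely terse: the paper merely says that $\graphof A$ is a logical cograph by construction and Theorem~\ref{thm:cograph}, and that legality of all binders ``can be proved by structural induction on $A$'', with no further detail. You have filled in that induction carefully, and your identification of the only subtle point---that the scope of a binder is defined globally via strong modules and may properly contain $\{v_x\}\cup\vgraphof{B}$, yet can only grow under cotree contraction and hence always retains a literal of $B$---is exactly the content that the paper suppresses.
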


\begin{proof}
  That $\graphof A$ is a logical cograph follows immediately from the definition and Theorem~\ref{thm:cograph}.
  The fact that every binder of $\graphof A$ is legal can be proved by
  structural induction on $A$.
\end{proof}
\begin{remark}
Note that $\graphof A$ need not be a fograph if $A$ is not
rectified. If $A = (\forall x. px) \vlor (\forall x. qx)$, then
$\graphof A = \single x \ \single px \ \single x \ \single qx$,
the scope of each $x$-binder contains all the vertices, in particular,
the other $x$-binder. On the other hand, there are non-rectified
formulas which are translated to fographs by $\graphof\cdot$. For
example, in the graph of $(\exists x. px) \vlor (\exists x. qx)$,
both $x$-binders are legal, as they are not in each other's
scope:
$\edgepairpic{x}{px}\mkern9mu\edgepairpic{x}{qx}$.
\end{remark}
\noindent We define a congruence relation $\fequ$ on formulas, called \bfit{equivalence},
by the following equations:
\begin{equation}
  \label{eq:fequ}
  \begin{array}{c@{\hskip1.5em}c}
    A \vlan B \fequ B \vlan A
    &
    (A \vlan B) \vlan C \fequ A \vlan (B \vlan C) 
    \\
    A \vlor B \fequ B \vlor A
    &
    (A \vlor B) \vlor C \fequ A \vlor (B \vlor C)
    \\
    \forall x . \forall y . A \fequ \forall y . \forall x . A
    &
    \forall x . (A \vlor B) \fequ (\forall x . A) \vlor B
    \\
    \exists x . \exists y . A \fequ \exists y . \exists x . A
    &
    \exists x . (A \vlan B) \fequ (\exists x . A) \vlan B
  \end{array}
\end{equation}
where $x$ must not be free in $B$ in the last two equations.
\begin{thm}[{\cite[\S10]{hughes:fopws}}]
  Let $A,B$ be rectified formulas. Then
  $$
  A\fequ B \iff \graphof A =\graphof B
  $$
\end{thm}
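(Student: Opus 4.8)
The plan is to prove the two directions separately. Soundness ($A\fequ B\implies\graphof A=\graphof B$) is a direct check: since $\graphof\cdot$ is built compositionally from $\single\cdot$, $+$ and $\times$, replacing a subformula by one with the same graph leaves the whole graph unchanged, so it suffices to verify the eight generating equations of \eqref{eq:fequ}. The associativity and commutativity equations for $\vlan,\vlor$ map directly to those of $\times$ and $+$; for the quantifiers one unfolds the definitions, e.g.\ $\forall x.\forall y.A$ and $\forall y.\forall x.A$ both give $\single x+\single y+\graphof A$ up to commutativity of $+$, and $\forall x.(A\vlor B)$ and $(\forall x.A)\vlor B$ both give $\single x+\graphof A+\graphof B$ up to associativity of $+$ (the $\exists$-cases are identical with $\times$ in place of $+$). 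The substance of the theorem is completeness, which I would prove by induction on $|\vG|$ where $\gG=\graphof A=\graphof B$.

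First I would record the connectivity dichotomy for cographs implicit in Theorem~\ref{thm:cograph}: a cograph with at least two vertices is either disconnected — a $+$ of its (at least two) connected components — or has disconnected complement, in which case it is a $\times$ of its co-components; exactly one case holds, and these (co-)components are canonically determined by $\gG$. In the base case $\gG$ is a single vertex, forcing $A$ and $B$ to be the same atom. In the inductive step, suppose $\gG$ is disconnected with components $\gG_1,\dots,\gG_n$ ($n\ge 2$); the join case is dual under De Morgan negation (which swaps $\vlor/\vlan$, $\forall/\exists$ and $+/\times$), so I treat only this case. Because $A\vlan B$ and $\exists x.A'$ always produce connected graphs, the outermost symbol of any formula mapping onto a disconnected graph must be $\vlor$ or $\forall$; peeling these off shows that the top-level $\vlor$/$\forall$-structure of $A$ exhibits $\gG$ as a union of singleton binders (from top-level $\forall$'s) and connected pieces (from $\vlan$/$\exists$/atom leaves), and since each such piece is connected, these pieces are exactly the components $\gG_1,\dots,\gG_n$.

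The key manoeuvre is then to normalise. Using $\vlor$-associativity and $\vlor$-commutativity to flatten the disjunction, the $\forall$–$\forall$ swap to order the binder prefix, and crucially the extrusion law $\forall x.(A\vlor B)\fequ(\forall x.A)\vlor B$ to pull every top-level universal binder to the front, I would show $A\fequ \forall x_{k_1}\cdots\forall x_{k_r}.(C_1\vlor\cdots\vlor C_s)$ in a fixed canonical order, where the $C_j$ are the connected non-binder components and the $x_{k_i}$ label the singleton universal-binder components. Here the side condition ``$x$ not free in $B$'' of the extrusion law is exactly what rectification guarantees: each bound variable is distinct and occurs only inside its own scope, so it may be moved freely across the disjuncts it does not bind. (At least one $C_j$ exists because a logical cograph must contain a literal.) Performing the same normalisation on $B$ yields a form with the same binder prefix (the variables are read off $\gG$ and are distinct by rectification) and a disjunction whose components, as a multiset of graphs, equal $\{\gG_1,\dots,\gG_n\}$; since each $\gG_i$ is strictly smaller than $\gG$, the induction hypothesis makes matching disjuncts $\fequ$-equivalent, and $\vlor$-commutativity aligns their order. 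Hence $A\fequ B$.

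The main obstacle I anticipate is precisely the bookkeeping around the binder-singleton components and the extrusion side conditions: a lone universal binder is a legitimate component of the graph but corresponds to no subformula on its own, so it must be re-attached as a $\forall$-prefix, and one must verify that rectification always licenses the extrusion moves needed to place it canonically. Making the peeling-and-flattening sub-induction precise — so that the top-level union forest of the formula is genuinely put in bijection with the connected components of $\gG$ — is the step requiring the most care.
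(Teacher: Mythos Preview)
Your proposal is correct and carefully fleshes out what the paper dismisses in one line as ``a straightforward structural induction on formulas.'' Your induction on $|\vG|$ together with the cograph connectivity dichotomy, the peeling of the top $\vlor/\forall$ (dually $\vlan/\exists$) layer, and the use of rectification to license the extrusion side conditions is exactly the natural way to make that induction go through; there is nothing to compare, since the paper gives no further detail.
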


\begin{proof}
  A straightforward structural induction on formulas.
\end{proof}

\begin{example}
  \mbox{$\veedrinkerformula$} $\fequ$ \mbox{$\variantveedrinkerformula$}, and both formulas
  have the same (rectified) fograph
  $\drinkergraph$, below-left.
  \begin{equation*}
      \drinkergraphpic
      \hspace{8ex}
      \drinkercotreepic
      \hspace{8ex}
      \drinkerbindinggraphpic
  \end{equation*}
  Above-center we show the \emph{cotree} of the underlying
  cograph (illustrating the idea behind Theorem~\ref{thm:cograph}) and
  above-right is its binding graph $\bD$.
\end{example}


\section{First-order Combinatorial Proofs}\label{sec:focp}


\subsection{Fonets}

Two atoms are \bfit{pre-dual} if they are not $\ttt$ or $\fff$, and their predicate symbols are dual
(e.g.\ $p(x, y)$ and $\dual{p}(y, z)$) and two literals are \bfit{pre-dual} if their
labels (atoms) are pre-dual. A \bfit{linked fograph} $\tuple{\gC,\linkingof\gC}$ is a coloured fograph $\gC$ such
that every colour (i.e., equivalence class of~$\linkingof\gC$), called a \bfit{link}, consists of two pre-dual literals, and
every literal is either $\ttt$-labelled or in a link. Hence, in a linked fograph no vertex is labelled~$\fff$.

Let $\gC$ be a linked fograph. The set of links can be seen as a unification problem
by identifying dual predicate symbols. A \bfit{dualizer} of $\gC$ is a substitution $\delta$
unifying all the links of~$\gC$. Since a first-order unification problem is either
unsolvable or has a most general unifier, we can define the notion of \bfit{most
general dualizer}. A \bfit{dependency} is a pair $\set{\single x, \single y}$ of an
existential binder $\single x$ and a universal binder $\single y$ such that the most
general dualizer assigns to $x$ a term containing $y$. A \bfit{leap} is either a
link or a dependency. The \bfit{leap graph} $\lgC$ of $\gC$ is the undirected
graph $\tuple{\vC, \lpC}$ where $\lpC$ is the set of leaps of $\gC$.
A vertex set $W \subseteq \vC$ induces a \bfit{matching} in
$\gC$ if $W\neq\emptyset$ and 
for all $w \in W$, $N(w) \cap W$ is a singleton. We say that $W$ induces a
\bfit{bimatching} in $\gC$ if it induces a matching in $\gC$ and a matching in $\lgC$.

\begin{definition}[{\cite[\S5]{hughes:fopws}}]\label{def:fonet}
A \bfit{first-order net} or \bfit{fonet} is a linked fograph which
has a dualizer but no induced bimatching.  
\end{definition}
\begin{figure}[!t]
  \begin{center}
    \vspace{-3ex}
    \begin{tikzpicture}[graph]
      \begin{scope}[xshift=-2.2cm]\twolinkfograph\end{scope}
        \begin{scope}[xshift=2.2cm]\twolinkleapgraph\end{scope}
    \end{tikzpicture}\vspace{-4ex}
  \end{center}%
  \caption{\label{fig:leap}A fonet (left) with
    dualizer $\protect\twolinkassignment$
    and its
    leap graph (right).}
\end{figure}%
\noindent
Figure~\ref{fig:leap} shows a fonet with its dualizer and leap graph.%
%
%
\begin{figure*}%
  $$
  \begin{tikzpicture}\cptwoonformula
  \end{tikzpicture}
  \qquad
  \begin{tikzpicture}\cponeonformula
  \end{tikzpicture}
  \qquad
  \begin{tikzpicture}\cpthreeonformula
  \end{tikzpicture}
  \qquad
  \begin{tikzpicture}\cpfouronformula
  \end{tikzpicture}
  $$
  \vskip-2ex
  \caption{Four combinatorial proofs, each shown
    above the formula proved.  Here $x$ and $y$ are variables, $f$ is
    a unary function symbol, $a$ and $b$ are constants (nullary
    function symbols), $p$ is a unary predicate symbol, and $q$ is a
    binary predicate symbol. For each skew bifibration $\phi$,
    the variable substitution $\protect\rsubstof\phi$ is an identity,
    thus we can omit labels from each (coloured) source fograph
    (since the label of $v$ in the source is that of $\phi(v)$ in the target).}
  \label{fig:cps}
\end{figure*}
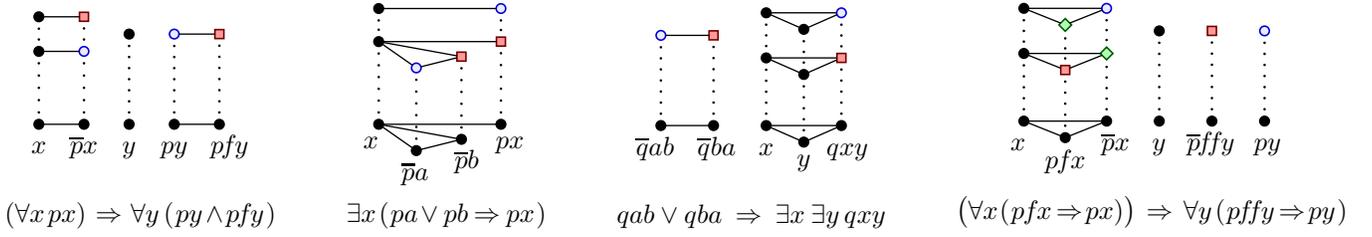%
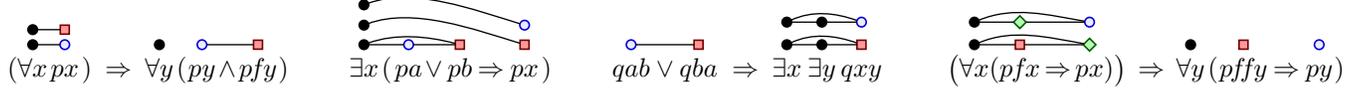
\begin{figure*}\vspace{-2ex}%
  $$
  \begin{tikzpicture}
    \cptwoinline
  \end{tikzpicture}
  \qquad\;
  \begin{tikzpicture}
    \cponeinline
  \end{tikzpicture}
  \qquad\;
  \begin{tikzpicture}
    \cpthreeinline
  \end{tikzpicture}
  \qquad\;
  \begin{tikzpicture}
    \cpfourinline
  \end{tikzpicture}
  $$
  \vskip-2ex
  \caption{Condensed forms of the four combinatorial proofs in Figure~\ref{fig:cps}. We do not show the lower graph, and indicate the mapping by the position of the vertices of the upper graph.}
  \label{fig:cps-condensed}
\end{figure*}%

\subsection{Skew Bifibrations}

A graph homomorphism $\phi\colon \tuple{\vG, \eG} \rightarrow
\tuple{\vH, \eH}$ is a \bfit{fibration} \cite{grothendieck:fibration,Gray:fibration} if
for all $v \in \vG$ and
$w\phi(v) \in \eH$, there exists a unique $\tilde{w} \in \vG$ such
that $\tilde{w}v \in \eG$ and $\phi(\tilde{w}) = w$ (indicated below-left),
and is a \bfit{skew fibration} \cite[\S3]{hughes:pws} if
for all $v \in \vG$ and $w\phi(v) \in \eH$
there exists $\tilde{w} \in \vG$ such that $\tilde{w}v \in \eG$ and
$\phi(\tilde{w})w \notin \eH$ (indicated below-centre).
A directed graph homomorphism is a
\bfit{fibration} if for all $v \in \vG$ and $\pair{w,\phi(v)} \in
\eH$, there exists a unique $\tilde{w} \in \vG$ such that $\pair{\tilde{w},v}
\in \eG$ and $\phi(\tilde{w}) = w$ (indicated below-right).
\liftingdiagrams
A \bfit{fograph homomorphism} $\phi=\tuple{\phi,\rsubstof\phi}$ is a
pair where $\phi\colon\gG\to\gH$ is a graph homomorphism between the
underlying graphs, and $\rsubstof\phi$, also called the
\bfit{substitution induced by $\phi$}, is a variable renaming
such that
for all $v\in\vG$ we have
$\labelof{\phi(v)}=\rsubstof\phi(\labelof v)$,
and $\rsubstof\phi$ is the identity on variables not in $\gG$.
Note that $\phi$ necessarily maps binders to binders
and literals to literals.
Since $\rsubstof\phi$ is fully determined by $\phi$ alone, we often
leave $\rsubstof\phi$ implicit.
A fograph homomorphism $\phi\colon\gG \rightarrow \gH$
\bfit{preserves existentials}
if for all existential binders $b$ in
$\gG$, the binder $\phi(b)$ is existential in $\gH$.
\begin{definition}[{\cite[\S4]{hughes:fopws}}]
  Let $\gG$ and $\gH$ be fographs. A \bfit{skew bifibration}
  $\phi\colon\gG\to\gH$ is an existential-preserving fograph
  homomorphism that is a skew fibration on $\tuple{\vG, \eG} \to
  \tuple{\vH, \eH}$ and a fibration on the binding graphs $\bG\to\bH$.
\end{definition}
\begin{example}
  Below-left is a skew bifibration, whose binding fibration
  is below-centre. When the labels on the source fograph can be inferred
  (modulo renaming),
  we often omit the labelling in the upper graph, as below-right.
  \begin{center}\begin{math}
  \scalebox{.9}{
  \begin{tikzpicture}[graph]\drinkerfiblabelledpair{x}{x}
  \end{tikzpicture}
  \quad
  \begin{tikzpicture}[graph]\drinkerbindingfiblabelled{x}{x}
  \end{tikzpicture}
  \quad
  \begin{tikzpicture}[graph]\drinkerfib
  \end{tikzpicture}}
\end{math}\end{center}\end{example}
\begin{definition}[{\cite[\S6]{hughes:fopws}}]
A \bfit{first-order combinatorial proof} (\bfit{FOCP}) of a fograph $\gG$ is a skew
bifibration $\phi\colon \gC \rightarrow \gG$ where $\gC$ is a fonet. A \bfit{first-order
combinatorial proof} of a formula $A$ is a combinatorial proof of its graph
$\graphof{A}$.
\end{definition}
\noindent Figure~\ref{fig:cps} shows examples of
FOCPs (taken from~\cite{hughes:fopws}), each above the formula it proves.
The same FOCPs are in
Figure~\ref{fig:cps-condensed} in \emph{condensed form}, with the formula
graph left implicit.

\begin{thm}[{\cite[\S6]{hughes:fopws}}]
  \label{thm:FOCP}
  FOCPs are sound and complete for first-order logic.
\end{thm}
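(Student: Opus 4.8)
The plan is to prove both directions by exploiting the linear/resource decomposition that is the paper's central theme, reducing everything to the multiplicative fragment $\FOMLL$ together with the structural rules $\weakr$ and $\conr$. Throughout I use the fograph--formula correspondence: every fograph is $\graphof{A}$ for a rectified formula $A$ unique up to $\fequ$, so I may speak of ``the formula of'' a fograph.

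For soundness, let $\phi\colon\gC\to\gG$ be a FOCP, so $\gC$ is a fonet and $\gG=\graphof{A}$. First I would establish a sequentialization lemma: every fonet is provable in $\FOMLL$. Here the dualizer $\dualizer$ supplies the witnessing terms for the $\exists$ rule, the leap graph $\lgC$ records the dependencies that the eigenvariable condition of the $\forall$ rule must respect, and the absence of an induced bimatching is exactly the correctness criterion ensuring that at each stage a sequentialization step is available (splitting along the underlying cograph structure, or peeling a legal universal binder) without breaking a link or creating a forbidden cycle; cut-freeness is available by Theorem~\ref{thm:cutelim}. Second I would prove a skew-bifibration lemma: if $\phi\colon\gC\to\gG$ is a skew bifibration, then the formula of $\gG$ is derivable from the formula of $\gC$ using only $\weakr$, $\conr$ and the rearrangements of \eqref{eq:fequ}. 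The skew-fibration condition produces the weakenings and contractions on the propositional skeleton, exactly as in the propositional case, while the fibration on the binding graphs $\bG\to\bH$ guarantees that these structural steps never violate a scope or a freshness constraint. Composing the two lemmas yields a $\FOLK$ proof of $A$, so $A$ is valid by soundness of $\FOLK$.

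For completeness, assume $A$ is valid. By completeness of $\FOLK$ and Theorem~\ref{thm:cutelim} there is a cut-free $\FOLK$ proof of $A$. The key tool is a decomposition theorem: rearrange the proof so that all $\weakr$ and $\conr$ instances are pushed below a purely linear $\FOMLL$ derivation. The linear part determines a linked fograph $\gC$ whose links come from the axioms; one checks that $\gC$ admits a dualizer (the substitution recorded by the $\exists$ rules) and has no induced bimatching (otherwise the linear part could not have been sequentialized), so $\gC$ is a fonet. The resource block then assembles a skew bifibration $\phi\colon\gC\to\graphof{A}$, the fibration on $\bG\to\bH$ following from the eigenvariable discipline of the original proof. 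Hence $\phi$ is a FOCP of $A$.

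The main obstacle in both directions is the first-order content, i.e.\ the correct treatment of quantifiers. Propositionally these steps reduce to known facts about cographs (Theorem~\ref{thm:cograph}) and skew fibrations, but here one must show that the most general dualizer and the dependency relation it induces in $\lgC$ interact correctly with the freshness side condition of the $\forall$ rule, so that ``no induced bimatching'' is genuinely equivalent to sequentializability and this equivalence is preserved by the decomposition. Aligning the binding fibration $\bG\to\bH$ with the order in which universal binders are introduced is the delicate point that, as the introduction notes, fails for the plain sequent calculus and is precisely what motivates routing the argument through $\FOKS$.
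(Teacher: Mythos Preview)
Your proposal is correct and follows essentially the same route as the paper: the paper derives Theorem~\ref{thm:FOCP} from Theorems~\ref{thm:KS1}, \ref{thm:decomposition} and~\ref{thm:CP-DI}, which unpack into exactly the pieces you describe---sequentialization of fonets into the linear fragment (Theorems~\ref{thm:fonet->MLL1} and~\ref{thm:MLL1->fonet}), the correspondence between skew bifibrations and derivations in the structural rules (Lemmas~\ref{lem:cw->skew} and~\ref{lem:skew->cw}), and the decomposition separating the two (Theorem~\ref{thm:LK1-decompose}). The paper makes the passage through $\FOKS$ explicit throughout, whereas you leave it as a closing remark, but the architecture is the same; one small correction is that the binding-fibration property in the completeness direction comes from the shape of the structural derivation (the quantifier-contraction and quantifier-weakening rules), not from the eigenvariable condition of the linear part.
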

\begin{remark}
  Our definition of FOCP is slightly more lax than the original
  definition of~\cite{hughes:fopws}, as we allow for a variable
  renaming $\rsubstof\phi$ which was restricted to be the identity
  in~\cite{hughes:fopws}.
\end{remark}


\section{First-order Deep Inference system $\FOKS$}\label{sec:foks}

\begin{figure}
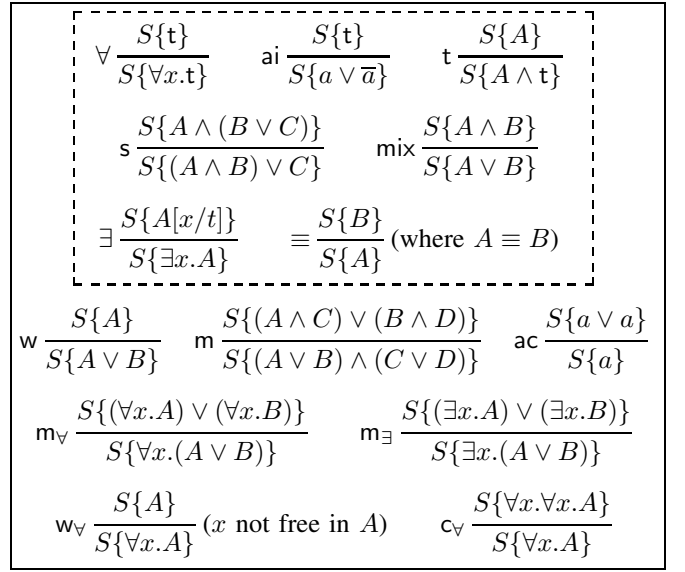

  \begin{center}
    \framebox{%
      $
      \begin{array}{@{}c@{}}
        \dbox{%
          $
          \begin{array}{c}
            \vlinf{\faD}{}{\Scons{\forall x.\ttt}}{\Scons\ttt}
            \qquad
            \vlinf{\aiD}{}{\Scons{a\vlor\cneg a}}{\Scons\ttt}
            \qquad
            \vlinf{\tttD}{}{\Scons{A\vlan\ttt}}{\Scons A}
            \\ \\[-1ex]
            \vlinf{\sw}{}{\Scons{(A\vlan B)\vlor C}}{\Scons{A\vlan (B\vlor C)}}
            \qquad
            \vlinf{\mix}{}{\Scons{A\vlor B}}{\Scons{A\vlan B}}            
            \\ \\[-1ex]
            \vlinf{\exD}{}{\Scons{\exists x.A}}{\Scons{A\ssubst x t}}
            \qquad
            \vlinf{\fequ}{\text{(where $A\fequ B$)}}{\Scons{A}}{\Scons{B}}
          \end{array}
          $
        }
        \\ \\[-1ex]
        \vlinf{\wrD}{}{\Scons{A\vlor B}}{\Scons{A}}
        \quad\;
        \vlinf{\me}{}{\Scons{(A\vlor B)\vlan(C\vlor D)}}{\Scons{(A\vlan C)\vlor(B\vlan D)}}
        \quad\;
        \vlinf{\acD}{}{\Scons{a}}{\Scons{a\vlor a}}
        \\ \\[-1ex]
        \vlinf{\mfaD}{}{\Scons{\forall x.(A\vlor B)}}{\Scons{(\forall x.A)\vlor(\forall x.B)}}
        \qquad
        \vlinf{\mexD}{}{\Scons{\exists x.(A\vlor B)}}{\Scons{(\exists x.A)\vlor(\exists x.B)}}        
        \\ \\[-1ex]
        \vlinf{\wfaD}{\text{($x$ not free in $A$)}}{\Scons{\forall x.A}}{\Scons{A}}
        \qquad
        \vlinf{\cfaD}{}{\Scons{\forall x. A}}{\Scons{\forall x.\forall x.A}}
      \end{array}
      $
    }
  \end{center}
  \caption{Deep inference systems $\FOKS$ (all rules) and $\FOMLS$ (rules in the dashed box)}
  \label{fig:KS1}
  \vskip-.2ex
\end{figure}
In contrast to standard proof formalisms, like sequent calculi or
tableaux, where inference rules decompose the principal formula along
its root connective, \emph{deep inference rules} apply like
rewriting rules inside any (positive) formula or sequent
\bfit{context}, which is denoted by $\Sconhole$, and which is a
formula (resp.~sequent) with exactly one occurrence of the \bfit{hole}
$\conhole$ in the position of an atom. Then $\Scons A$ is the result
of replacing the hole $\conhole$ in $\Sconhole$ with $A$.

Figure~\ref{fig:KS1} shows the inference rules for the deep inference
system~$\FOKS$ introduced in this paper. It is a
variation of the systems presented by Br\"unnler~\cite{brunnler:phd}
and Ralph~\cite{ralph:phd} in their PhD-theses. The main differences
are (i) the explicit presence of the $\mixr$-rule, (ii) a different
choice of how the formula equivalence $\fequ$ is defined, (iii) an
explicit rule for the equivalence, and (iv) new inference rules
$\wfaD$ and $\cfaD$. The reason behind these design choices is to
obtain the correspondence with combinatorial proofs and the full
completeness result.

We consider here only the cut-free fragment, as cut-elimination for
deep inference systems has already been discussed
elsewhere (e.g.~\cite{brunnler:06:herbrand,alertubella:guglielmi:18}).%
\footnote{In the deep
inference literature, the cut-free fragment is also called the
\emph{down-fragment}. But as we do not discuss the \emph{up-fragment}
here, we omit the down-arrows $\downarrow$ in the rule names.}
As with the sequent system $\FOLK$, we also need for $\FOKS$ the
\emph{linear fragment}, $\FOMLS$, and that is shown
in Figure~\ref{fig:KS1} in the dashed box.

We write $\vlder{\sysS}{\Deri}{A}{B}$ to denote a derivation $\Deri$
from $B$ to $A$ using the rules from system $\sysS$. A formula $A$ is
\bfit{provable} in a system $\sysS$ if there is a derivation in
$\sysS$ from $\ttt$ to $A$.

We will for some results also employ the
general (non-atomic) version of the contraction rule:
\begin{equation}
  \vlinf{\cD}{}{\Scons{A}}{\Scons{A\vlor A}}  
\end{equation}


\section{Main Results}
\label{sec:main}

We state the main results of this paper here,
and prove them in later sections. The
first is routine and expected, but must be proved nonetheless:

\begin{thm}\label{thm:KS1}
  $\FOKS$ is sound and complete for first-order logic.
\end{thm}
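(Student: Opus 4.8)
The plan is to prove the two halves separately, each by reduction to an already-available fact: soundness by a local, rule-by-rule check, and completeness by simulating the sequent calculus $\FOLK$, which is complete by the first (unnamed) theorem of Section~\ref{sec:fologic}. The underlying reason both directions are "routine" is that deep inference rules act in positive contexts and hence interact cleanly with both logical consequence (for soundness) and with the structural bureaucracy of sequents (for completeness).

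For soundness I would first record a \emph{context monotonicity lemma}: for every context $\Sconhole$ and all formulas $A,B$, if $A\implies B$ is valid then $\Scons A\implies\Scons B$ is valid. This follows by induction on $\Sconhole$, since formulas are in negation normal form and the hole always occurs positively, so it is surrounded only by the monotone operations $\vlan,\vlor,\exists,\forall$. Next I would check that each rule of $\FOKS$ is \emph{locally sound}, i.e.\ premise implies conclusion: $\ttt\implies\forall x.\ttt$ for $\faD$, $\ttt\implies a\vlor\cneg a$ for $\aiD$, the linear distributivity $A\vlan(B\vlor C)\implies(A\vlan B)\vlor C$ for $\sw$, $A\vlan B\implies A\vlor B$ for $\mix$, $A\ssubst x t\implies\exists x.A$ for $\exD$, $A\implies\forall x.A$ (with $x$ not free in $A$) for $\wfaD$, and similarly for $\me,\acD,\mfaD,\mexD,\cfaD,\wrD$; the rule $\fequ$ is sound because $\fequ$-equivalent formulas are logically equivalent. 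Combining the two facts, every step of a derivation from $\ttt$ to $A$ yields a valid implication from the line above to the line below, so $\ttt\implies A$ is valid, i.e.\ $A$ is valid.

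For completeness, since $\FOLK$ is complete it suffices to translate any $\FOLK$ proof of a sequent $\sqn\Gamma$ into a $\FOKS$ derivation $\vlder{\FOKS}{\Psi}{\form\Gamma}{\ttt}$, by induction on the proof. The axiom $\axr$ becomes $\aiD$; the $\vlor$ rule is absorbed by $\fequ$ (associativity of $\vlor$); $\fff$ and $\weakr$ become $\wrD$; and $\exists$ becomes $\exD$ applied in the context $\form\Gamma\vlor\conhole$. Here I rely on the standard transport property: from $\vlder{\FOKS}{\Phi}{X}{Y}$ and any context $\Sconhole$ one obtains $\vlder{\FOKS}{\Phi'}{\Scons X}{\Scons Y}$ by replaying each step inside $\Sconhole$. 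The two-premise rules $\vlan$ and $\mixr$ are then handled by first splitting $\ttt$ into $\ttt\vlan\ttt$ via $\tttD$, transporting the two sub-derivations into the contexts $\conhole\vlan\ttt$ and $(\form\Gamma\vlor A)\vlan\conhole$, and collapsing the resulting $(\form\Gamma\vlor A)\vlan(B\vlor\form\Delta)$ into the required disjunction using $\sw$ (for $\vlan$) and additionally $\mix$ (for $\mixr$), finishing with $\fequ$ to reorder. Contraction $\conr$ is simulated by the general rule $\cD$ in context, which I would first prove derivable in $\FOKS$ from $\acD$, $\me$, $\mfaD$, $\mexD$ and $\fequ$ by induction on the contracted formula --- the role of the medial rule $\me$ being precisely to push contraction down to the atomic case.

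The crux is the $\forall$ rule, whose soundness in $\FOLK$ rests on the eigenvariable condition rather than on a local implication, so it cannot be simulated by a single locally-sound $\FOKS$ step. The plan is to use that deep inference commutes with quantification: from the inductively obtained $\vlder{\FOKS}{\Phi}{\form\Gamma\vlor A}{\ttt}$ I would form the derivation obtained by wrapping every line in $\forall x$, which is again a valid $\FOKS$ derivation $\vlder{\FOKS}{\Phi''}{\forall x.(\form\Gamma\vlor A)}{\forall x.\ttt}$, since every rule instance stays an instance once its context is prefixed by $\forall x$ (rectification and the eigenvariable condition ensure no variable clashes and that side conditions survive). Precomposing with $\faD$ (to reach $\forall x.\ttt$ from $\ttt$) and postcomposing with the $\fequ$-step $\forall x.(\form\Gamma\vlor A)\fequ\form\Gamma\vlor\forall x.A$ (valid because $x$ is not free in $\form\Gamma$) yields the desired derivation of $\form\Gamma\vlor\forall x.A$. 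I expect this $\forall$ case --- and in particular the verification that wrapping a whole derivation in $\forall x$ is well formed under the variable conventions --- to be the main obstacle; the remaining cases are routine bookkeeping.
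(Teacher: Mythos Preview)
Your proposal is correct. The completeness half is essentially identical to the paper's argument: the paper also simulates $\FOLK$ in $\FOKS$ by induction on the sequent proof, and handles the $\forall$ case exactly as you do, by prefixing every line of the sub-derivation with $\forall x$ and closing off with $\faD$ at the top and the $\fequ$-step $\forall x.(\form\Gamma\vlor A)\fequ\form\Gamma\vlor\forall x.A$ at the bottom (this is the paper's Lemma~\ref{lem:MLL1->MLS1}, carried over verbatim from $\FOMLL/\FOMLS$ to $\FOLK/\FOKS$). Your remark that the general contraction $\cD$ must first be derived from $\acD,\me,\mfaD,\mexD,\fequ$ is also exactly the paper's Lemma~\ref{lem:ac}.

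The soundness half is where you diverge. You give a direct semantic argument (context monotonicity plus local soundness of each rule), which is perfectly valid and arguably the most elementary route. The paper instead proves soundness \emph{syntactically}: for each $\FOKS$ rule $\vlinf{\rr}{}{\Scons B}{\Scons A}$ it builds an $\FOLK$ proof of $\sqn{\cneg A,B}$ (Lemma~\ref{lem:shallow}), lifts it through the context to $\sqn{\cneg{\Scons A},\Scons B}$ (Lemma~\ref{lem:context}), composes the steps with cuts, and then invokes cut elimination for $\FOLK$ (Theorem~\ref{thm:cutelim}). The payoff of the paper's longer route is a two-way syntactic translation (Theorem~\ref{thm:LK1-KS1}: $\sqn\Gamma$ provable in $\FOLK$ iff $\form\Gamma$ provable in $\FOKS$), which is reused later, e.g.\ in the proof of Theorem~\ref{thm:decompositionA}. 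Your semantic argument is cleaner if one only wants Theorem~\ref{thm:KS1} itself, but it does not by itself yield that syntactic equivalence.
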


Our second result is more surprising, as it is a very strong
decomposition result for first-order logic.

\begin{thm}\label{thm:decomposition}
  For every derivation $\vlder{\FOKS}{\Deri}{A}{\ttt}$ there are $\fff$-free formulas $A_1,\ldots,A_5$ and a derivation
  \begin{equation*}
    \scalebox{.9}{$
    \vlderivation{
      \vlde{\set{\wrD,\wfaD,\fequ}}{}{A}{
        \vlde{\set{\acD,\cfaD}}{}{A_1}{
          \vlde{\set{\me,\mfaD,\mexD,\fequ}}{}{A_2}{
            \vlde{\set{\exD}}{}{A_3}{
              \vlde{\set{\sw,\mix,\fequ}}{}{A_4}{
                \vlde{\set{\faD,\aiD,\tttD}}{}{A_5}{
                  \vlhy{\ttt}}}}}}}}
    $}
  \end{equation*}
\end{thm}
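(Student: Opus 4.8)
The plan is to read the statement as a \emph{sorting} problem: from the premise $\ttt$ down to the conclusion $A$, the six rule groups must be arranged in the fixed order creation $\{\faD,\aiD,\tttD\}$, multiplicative $\{\sw,\mix\}$, existential $\{\exD\}$, medial $\{\me,\mfaD,\mexD\}$, contraction $\{\acD,\cfaD\}$, weakening $\{\wrD,\wfaD\}$, with the congruence $\fequ$ acting as flexible glue that can be absorbed into any of the three rearranging layers. I would prove this by a chain of \emph{separation lemmas}, each of the shape: every $\FOKS$ derivation can be rewritten so that all instances of one group occur in a single contiguous block at the appropriate end. Concretely, first push every $\wrD,\wfaD$ to the bottom; then, on the weakening-free remainder, push every $\acD,\cfaD$ down to just above the weakening block; then push the medials down to just above contraction; then the $\exD$ instances down to just above the medials; and finally split the residual $\{\sw,\mix,\faD,\aiD,\tttD,\fequ\}$ block into the multiplicative part sitting over the atomic-creation part. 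Each separation lemma is proved by induction on the length of the derivation, the inductive step being a single \emph{local permutation} that swaps one adjacent out-of-order pair of rule instances (possibly at the cost of auxiliary same-group or lower-group steps and some $\fequ$-reassociation).

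The technical engine is thus the family of local permutation lemmas. The design feature of $\FOKS$ I would exploit is that \emph{no contraction rule duplicates a compound subformula}: $\acD$ merges two atom occurrences and $\cfaD$ fuses two adjacent $\forall x$-binders, while all genuine doubling of structure is performed by the linear medials $\me,\mfaD,\mexD$, which merely rearrange connectives and quantifiers and create nothing. Consequently, permuting a contraction past a neighbour never multiplies that neighbour, and permuting a medial past a neighbour never raises multiplicities either; every local swap is a bounded, essentially linear rewrite. This keeps a simple termination measure available — for instance the multiset over all rule instances of their number of \emph{inversions} (later-group instances sitting above them), compared in the multiset ordering — which strictly decreases at each swap, so each separation lemma terminates.

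For the $\fff$-freeness of $A_1,\dots,A_5$ I would argue that, among the $\FOKS$ rules, only $\wrD$ (and trivially $\wfaD$) can introduce an $\fff$ when read downward, that no down-rule can erase an $\fff$ once present, and that every equation of $\fequ$ in~\eqref{eq:fequ} preserves $\fff$-freeness; the lone degenerate case, an $\aiD$ instance on the constant atom $\fff$, may be normalized away or treated as a weakening. Since the premise $\ttt$ is $\fff$-free, every $\fff$ occurring in $A$ is ``born'' in the weakening block, so once weakening has been separated to the bottom the formula $A_1$ at the top of that block, and a fortiori $A_2,\dots,A_5$, are automatically $\fff$-free.

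The main obstacle will be the first-order permutations, namely the swaps that move $\exD$ and the quantifier rules past one another. Commuting an $\exD$ step (which replaces $\exists x.A$ by $A\ssubst{x}{t}$) downward past a quantifier-medial $\mfaD$ or $\mexD$, or past the generic $\me$, forces one to track the witness term $t$ and to respect the scope and strong-module structure; and commuting $\cfaD$ past the rearranging rules must never violate the side conditions ``$x$ not free in $\dots$'' attached to $\wfaD$ and to the quantifier-shift equations of $\fequ$. Avoiding variable capture in these cases — and, where a direct swap is blocked, first rephrasing the offending step via the appropriate $\fequ$ equation so that the two rules act on disjoint scopes — is where the real case analysis lives. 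By contrast, the purely propositional permutations among $\sw,\mix,\me,\acD,\wrD$ follow the classical deep-inference decomposition of Stra\ss burger and the pattern of Br\"unnler's and Ralph's theses, and should present no new difficulty. As a guide to why this particular stratification is the correct one, the six layers mirror the canonical factorization of a skew bifibration underlying a combinatorial proof: the linear core builds the fonet, the $\exD$ layer realizes its dualizer, the medial-and-contraction layers realize the binding and cograph fibrations, and the weakening layer realizes the skew part.
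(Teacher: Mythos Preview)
Your direct permutation plan has a genuine gap, and it lies precisely where you expect no difficulty: the propositional interaction of $\me$ with $\sw$. Medial does \emph{not} permute below switch. In the critical overlap
\[
(A\vlan C)\vlor(B\vlan D)\ \to_{\me}\ (A\vlor B)\vlan(C\vlor D)\ \to_{\sw}\ ((A\vlor B)\vlan C)\vlor D,
\]
when $C$ is atomic the target contains no subformula of medial-conclusion shape $(\cdot\vlor\cdot)\vlan(\cdot\vlor\cdot)$, so no medial can be the last step; hence any rewriting with $\sw,\mix,\fequ$ first and medials last would have to derive the target in $\{\sw,\mix,\fequ\}$ alone, which is impossible since the implication is not $\MLLm$-valid. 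A concrete witness is $A=p\vlor\bar p$, $C=q$, $B=\bar q$, $D=r\vlor\bar r$: the premise $P_1$ is $\FOMLS$-provable, so $P_3=(((p\vlor\bar p)\vlor\bar q)\vlan q)\vlor(r\vlor\bar r)$ is $\FOKS$-provable via exactly one $\me$ followed by one $\sw$ and \emph{no} weakening or contraction; yet $P_3$ is not $\FOMLS$-provable, and since it has no $a\vlor a$, no $\forall$, and no medial-shaped $\vlan$, the only decomposed proof of $P_3$ must introduce a $\wrD$ (e.g.\ $A_1=r\vlor\bar r$). Your scheme, which merely rearranges the existing rule instances with ``bounded, essentially linear'' local edits, cannot manufacture that weakening after step~1 has already closed the weakening layer, and your inversion-count measure gives no handle on re-opening it.

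The paper avoids this obstruction by refusing to permute inside $\FOKS$ at all for the coarse split. It translates the $\FOKS$ proof to $\FOLK$ via cut and then eliminates cuts (Theorem~\ref{thm:LK1-KS1}); cut elimination is exactly where the ``missing'' weakenings are created. In $\FOLK$ the shallow $\conr,\weakr$ permute painlessly to the bottom (Theorem~\ref{thm:LK1-decompose}), leaving an $\FOMLL$ top, which translates back to $\FOMLS$ (Lemma~\ref{lem:MLL1->MLS1}). Crucially that top half contains \emph{no medials}: medials only appear when the bottom $\{\wrD,\cD,\fequ\}$ block is atomised via Lemma~\ref{lem:cw-atomic}, and are then sorted by the purely structural permutations of Lemma~\ref{lem:cw-decomposition}, where $\sw$ is absent. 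Only after this detour are the two halves refined separately (Lemmas~\ref{lem:MLS1-decomposition} and~\ref{lem:cw-decomposition}), so $\me$ and $\sw$ never meet. Your appeal to ``the classical deep-inference decomposition of Stra\ss burger'' is therefore misplaced: those propositional results are themselves obtained through a sequent-calculus detour of the same kind, not by permuting $\me$ past $\sw$.
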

\noindent
This theorem is stronger than the existing decompositions for
first-order logic, which either separate only atomic contraction and
atomic weakening~\cite{brunnler:phd} or only
contraction~\cite{ralph:phd} or only the quantifiers in form of a
Herbrand theorem~\cite{brunnler:06:locality,ralph:phd}.

Theorem~\ref{thm:decomposition} is also the reason why we have the rules
$\wfaD$ and $\cfaD$ in system $\FOKS$, as these rules are derivable
with the other rules. However, they are needed to obtain this
decomposition.
%
  Figure~\ref{fig:example-decompose} shows an example of a decomposed derivation in $\FOKS$ of the
  formula $(\exists x. \cneg{p}x) \vlor (\forall y.(py \vlan pfy))$.
%
\begin{figure}
  \begin{equation*}
    \scalebox{.9}{$
    \vlderivation{
      \vlin{\acD}{}{(\exists x. \cneg{p}x) \vlor (\forall y.(py \vlan pfy))}{
        \vlin{\mexD}{}{(\exists x. (\cneg{p}x \vlor \cneg{p}x) \vlor (\forall
          y.(py \vlan pfy))}{
          \vlin{\fequ}{}{((\exists x. \cneg{p}x) \vlor (\exists x.\cneg{p}x))
            \vlor (\forall y.(py \vlan pfy)}{
            \vlin{\exists}{}{\forall y.(((\exists x. \cneg{p}x) \vlor (\exists
              x.\cneg{p}x)) \vlor (py \vlan pfy))}{
              \vlin{\exists}{}{\forall y.((\cneg{p}y \vlor (\exists x.\cneg{p}x))
                \vlor (py \vlan pfy))}{
                \vlin{\fequ}{}{\forall y.((\cneg{p}y \vlor \cneg{p}fy) \vlor
                  (py \vlan pfy))}{
              \vlin{\sw}{}{\forall y.(\cneg{p}y \vlor ((py \vlan pfy)
                \vlor \cneg{p}fy))}{
                \vlin{\fequ}{}{\forall y.(\cneg{p}y \vlor (py \vlan (pfy
                  \vlor \cneg{p}fy)))}{
                  \vlin{\aiD}{}{\forall y.((\cneg{p}y \vlor py) \vlan (pfy
                    \vlor \cneg{p}fy))}{
                    \vlin{\aiD}{}{\forall y.((\cneg{p}y \vlor py) \vlan
                      \ttt)}{
                      \vlin{\ttt}{}{\forall y.(\ttt \vlan \ttt)}{
                        \vlin{\forall}{}{\forall y.\ttt}{
                          \vlhy{\ttt}}}}}}}}}}}}}}
    $}
  \end{equation*}    
    \caption{Example derivation in decomposed form of Theorem~\ref{thm:decomposition}}
    \label{fig:example-decompose}
  \end{figure}

A weaker version of Theorem~\ref{thm:decomposition} will
also be useful:
\begin{thm}\label{thm:decompositionA}
  For every derivation $\rule{0ex}{4.7ex}\upsmash{\vlder{\FOKS}{\Deri}{A}{\ttt}}$
there is a formula~$A'$ with no occurrence of $\fff$ and a derivation
  \begin{equation*}
    \vlderivation{
      \vlde{\set{\wrD,\cD,\fequ}}{}{A}{
        \vlde{\FOMLS}{}{A'}{
          \vlhy{\ttt}}}}
  \end{equation*}
\end{thm}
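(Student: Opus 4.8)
The plan is to reduce Theorem~\ref{thm:decompositionA} to the separation of the linear core $\FOMLS$ from the resource-management rules, and then to absorb every non-linear rule of $\FOKS$ into weakening $\wrD$ and general contraction $\cD$. One can either invoke Theorem~\ref{thm:decomposition} directly --- its top three segments $\set{\faD,\aiD,\tttD}$, $\set{\sw,\mix,\fequ}$ and $\set{\exD}$ are already in $\FOMLS$, so their composite is a derivation $\vlder{\FOMLS}{}{A'}{\ttt}$ whose endpoint $A'$ we take as the interface formula --- or one can argue directly by permutation without the finer theorem. Either way the work is to show that the remaining rules, i.e.\ the contents of the lower three segments, can be realised using only $\set{\wrD,\cD,\fequ}$.

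First I would dispatch the medial family. The medial $\me$ is derivable from $\set{\wrD,\cD,\fequ}$ by weakening each conjunct of $(A\vlan C)\vlor(B\vlan D)$ up to the common shape $(A\vlor B)\vlan(C\vlor D)$ and then contracting the two resulting copies; the quantifier medials $\mfaD$ and $\mexD$ are handled identically, weakening inside the scope of the quantifier before contracting, and $\acD$ is a special case of $\cD$. Since $\wrD$ is the only rule that can introduce $\fff$ and it now lies strictly below $A'$, the interface formula $A'$ is automatically $\fff$-free, as required.

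The genuine obstacle is the universal weakening $\wfaD$ and the universal contraction $\cfaD$. These cannot be simulated inside $\set{\wrD,\cD,\fequ}$ at all: weakening, contraction and $\fequ$ never alter the quantifier prefix of the subformula they act on (indeed $\forall x.\forall x.A$ and $\forall x.A$ already have distinct fographs, so $\cfaD$ is not an instance of $\fequ$), whereas $\wfaD$ introduces a vacuous binder and $\cfaD$ deletes one. The fix is to push the $\forall$-introduction \emph{upward} rather than downward: an instance of $\wfaD$, say $\Scons{A}\to\Scons{\forall x.A}$, is permuted up past each rule that builds the content of $A$ --- this is sound because every such rule applies equally well inside the new binder's scope and $x$ stays fresh --- and the permutation terminates exactly when the content has been traced back to its $\ttt$-seed, at which point $\wfaD$ has become an instance of $\faD\in\FOMLS$; dually, an instance of $\cfaD$ is absorbed by merging the two stacked $\forall x$-introductions at their source. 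Establishing that this elimination terminates and commutes correctly with the other rules is the main difficulty of the theorem, and is precisely the reason $\wfaD$ and $\cfaD$ were taken as primitive rules of $\FOKS$.

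It then remains to commute $\wrD$ and $\cD$ downward below every rule of $\FOMLS$. All propositional cases are the standard deep-inference permutations; the only first-order case needing care is contraction past $\exD$. Here moving $\cD$ below $\exD$ forces the single existential step to be duplicated into two copies sharing the same witness $t$ --- which is legitimate precisely because the contracted disjuncts are syntactically equal, and which is consistent with the global ordering (existential above contraction) already visible in Theorem~\ref{thm:decomposition}. Once all of $\wrD$ and $\cD$ sit below all of $\FOMLS$, reading off the interface formula as $A'$ yields the claimed two-part derivation.
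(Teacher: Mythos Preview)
Your proposal has two concrete errors.

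First, invoking Theorem~\ref{thm:decomposition} is circular: in the paper that theorem is obtained \emph{from} Theorem~\ref{thm:decompositionA} (together with Lemmas~\ref{lem:MLS1-decomposition}, \ref{lem:cw-atomic} and~\ref{lem:cw-decomposition}), so you cannot appeal to it here.

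Second, and more importantly, your claim that $\wfaD$ and $\cfaD$ ``cannot be simulated inside $\set{\wrD,\cD,\fequ}$ at all'' is false. The equivalence $\fequ$ \emph{does} move quantifiers: it contains the law $\forall x.(A\vlor B)\fequ(\forall x.A)\vlor B$ when $x$ is not free in $B$. Lemma~\ref{lem:me} gives explicit derivations of both rules; e.g.\ for $\wfaD$ one weakens $A$ to $A\vlor(\forall x.A)$, rewrites via $\fequ$ to $\forall x.(A\vlor A)$ (legal since $x$ is not free in $A$), and contracts. Your ``push the $\forall$-introduction upward until it becomes $\faD$'' manoeuvre is therefore unnecessary, and your reading of why $\wfaD,\cfaD$ are primitive in $\FOKS$ is off: they are needed for the finer permutation argument of Lemma~\ref{lem:cw-decomposition}, not for the present theorem.

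The paper's actual argument takes a different route entirely, detouring through the sequent calculus: translate the $\FOKS$ proof to $\FOLK$ via Theorem~\ref{thm:LK1-KS1} (which implicitly uses cut elimination), separate weakening and contraction at the sequent level by Theorem~\ref{thm:LK1-decompose}, and translate the remaining $\FOMLL$ top back to $\FOMLS$ via Lemma~\ref{lem:MLL1->MLS1}. This bypasses the deep-inference permutation analysis you sketch. Your direct approach, once corrected (replace every non-linear rule by its $\set{\wrD,\cD,\fequ}$ derivation using Lemma~\ref{lem:me}, then permute $\wrD$ and $\cD$ below all $\FOMLS$ rules), is plausible, but you would still owe a termination argument for the downward permutation of $\cD$ through $\FOMLS$, which the paper's sequent-calculus detour avoids.
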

\noindent Here $A'$ corresponds to $A_3$ of Theorem~\ref{thm:decomposition}.

We now establish the connection between derivations in $\FOKS$ and
combinatorial proofs.

\begin{thm}\label{thm:CP-DI}
  Let  $\phi\colon\gC\to\gA$ be a combinatorial proof and let $A$ be a formula with $\gA=\graphof A$. Then there is a derivation
  \begin{equation}
    \label{eq:decom}
    \vlderivation{
      \vlde{\set{\wrD,\wfaD,\acD,\cfaD,\me,\mfaD,\mexD,\fequ}}{\Deri_2}{A}{
        \vlde{\FOMLS}{\Deri_1}{A'}{
          \vlhy{\ttt}}}}
  \end{equation}
  for some $A'\fequ C\rsubstof\phi$ where $C$ is a formula with $\graphof
  C=\gC$ and $\rsubstof\phi$ is the variable renaming substitution
  induced by~$\phi$.  Conversely,
  whenever we have a derivation as in~\eqref{eq:decom} above, such that $\fff$ does not occur in~$A'$, then
  there is a combinatorial proof $\phi\colon\gC\to\graphof A$ such
  that $\gC=\graphof{\rectif{A'}}$.
\end{thm}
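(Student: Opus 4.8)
The plan is to prove the two implications separately, exploiting the fact that the derivation in \eqref{eq:decom} splits into a linear part $\Deri_1$ in $\FOMLS$ and a resource-management part $\Deri_2$, while a combinatorial proof $\phi\colon\gC\to\gA$ splits analogously into the fonet $\gC$ and the skew bifibration $\phi$. The bridge in each direction is the first-order analogue of the propositional correspondence ``net $=$ linear proof, skew fibration $=$ contraction/weakening'': I match $\gC$ with $\Deri_1$ and $\phi$ with $\Deri_2$. Throughout I use that for rectified formulas $A\fequ B$ iff $\graphof A=\graphof B$, in order to pass freely between a formula and its fograph.

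For the forward direction I would first \emph{sequentialise} the fonet. As $\gC$ is a fonet it has a most general dualizer $\delta$ and no induced bimatching, the latter being exactly the first-order correctness criterion that plays the role of acyclicity for ordinary proof nets. By induction on the cotree of $\gC$ (Theorem~\ref{thm:cograph}), using a splitting lemma for the multiplicative connectives, I would show that $\gC$ sequentialises into an $\FOMLS$ derivation from $\ttt$ to a formula $C$ with $\graphof C=\gC$: the $\aiD$ steps create the links, the $\sw$, $\mix$ and $\faD$ steps build the connective and binder structure, and each $\exD$ step instantiates an existential binder according to $\delta$. Renaming this derivation by $\rsubstof\phi$ yields $\Deri_1$, proving some $A'\fequ C\rsubstof\phi$. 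Next I would realise the skew bifibration as $\Deri_2$ by decomposing $\phi$ into elementary steps, each a single application of one of $\wrD,\wfaD,\acD,\cfaD,\me,\mfaD,\mexD$ (modulo $\fequ$): the skew-fibration part of $\phi$ supplies the propositional steps $\wrD,\acD,\me$ as in the propositional case, while the fibration on binding graphs together with existential-preservation supplies the binder steps $\wfaD,\cfaD,\mfaD,\mexD$. Stacking $\Deri_2$ above $\Deri_1$ gives \eqref{eq:decom}.

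For the converse, given a derivation of shape \eqref{eq:decom} with $\fff$ absent from $A'$, I would first \emph{desequentialise} $\Deri_1$. Put $\gC=\graphof{\rectif{A'}}$; since $\fff$ does not occur in $A'$, every vertex of $\gC$ is $\ttt$-labelled or a literal, and the $\aiD$ steps of $\Deri_1$ induce a linking whose links are pairs of pre-dual literals. I would collect the terms introduced by the $\exD$ steps into a dualizer, and argue that the mere existence of the $\FOMLS$ derivation precludes an induced bimatching, so that $\gC$ is a fonet. I would then convert $\Deri_2$ into a skew bifibration $\phi\colon\gC\to\graphof A$ by reading each resource step as an elementary fograph homomorphism (weakenings adjoin fresh vertices, contractions and medials merge fibres) and checking that the composite preserves existentials and satisfies the skew-fibration and binding-fibration lifting properties. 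This inverts the decomposition of the forward direction.

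The main obstacle is the two-sided correspondence between the resource rules and the skew bifibration at the \emph{quantifier} layer. Even propositionally the equivalence between skew fibrations and $\set{\acD,\wrD,\me}$-derivations is delicate; here one must additionally match the fibration condition on binding graphs and the existential-preservation condition against $\wfaD,\cfaD,\mfaD,\mexD$, and verify that the two lifting properties interact correctly with binder scopes and with the dualizer. This is precisely why $\wfaD$ and $\cfaD$ were added to $\FOKS$: they are the binder analogues of weakening and contraction that make the elementary decomposition of a binding fibration possible. A secondary, purely administrative, difficulty is the $\fequ$- and rectification-bookkeeping needed to relate the formula produced by $\Deri_1$ to the fonet $\gC=\graphof{\rectif{A'}}$, which I handle uniformly through the rectified-equivalence theorem.
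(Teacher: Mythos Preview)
Your proposal is correct and takes essentially the same approach as the paper: both directions factor through the correspondences fonet $\leftrightarrow$ $\FOMLS$-proof (Theorems~\ref{thm:fonet->MLS1} and~\ref{thm:MLS1->fonet}) and skew bifibration $\leftrightarrow$ resource derivation (Lemmas~\ref{lem:skew->cw} and~\ref{lem:cw->skew}), with the rectification bookkeeping handled exactly as you anticipate (Lemma~\ref{lem:cw->rectif}). The paper's only additional technical devices---the frame construction for the no-bimatching argument and the propositional encoding $\PE{A}$ for lifting the skew-fibration lemmas to the quantifier layer---implement precisely the reductions to the propositional case that you sketch.
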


Furthermore, in the proof of Theorem~\ref{thm:CP-DI}, we will see that
(i) the links in the fonet $\gC$ correspond precisely to the pairs of
atoms that meet in the instances of the $\aiD$-rule in the derivation
$\Deri_1$, and (ii) the ''flow-graph'' of $\Deri_2$ that traces the
quantifier- and atom-occurrences in the derivation corresponds exactly
to the vertex-mapping induced by $\phi$. To give an example, consider
the derivation in Figure~\ref{fig:example-decompose} which corresponds
to the left-most combinatorial proof in Figures~\ref{fig:cps}
and~\ref{fig:cps-condensed}.

Thus, combinatorial proofs are closely related to derivations of the
form~\eqref{eq:decom}, and since by Theorem~\ref{thm:decomposition}
every derivation can be transformed into that form, we can say that
combinatorial proofs provide a canonical proof representation for first-order
logic, similarly to what proof nets are for linear
logic~\cite{girard:96:PN}.

Finally, Theorems~\ref{thm:KS1}, \ref{thm:decomposition}
and~\ref{thm:CP-DI} imply Theorem~\ref{thm:FOCP}, which means that we
have here an alternative proof of the soundness and completeness for
first-order combinatorial proofs which is simpler than the one given
in~\cite{hughes:fopws}, and improves with completeness being full
(a surjection from syntactic KS1 proofs onto combinatorial proofs).


\section{Translating between $\FOLK$ and $\FOKS$}
\label{sec:LK1-KS1}

We prove Theorems~\ref{thm:KS1},
\ref{thm:decomposition}, and~\ref{thm:decompositionA}, mainly by
translating derivations to and from the sequent calculus, and by rule
permutation arguments.

\subsection{The Linear Fragments $\FOMLL$ and $\FOMLS$}

We show that $\FOMLL$ and $\FOMLS$ are equivalent.

\begin{lemma}\label{lem:MLL1->MLS1}
  If $\sqn\Gamma$ is provable in $\FOMLL$ then $\form\Gamma$ is provable in  $\FOMLS$.
\end{lemma}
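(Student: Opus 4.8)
The plan is to induct on the $\FOMLL$-proof of $\sqn\Gamma$, simulating each sequent rule by a block of $\FOMLS$-rules acting on the corresponding formula. I read a sequent as its disjunction, and recall that $\form\Gamma$ is determined only up to associativity and commutativity of $\vlor$, both of which are instances of $\fequ$; so I freely reassociate and reorder disjuncts. (I write $P\to Q$ for a $\FOMLS$-derivation with premise $P$ and conclusion $Q$, so that ``$\form\Gamma$ is provable'' means $\ttt\to\form\Gamma$.) The one structural fact used repeatedly is that $\FOMLS$-derivations are closed under context: from a derivation $A\to B$ and any context $\Sconhole$ one obtains $\Scons A\to\Scons B$ by composing $S$ with the context of each rule instance; I call this \emph{lifting} into $\Sconhole$.

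The axioms are immediate: $\sqn{a,\cneg a}$ is a single $\aiD$ giving $\ttt\to a\vlor\cneg a$, and $\sqn\ttt$ is the empty derivation. For the one-premise rules, let $\Phi$ be the derivation supplied by the induction hypothesis on the premise. The $\vlor$ rule needs only an $\fequ$-reassociation. The $\exists$ rule is one $\exD$ below $\Phi$, in the context $\form\Gamma\vlor\conhole$. The $\fff$ rule has no matching $\FOMLS$-introduction and is the first subtle point; I use that $\fff$ is an atom with $\cneg\fff=\ttt$, so $\aiD$ yields $\ttt\to\fff\vlor\ttt$, lifting $\Phi$ into $\fff\vlor\conhole$ replaces the residual $\ttt$ by $\form\Gamma$, and $\fequ$ rearranges $\fff\vlor\form\Gamma$ into $\form\Gamma\vlor\fff$.

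The $\forall$ rule is the main obstacle. From $\Phi\colon\ttt\to\form\Gamma\vlor A$ with $x\notin\fv(\form\Gamma)$, I apply $\faD$ to get $\ttt\to\forall x.\ttt$, lift $\Phi$ into $\forall x.\conhole$ to obtain $\forall x.\ttt\to\forall x.(\form\Gamma\vlor A)$, and then move $\forall x$ past $\form\Gamma$ using commutativity and the $\fequ$-equation for $\forall$ over $\vlor$ (legitimate as $x\notin\fv(\form\Gamma)$), reaching $\form\Gamma\vlor\forall x.A$. The delicate part is that lifting a derivation \emph{under} a binder must not spoil any rule instance: the free occurrences of $x$ in $A$ get captured by the new $\forall x$, so I must check that no step of $\Phi$ depends on $x$ being free. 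This holds because in a rectified formula $x$ occurs in $\Phi$ only free, never bound, so no $\fequ$-side-condition inside $\Phi$ mentions $x$ and $\exD$ carries no variable condition; choosing the eigenvariable $x$ fresh (as the sequent side-condition allows) and working modulo $\alpha$-equivalence makes the lifted derivation well-formed.

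Finally, the two-premise rules use the standard $\tttD$-then-lift pattern. For $\vlan$, with $\Phi_1\colon\ttt\to\form\Gamma\vlor A$ and $\Phi_2\colon\ttt\to B\vlor\form\Delta$, I run $\Phi_1$, apply $\tttD$ in $\form\Gamma\vlor\conhole$ to produce $\form\Gamma\vlor(A\vlan\ttt)$, lift $\Phi_2$ into $\form\Gamma\vlor(A\vlan\conhole)$ to reach $\form\Gamma\vlor(A\vlan(B\vlor\form\Delta))$, and close with one $\sw$, giving $\form\Gamma\vlor((A\vlan B)\vlor\form\Delta)\fequ\form{\Gamma,A\vlan B,\Delta}$. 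The $\mixr$ case is identical except that the final step is a $\mix$ sending $\form\Gamma\vlan\form\Delta$ to $\form\Gamma\vlor\form\Delta$. Every sequent rule is thereby simulated, completing the induction; the only non-routine points are the binder-lifting in the $\forall$ case and the $\aiD$-with-$\fff$ trick.
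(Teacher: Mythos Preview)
Your proof is correct and follows the same approach as the paper: induction on the $\FOMLL$ proof, with the $\forall$ case handled by prefixing every line with $\forall x$ (your ``lifting into $\forall x.\conhole$'') followed by the $\fequ$-step pulling $\forall x$ past $\form\Gamma$. The paper only spells out the $\forall$ case and defers the rest to the literature; your treatment of the remaining cases---including the $\aiD$-with-$\fff$ trick for the $\fff$ rule and the $\tttD$/lift/$\sw$ pattern for $\vlan$---is sound and standard.
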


\begin{proof}
  This is a straightforward induction on the proof of $\sqn\Gamma$ in
  $\FOMLL$, making a case analysis on the bottommost rule instance. We
  show here only the case of
  $\vlinf{\forall}{}{\sqn{\Delta,\forall x.A}}{\sqn{\Delta,A}}$ (all
  other cases are simpler or have been shown before,
  e.g.~\cite{brunnler:phd}): By induction hypothesis, there is a proof
  of $\form\Delta\vlor A$ in $\FOMLS$. We can prefix every line in
  that proof by $\forall x$ and then compose the following derivation:
  \begin{center}\vspace{-2ex}\scalebox{.9}{\begin{math}
    \vlderivation{
      \vlin{\fequ}{}{\form\Delta\vlor\forall x.A}{
        \vlde{\FOMLS}{}{\forall x.\form\Delta\vlor A}{
          \vlin{\forall}{}{\forall x.\ttt}{
            \vlhy{\ttt}}}}}
  \end{math}}\end{center}
  where we can apply the $\fequ$-rule because $x$ is not free in $\Delta$.
\end{proof}

\begin{lemma}\label{lem:shallow}
  Let $\vlinf{\rr}{}{\Scons B}{\Scons A}$ be an inference rule in
  $\FOMLS$. Then the sequent $\sqn{\cneg A,B}$ is provable in $\FOMLL$.
\end{lemma}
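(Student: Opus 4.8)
The plan is to prove the lemma by a direct case analysis over the seven rule schemes that make up $\FOMLS$ (the dashed box of Figure~\ref{fig:KS1}): $\faD$, $\aiD$, $\tttD$, $\sw$, $\mix$, $\exD$, and $\fequ$. For each scheme I would read off its premise $A$ and its conclusion $B$ and exhibit an $\FOMLL$ proof of $\sqn{\cneg A,B}$. Since $\sqn{\cneg A,B}$ is, by the definition of $\implies$ as $\cneg A\vlor B$, exactly the sequent form of the local implication $A\implies B$ realized by the rewrite, the lemma amounts to the statement that every $\FOMLS$ rule is a linearly valid implication at the level of its redex. Before the case analysis I would record one auxiliary fact, the \emph{identity expansion}: for every formula $C$ the sequent $\sqn{C,\cneg C}$ is provable in $\FOMLL$. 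This follows by a routine structural induction on $C$, the base case being the atomic axiom $\axr$ (together with the $\ttt$-rule), and the inductive cases for $\vlan$, $\vlor$, $\exists$, $\forall$ combining the sub-identities via the matching $\FOMLL$ rule. This fact supplies the leaves in almost every subsequent case.

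With identity expansion in hand the non-quantifier cases are short. For the two constant-introducing rules $\faD$ and $\aiD$ the premise is $\ttt$, so $\cneg A=\fff$ and the $\fff$-rule reduces the goal to $\sqn{\forall x.\ttt}$ respectively $\sqn{a\vlor\cneg a}$, dispatched by the $\forall$- and $\ttt$-rules, respectively by the $\vlor$-rule over $\axr$. For $\tttD$ the goal $\sqn{\cneg A,A\vlan\ttt}$ splits under the $\vlan$-rule into the identity $\sqn{\cneg A,A}$ and $\sqn\ttt$. For $\sw$, after computing $\cneg{A\vlan(B\vlor C)}=\cneg A\vlor(\cneg B\vlan\cneg C)$, the goal is proved by the standard linear-distributivity derivation (two $\vlor$ steps, then two nested $\vlan$ steps), with every leaf an identity. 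For $\mix$ the goal $\sqn{\cneg A\vlor\cneg B,A\vlor B}$ is obtained by two $\vlor$ steps followed by a $\mixr$ that separates the sequent into $\sqn{\cneg A,A}$ and $\sqn{\cneg B,B}$; this is the one case that genuinely uses $\mixr$, which is why it lives in $\FOMLL$ rather than plain $\MLL$.

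The two quantifier-sensitive cases are $\exD$ and $\fequ$. For $\exD$ I would use that negation commutes with substitution, $\cneg{A\ssubst x t}=(\cneg A)\ssubst x t$, and apply the $\exists$-rule with witness $t$ to the identity $\sqn{(\cneg A)\ssubst x t,A\ssubst x t}$, yielding $\sqn{(\cneg A)\ssubst x t,\exists x.A}$. The case $\fequ$ requires checking, for each of the eight defining equations of $\fequ$ and in both orientations, that the corresponding implication is $\FOMLL$-provable. Commutativity and associativity of $\vlan$ and $\vlor$, and the two quantifier-permutation equations, are immediate from the $\vlan$/$\vlor$ and $\forall$/$\exists$ rules over identities; the delicate equations are the two scope-extrusion laws $\forall x.(A\vlor B)\fequ(\forall x.A)\vlor B$ and $\exists x.(A\vlan B)\fequ(\exists x.A)\vlan B$, where after a $\forall$-step one instantiates the dual $\exists$ with $x$ itself and finishes with identities.

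I expect the $\fequ$ case to be the main obstacle, not because any single sub-derivation is hard but because one must verify all eight equations in both directions and, for the scope-extrusion laws, thread the side condition ``$x$ not free in $B$'' of $\fequ$ through the eigenvariable condition of the $\forall$-rule so that the two conditions line up exactly. The conceptual content of the whole lemma is precisely that none of these witnessing implications needs $\conr$ or $\weakr$: they all stay inside the multiplicative-plus-mix fragment, which is what will later let the context-lifting step (deriving $\sqn{\cneg{\Scons A},\Scons B}$ from $\sqn{\cneg A,B}$ by induction on $\Sconhole$ via $\sw$) remain within $\FOMLL$.
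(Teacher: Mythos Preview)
Your proposal is correct and matches the paper's intent: the paper's own proof is simply ``A routine exercise,'' and the case analysis you outline (identity expansion plus one short $\FOMLL$ derivation per $\FOMLS$ rule scheme, with the $\fequ$ equations checked in both directions and the scope-extrusion side condition threaded through the $\forall$ eigenvariable condition) is exactly that exercise carried out in full.
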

\begin{proof}
  A routine exercise.
\end{proof}

\begin{lemma}\label{lem:context}
  Let $A,B$ be formulas, and let $\Sconhole$ be a (positive)
  context. If $\sqn{\cneg A,B}$ is provable in $\FOMLL$, then so is
  $\sqn{\cneg{\Scons A},\Scons B}$.
\end{lemma}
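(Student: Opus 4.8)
The plan is to proceed by structural induction on the positive context $\Sconhole$, using as the only external ingredient the standard \emph{identity expansion}: for every formula $C$ the sequent $\sqn{\cneg C,C}$ is provable in $\FOMLL$. This is itself an easy induction on $C$, using $\axr$ at the atoms, the $\ttt$ and $\fff$ rules at the constants, and the $\vlor$, $\vlan$, $\exists$, $\forall$ rules to glue together the identities of immediate subformulas. With this in hand, the induction hypothesis for $\Sconhole'$ will always be combined with an identity proof of the adjoined side formula via one or two inference steps.

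In the base case $\Sconhole=\conhole$ we have $\Scons A=A$ and $\Scons B=B$, so the claim is exactly the assumed proof of $\sqn{\cneg A,B}$. For the inductive step I apply the induction hypothesis to the smaller context to obtain a proof of $\sqn{\cneg{\Scons' A},\Scons' B}$ and then reattach the outer connective. If $\Sconhole=\Sconhole'\vlan C$, then $\cneg{\Scons A}=\cneg{\Scons' A}\vlor\cneg C$ and $\Scons B=\Scons' B\vlan C$; applying $\vlan$ to $\sqn{\cneg{\Scons' A},\Scons' B}$ and the identity $\sqn{C,\cneg C}$ yields $\sqn{\cneg{\Scons' A},\Scons' B\vlan C,\cneg C}$, and one $\vlor$ step gives the goal. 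Dually, if $\Sconhole=\Sconhole'\vlor C$ then $\cneg{\Scons A}=\cneg{\Scons' A}\vlan\cneg C$; here I apply $\vlan$ to $\sqn{\Scons' B,\cneg{\Scons' A}}$ and $\sqn{\cneg C,C}$ and then a $\vlor$ step to reach $\sqn{\cneg{\Scons' A}\vlan\cneg C,\Scons' B\vlor C}$. The cases $C\vlan\Sconhole'$ and $C\vlor\Sconhole'$ are symmetric. For the quantifiers: if $\Sconhole=\exists x.\Sconhole'$ then $\cneg{\Scons A}=\forall x.\cneg{\Scons' A}$ and $\Scons B=\exists x.\Scons' B$; from $\sqn{\cneg{\Scons' A},\Scons' B}$ an $\exists$ step with witness $t=x$ gives $\sqn{\cneg{\Scons' A},\exists x.\Scons' B}$, after which a $\forall$ step binds $x$ in $\cneg{\Scons' A}$. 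The case $\Sconhole=\forall x.\Sconhole'$ is its mirror image.

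The step that needs genuine care, and the only place where the order of the two rule applications matters, is the side condition of the $\forall$ rule in the quantifier cases. At the moment $\forall$ fires one must verify that the bound variable $x$ is not free in the remaining formula of the working sequent. This is exactly why I apply $\exists$ \emph{before} $\forall$: the complementary formula has by then been placed under its own $\exists x$ (respectively $\forall x$), so $x$ is no longer free there, and every other occurrence of $x$ lies within the formula being quantified. Modulo the usual bound-variable freshness conventions on contexts this check is immediate, completing the induction.
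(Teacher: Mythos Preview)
Your proof is correct and follows exactly the approach the paper intends: a structural induction on the context $\Sconhole$, with identity expansion supplying the side premises in the binary-connective cases and the $\exists$-before-$\forall$ ordering handling the eigenvariable condition in the quantifier cases. The paper records this lemma as routine and points to the literature rather than spelling it out, so your write-up is simply a detailed execution of the same argument.
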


\begin{proof}
  A straightforward induction on $\Sconhole$. (see e.g.~\cite{gug:str:01})
\end{proof}

\begin{lemma}\label{lem:MLS1->MLL1}
  If a formula $C$ is provable in $\FOMLS$ then $\sqn C$ is provable in  $\FOMLL$. 
\end{lemma}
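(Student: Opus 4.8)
The plan is to read the $\FOMLS$-derivation of $C$ from top to bottom and build up a sequent proof of each line in turn, using the two preceding lemmas to turn each individual rule application into an implication and then chaining the implications together with cuts. I would argue by induction on the length $n$ of the derivation $\vlder{\FOMLS}{\Deri}{C}{\ttt}$, writing it as a sequence $\ttt = D_0, D_1, \ldots, D_n = C$ in which each step passes from $D_{i-1} = S_i\cons{A_i}$ to $D_i = S_i\cons{B_i}$ via an instance $\vlinf{\rr_i}{}{S_i\cons{B_i}}{S_i\cons{A_i}}$ of a rule of $\FOMLS$ inside a (necessarily positive) context $S_i\conhole$. The induction hypothesis is that $\sqn{D_{i-1}}$ is provable in $\FOMLLcut$; the base case $\sqn{D_0} = \sqn\ttt$ is immediate from the $\ttt$-rule.

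For the inductive step, Lemma~\ref{lem:shallow} supplies a proof of $\sqn{\cneg{A_i}, B_i}$ in $\FOMLL$, and Lemma~\ref{lem:context} lifts it into the context $S_i\conhole$ to give a proof of $\sqn{\cneg{S_i\cons{A_i}}, S_i\cons{B_i}} = \sqn{\cneg{D_{i-1}}, D_i}$. Combining the induction hypothesis $\sqn{D_{i-1}}$ with this implication by a cut on $D_{i-1}$,
$$\vliinf{\cutr}{}{\sqn{D_i}}{\sqn{D_{i-1}}}{\sqn{\cneg{D_{i-1}}, D_i}}$$
yields a proof of $\sqn{D_i}$ in $\FOMLLcut$. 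Taking $i = n$ produces a proof of $\sqn C$ in $\FOMLLcut$, and a final appeal to the cut-elimination theorem (the $\FOMLL$ part of Theorem~\ref{thm:cutelim}) converts it into a cut-free $\FOMLL$-proof, as required.

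Essentially all the real work has been front-loaded into the two lemmas: Lemma~\ref{lem:shallow} handles each rule in isolation, and Lemma~\ref{lem:context} handles the characteristic deep-inference feature of firing a rule inside an arbitrary positive context, so the only remaining ingredient is the chaining by cut. Consequently there is no serious obstacle here; the one point demanding care is the orientation of the implications, namely ensuring that the negated line $\cneg{D_{i-1}}$ sits on the side where the cut eliminates exactly the intermediate formula $D_{i-1}$. This is bookkeeping rather than a genuine difficulty, and relying on cut followed by cut-elimination keeps the argument clean and avoids having to reshape the $\FOMLS$-derivation directly.
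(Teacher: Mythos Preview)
Your proof is correct and follows essentially the same approach as the paper: induction on the number of inference steps, using Lemmas~\ref{lem:shallow} and~\ref{lem:context} to obtain $\sqn{\cneg{\Scons A},\Scons B}$, cutting against the inductively obtained proof, and appealing to cut elimination (Theorem~\ref{thm:cutelim}). The only cosmetic difference is that the paper maintains a cut-free $\FOMLL$ proof as the induction invariant (applying cut elimination at each step), whereas you carry a $\FOMLLcut$ proof through the induction and eliminate cuts once at the end; both are fine.
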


\begin{proof}
  We proceed by induction on the number of inference steps in the
  proof of $C$ in $\FOMLS$. Consider the bottommost rule instance
  $\vlinf{\rr}{}{\Scons B}{\Scons A}$. By induction hypothesis we have
  a $\FOMLL$ proof $\Proof$ of $\sqn{\Scons A}$.
      By Lemmas~\ref{lem:shallow} and~\ref{lem:context}, we have a $\FOMLL$ proof of
      $\sqn{\cneg{\Scons A},\Scons B}$. We can compose them via
      \begin{equation*}
        \vliinf{\cutr}{}{\sqn{\Scons B}}{\sqn{\Scons A}}{\sqn{\cneg{\Scons
          A},\Scons B}}
      \end{equation*}
      and then apply Theorem~\ref{thm:cutelim}.
  \end{proof}

\subsection{Contraction and Weakening}

The first observation here is that Lemmas~\ref{lem:MLL1->MLS1}--\ref{lem:MLS1->MLL1} from above also hold for $\FOLK$ and $\FOKS$. We therefore immediately have:

\begin{thm}\label{thm:LK1-KS1}
  For every sequent $\Gamma$, we have that $\sqn\Gamma$ is provable in
  $\FOLK$ if and only if $\form\Gamma$ is provable in $\FOKS$.
\end{thm}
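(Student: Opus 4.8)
The remark preceding the statement indicates the route: re-run, for the full systems, the four-lemma development that established the equivalence of $\FOMLL$ and $\FOMLS$ (Lemmas~\ref{lem:MLL1->MLS1}--\ref{lem:MLS1->MLL1}), adding only the cases for the rules that separate $\FOLK$ from $\FOMLL$ and $\FOKS$ from $\FOMLS$, namely contraction and weakening together with their quantifier and medial variants. The biconditional then splits into its two implications, each proved exactly as its linear counterpart.

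For the implication from $\FOLK$ to $\FOKS$ (the analogue of Lemma~\ref{lem:MLL1->MLS1}) I would induct on a $\FOLK$ proof of $\sqn\Gamma$ and do a case analysis on the bottommost rule. Every rule already present in $\FOMLL$ is treated verbatim, so only two new cases appear. For $\vlinf{\conr}{}{\sqn{\Gamma,A}}{\sqn{\Gamma,A,A}}$ the induction hypothesis supplies a $\FOKS$ derivation of $\form{\Gamma,A,A}$, i.e.\ of $\form\Gamma\vlor A\vlor A$, and one instance of the general contraction rule $\cD$ (derivable in $\FOKS$) applied in the context $\form\Gamma\vlor\conhole$ rewrites $A\vlor A$ into $A$, giving $\form\Gamma\vlor A=\form{\Gamma,A}$. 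For $\vlinf{\weakr}{}{\sqn{\Gamma,A}}{\sqn{\Gamma}}$ the hypothesis gives a derivation of $\form\Gamma$, and a single $\wrD$-step appends the disjunct $A$ to produce $\form{\Gamma,A}$, using the equivalence $\fequ$ to re-bracket as needed.

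For the converse, from $\FOKS$ to $\FOLK$, I would reproduce the chain of Lemmas~\ref{lem:shallow}--\ref{lem:MLS1->MLL1}. The context lemma (Lemma~\ref{lem:context}) and the closing cut-and-eliminate step (Lemma~\ref{lem:MLS1->MLL1}) transfer unchanged, since $\FOMLL\subseteq\FOLK$ and $\FOLK$ admits cut elimination by Theorem~\ref{thm:cutelim}. The genuinely new work is to extend the shallow lemma (Lemma~\ref{lem:shallow}): for each extra rule $\vlinf{\rr}{}{\Scons B}{\Scons A}$ of $\FOKS$ I must exhibit a $\FOLK$ proof of $\sqn{\cneg A,B}$. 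For $\wfaD$ a single $\forall$-step, licensed by its freshness side condition, reduces the goal to the generalized identity $\sqn{\cneg A,A}$, which is derivable in $\FOLK$; for $\wrD$ a weakening followed by that identity suffices; for $\acD$ and $\cD$ one proves $\sqn{\cneg A\vlan\cneg A,A}$ by applying the $\vlan$-rule to two copies of the identity and then contracting with $\conr$, now available in $\FOLK$; and for $\cfaD$, $\me$, $\mfaD$ and $\mexD$ one writes out the witnessing implication from the $\vlan$-, $\vlor$- and quantifier rules.

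The main obstacle is precisely this last batch of shallow derivations: the medial rule $\me$, and more delicately the quantifier--disjunction rules $\mfaD$ and $\mexD$ and the universal contraction $\cfaD$, where the $\forall$- and $\exists$-rules must be applied in the correct order and the eigenvariable (freshness) condition on $\forall$ checked against the surrounding sequent. These derivations are routine but fiddly; everything else is a direct transcription of the linear argument, so once they are in place the biconditional follows immediately.
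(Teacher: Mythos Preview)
Your proposal is correct and follows exactly the route the paper takes: the paper's proof consists of the single sentence that Lemmas~\ref{lem:MLL1->MLS1}--\ref{lem:MLS1->MLL1} ``also hold for $\FOLK$ and $\FOKS$'', and you have simply spelled out the extra cases this entails. One minor slip: in extending the shallow lemma you need only cover the rules actually in $\FOKS$, so $\cD$ need not be treated separately (it is derivable, not primitive), but this does no harm.
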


Then Theorem~\ref{thm:KS1} is an immediate consequence. Let us now
proceed with providing further lemmas that will be needed for the
other results.

\begin{lemma}
  \label{lem:ac}
  The $\cD$-rule is derivable in $\set{\acD,\me,\mfaD,\mexD,\fequ}$. 
\end{lemma}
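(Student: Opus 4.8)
The plan is to prove the lemma by structural induction on the formula $A$, constructing for each $A$ a derivation of $A$ from $A\vlor A$ using only $\set{\acD,\me,\mfaD,\mexD,\fequ}$. Since deep-inference derivations may be transported into any positive context by filling the hole of $\Sconhole$, it suffices to treat the empty-context (shallow) case: a derivation from $A\vlor A$ to $A$ yields, placed inside $\Sconhole$, precisely the derivation from $\Scons{A\vlor A}$ to $\Scons{A}$ that the rule $\cD$ requires, and the rule set is preserved under this transport.

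For the base case $A$ is an atom (including $\ttt$ and $\fff$), and a single instance of $\acD$ gives $A$ from $A\vlor A$. For the inductive step I would distinguish four cases according to the root of $A$, applying the induction hypothesis inside the appropriate context in each. If $A = B\vlor C$, then by $\fequ$ (associativity and commutativity of $\vlor$) the premise $(B\vlor C)\vlor(B\vlor C)$ is equivalent to $(B\vlor B)\vlor(C\vlor C)$; applying the derivation for $B$ (by IH) in context $\conhole\vlor(C\vlor C)$ and then the one for $C$ in context $B\vlor\conhole$ produces $B\vlor C$. If $A = B\vlan C$ — the heart of the argument — a single medial instance
\[
\vlinf{\me}{}{(B\vlor B)\vlan(C\vlor C)}{(B\vlan C)\vlor(B\vlan C)}
\]
(obtained by instantiating the two left metavariables to $B$ and the two right ones to $C$) rewrites $(B\vlan C)\vlor(B\vlan C)$ to $(B\vlor B)\vlan(C\vlor C)$, after which the IH derivations for $B$ and $C$, applied in the contexts $\conhole\vlan(C\vlor C)$ and $B\vlan\conhole$, yield $B\vlan C$. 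The quantifier cases are exactly parallel: for $A=\forall x.B$ an instance of $\mfaD$ with both disjuncts equal to $B$ sends $(\forall x.B)\vlor(\forall x.B)$ to $\forall x.(B\vlor B)$, and the IH for $B$ in context $\forall x.\conhole$ finishes; the case $A=\exists x.B$ is identical with $\mexD$ in place of $\mfaD$.

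I expect the main obstacle to be purely organisational rather than conceptual: keeping the context bookkeeping correct when invoking the induction hypothesis inside the larger formula, and checking that the rearrangement in the disjunction case is genuinely covered by the $\fequ$-equations of \eqref{eq:fequ} (it uses only associativity and commutativity of $\vlor$, so it is). The essential mathematical content is the standard reduction of general contraction to atomic contraction via medial, here extended by the observation that $\mfaD$ and $\mexD$ play for the quantifiers exactly the role that $\me$ plays for conjunction; once this analogy is in place, every inductive case follows the same three-step pattern (apply the relevant medial, then contract the two immediate subformulas by IH).
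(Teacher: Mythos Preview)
Your proposal is correct and follows exactly the approach the paper has in mind: a straightforward structural induction on $A$, using $\acD$ for atoms, $\fequ$ for the $\vlor$-case, $\me$ for the $\vlan$-case, and $\mfaD$/$\mexD$ for the quantifier cases. The paper merely sketches ``a straightforward induction on $A$'' and defers details to \cite{brunnler:phd}; your write-up fills in precisely those details.
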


\begin{proof}
  This can be shown by a straightforward induction on $A$ (for details, see e.g.~\cite{brunnler:phd}).
\end{proof}

\begin{lemma}
  \label{lem:me}
  $\wfaD,\cfaD,\me,\mfaD,\mexD$ are derivable in \hbox{$\set{\wrD,\cD,\fequ}$}. 
\end{lemma}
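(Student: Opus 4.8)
The plan is to derive all five rules by one and the same pattern: use weakening $\wrD$ to introduce a duplicate of the relevant subformula, rearrange the result with the equivalence $\fequ$, and then collapse the two copies with the general contraction $\cD$ (which is available, being listed in the ambient rule set $\set{\wrD,\cD,\fequ}$, and justified by Lemma~\ref{lem:ac}). Since every rule of $\FOKS$ fires in an arbitrary positive context $\Sconhole$ and $\fequ$ is a congruence, it suffices to derive, for each rule, its conclusion from its premise with the surrounding context suppressed; the derivation then transports verbatim into $\Sconhole$.

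First I would dispatch the three medial-style rules $\me,\mfaD,\mexD$. For $\me$, I weaken each conjunct of each summand up to the target shape — inside the positive context $(\,\cdot\,)\vlan C$ one weakens $A$ to $A\vlor B$, and symmetrically for the other occurrences — and then contract:
\[
(A\vlan C)\vlor(B\vlan D)\ \xrightarrow{\ \wrD,\,\fequ\ }\ \bigl((A\vlor B)\vlan(C\vlor D)\bigr)\vlor\bigl((A\vlor B)\vlan(C\vlor D)\bigr)\ \xrightarrow{\ \cD\ }\ (A\vlor B)\vlan(C\vlor D).
\]
The rules $\mfaD$ and $\mexD$ are the same with $Q\in\set{\forall,\exists}$: weakening inside each binder (again a positive context) turns $(Qx.A)\vlor(Qx.B)$ into $(Qx.(A\vlor B))\vlor(Qx.(A\vlor B))$ — using $\fequ$ to reorder the disjuncts under the binder of the right summand — after which $\cD$ yields $Qx.(A\vlor B)$.

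The genuinely quantifier-structural rules $\wfaD$ and $\cfaD$ are where I expect the difficulty to lie, and they are the crux of the lemma. Neither a fresh vacuous binder nor the merging of two nested binders can be obtained by letting $\wrD$ or $\cD$ act on the quantifier directly: $\fequ$ preserves the number of binders, $\wrD$ can only add binders (inside a weakened-in disjunct), and $\cD$ only ever deletes an entire duplicated subformula together with all of its binders. The device that unblocks both cases is the equivalence $\forall x.(P\vlor Q)\fequ(\forall x.P)\vlor Q$ (valid when $x\notin\fv(Q)$), which lets me trade a weakened-in disjunct for a second copy of $\forall x.A$ so that the binders line up for an ordinary contraction. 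Concretely, for $\wfaD$ (where $x\notin\fv(A)$) I weaken $A$ by the disjunct $\forall x.A$ and then pull the binder out over the duplicate:
\[
A\ \xrightarrow{\ \wrD\ }\ A\vlor(\forall x.A)\ \overset{\fequ}{=}\ \forall x.(A\vlor A)\ \xrightarrow{\ \cD\ }\ \forall x.A ,
\]
whereas for $\cfaD$ I weaken the inner $\forall x.A$ by a disjunct $A$ and then distribute the outer binder, which is legitimate because $x$ is not free in the inner $\forall x.A$:
\[
\forall x.\forall x.A\ \xrightarrow{\ \wrD\ }\ \forall x.\bigl((\forall x.A)\vlor A\bigr)\ \overset{\fequ}{=}\ (\forall x.A)\vlor(\forall x.A)\ \xrightarrow{\ \cD\ }\ \forall x.A .
\]
In each case the marked $\fequ$-step is a two-step chain (one use of $\vlor$-commutativity followed by one application of the $\forall$-over-$\vlor$ axiom), licensed precisely because the disjunct carrying the binder contains no free $x$.

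It then remains only to observe that every step above uses solely $\wrD$, $\cD$ and $\fequ$, as required, and that each $\fequ$-equivalence reduces to the axioms in \eqref{eq:fequ}; verifying these is routine. I would present the medial cases together, and single out the $\wfaD$/$\cfaD$ computations as the one place where the argument is not merely mechanical, since they are exactly the instances where a quantifier must be created or absorbed rather than an atom.
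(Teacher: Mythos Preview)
Your proposal is correct and matches the paper's approach essentially verbatim: the paper gives exactly your derivations for $\wfaD$ and $\cfaD$ (weaken in a copy carrying the needed binder, use the $\forall$-over-$\vlor$ equivalence to align, then contract) and defers $\me,\mfaD,\mexD$ to \cite{brunnler:phd}, where the weaken-to-duplicate-then-contract pattern you sketch is the standard argument.
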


\begin{proof}
  We only show the cases for $\wfaD$ and $\cfaD$ (for the others see~\cite{brunnler:phd}):
  \vadjust{\vskip-2ex}
  \begin{equation}
    \label{eq:wfa-cfa}
  \qquad
  \vlderivation{
    \vlin{\cD}{}{\forall x.A}{
      \vlin{\fequ}{}{\forall x.(A \vlor A)}{
        \vlin{\wrD}{}{A \vlor (\forall x.A)}{
          \vlhy{A}}}}}
  \qquad
  \quad
  \vlderivation{
    \vlin{\cD}{}{\forall x.A}{
      \vlin{\equiv}{}{(\forall x.A) \vlor (\forall x.A)}{
        \vlin{\wrD}{}{\forall x.((\forall x.A) \vlor A)}{
          \vlhy{\forall x.\forall x.A}}}}}
  \end{equation}
  where in the first derivation, $x$ is not free in $A$, and in the second one not free in $\forall x.A$.
\end{proof}

\begin{lemma}\label{lem:cw-atomic}
  Let $A$ and $B$ be formulas. Then
  \begin{equation*}
    \vlderivation{
      \vlde{\set{\wrD,\cD,\fequ}}{}{B}{
        \vlhy{A}}}
    \qquad
    \iff
    \qquad
    \vlderivation{
      \vlde{\set{\wrD,\wfaD,\acD,\cfaD,\me,\mfaD,\mexD,\fequ}}{}{B}{
        \vlhy{A}}}
  \end{equation*}
\end{lemma}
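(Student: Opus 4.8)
The plan is to prove both directions by a uniform \emph{rule-replacement} argument: whenever a rule occurs in one derivation but is absent from the target rule set, I replace that single instance by a derivation over the target set supplied by Lemma~\ref{lem:ac} or Lemma~\ref{lem:me}, leaving the top formula $A$ and bottom formula $B$ untouched. The only conceptual ingredient beyond these two lemmas is that deep-inference derivations are closed under insertion into a positive context.

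For the implication from left to right, note that of the three rules $\wrD,\cD,\fequ$ only the general contraction $\cD$ is absent from $\set{\wrD,\wfaD,\acD,\cfaD,\me,\mfaD,\mexD,\fequ}$. By Lemma~\ref{lem:ac} every instance of $\cD$, rewriting some $A'\vlor A'$ to $A'$, is derivable in $\set{\acD,\me,\mfaD,\mexD,\fequ}$. I would therefore scan the given $\set{\wrD,\cD,\fequ}$-derivation from $A$ to $B$ and replace each $\cD$-step $\Scons{A'\vlor A'}\to\Scons{A'}$ by the corresponding $\set{\acD,\me,\mfaD,\mexD,\fequ}$-derivation placed inside the context $\Sconhole$. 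The $\wrD$- and $\fequ$-steps are kept verbatim, and the result is a derivation from $A$ to $B$ over the larger rule set, all of whose rules ($\acD,\me,\mfaD,\mexD,\fequ$ together with $\wrD,\fequ$) indeed lie in that set.

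For the converse, I first observe that every instance of $\acD$ is literally an instance of $\cD$, and that $\wrD$ and $\fequ$ already belong to $\set{\wrD,\cD,\fequ}$; so these rules need no treatment. For each of the remaining rules $\wfaD,\cfaD,\me,\mfaD,\mexD$, Lemma~\ref{lem:me} gives, for an arbitrary instance, a derivation of its conclusion from its premise over $\set{\wrD,\cD,\fequ}$. Replacing every such instance in its context by the corresponding derivation then yields a derivation from $A$ to $B$ over $\set{\wrD,\cD,\fequ}$.

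The one step requiring care -- and the main, though routine, obstacle -- is justifying the contextual replacement: I must verify that if a rule with premise $A'$ and conclusion $B'$ is derivable in a set $\mathcal R$, then the instance $\Scons{A'}\to\Scons{B'}$ can be replaced by an $\mathcal R$-derivation sitting inside the context $\Sconhole$. This holds because plugging the context of the inner derivation into $\Sconhole$ again yields a positive context, so every rule of the inner $\mathcal R$-derivation may be reapplied there; this is the standard contextual closure of deep-inference derivations, the analogue for derivations of Lemma~\ref{lem:context}. Granting this, both directions follow immediately from Lemmas~\ref{lem:ac} and~\ref{lem:me}.
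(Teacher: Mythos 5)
Your proposal is correct and is essentially the paper's own argument: the paper proves this lemma by simply citing Lemmas~\ref{lem:ac} and~\ref{lem:me}, which is exactly the rule-replacement scheme you spell out (including the observation that $\acD$ is an instance of $\cD$ and that derivability of a rule in deep inference is by definition contextual). The extra care you take with contextual closure is implicit in the paper's notion of a rule being \emph{derivable} in a system, so nothing is missing on either side.
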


\begin{proof}
  Immediately from Lemmas~\ref{lem:ac} and~\ref{lem:me}.
\end{proof}

\begin{remark}
  Observe that Lemma~\ref{lem:cw-atomic} would also hold with the rules $\wfaD$ and~$\cfaD$ removed.
\end{remark}
\subsection{Rule Permutations}

\begin{thm}\label{thm:LK1-decompose}
  Let $\Gamma$ be a sequent. If\/ $\sqn\Gamma$ is provable in $\FOLK$ (as depicted on the left below) then there is a sequent
  $\Gamma'$ not containing any $\fff$, such that there is a derivation as shown on the right below:\vadjust{\vskip-2ex}
  \begin{equation*}
    \vlderivation{
        \vltrl{}{\FOLK}{\Deri}{\sqns{\Gamma}}{
          \vlhy{\quad}}{
          \vlhy{}}{
          \vlhy{}}}
    \qquad
    \Longrightarrow
    \qquad
    \vlderivation{
      \vlde{\set{\wrD,\cD,\fequ}}{\Deri_2}{\sqns{\form{\Gamma}}}{
        \vltrl{}{\FOMLL}{\Deri_1}{\sqns{\form{\Gamma'}}}{
          \vlhy{\quad}}{
          \vlhy{}}{
          \vlhy{}}}}
  \end{equation*}
\end{thm}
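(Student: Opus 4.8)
The plan is to transform a given $\FOLK$ proof of $\sqn\Gamma$ into the stated shape by \emph{rule permutations}, pushing every instance of $\conr$, $\weakr$, and the $\fff$-rule downward past all the linear rules of $\FOMLL$ until the proof splits into an upper $\FOMLL$-tree and a lower part built only from these three rules. Since $\FOLK$ is cut-free, no preprocessing is needed (and Theorem~\ref{thm:cutelim} would supply it otherwise). Because $\conr$, $\weakr$ and the $\fff$-rule are all unary, once they all sit below every binary rule ($\vlan$, $\mixr$) the lower part is a single \emph{chain} whose topmost sequent is some $\Gamma'$, and the $\FOMLL$-derivation above it is $\Deri_1$, proving $\sqn{\Gamma'}$ and hence $\sqns{\form{\Gamma'}}$. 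Keeping the $\fff$-rule in the lower block (it behaves exactly like a weakening by $\fff$) guarantees that $\Gamma'$ contains no $\fff$. Finally I read the chain as a deep-inference derivation over corresponding formulas, using the same dictionary underlying the translations of Section~\ref{sec:LK1-KS1}: a sequent $\conr$ becomes $\cD$, a $\weakr$ becomes $\wrD$, and the $\fff$-rule becomes $\wrD$ introducing $\fff$, with $\fequ$ inserted to re-bracket the top-level disjunction. This is $\Deri_2$, using only $\set{\wrD,\cD,\fequ}$.

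The permutations split into easy and delicate cases. Weakening and the $\fff$-rule commute downward past every rule by the standard arguments (see \cite{brunnler:phd}); when they meet a rule acting on the weakened-in formula one simply weakens in the already-built formula instead. For $\conr$, the cases where the contracted formula is a side formula of the rule below commute immediately. When $\conr$ contracts the active formula of a \emph{unary} linear rule, I apply that rule to both copies before contracting: for the $\vlor$-rule this is immediate, and for the $\exists$-rule I reuse the same witness on both copies, so no side condition is disturbed. For the \emph{binary} rules $\vlan$ and $\mixr$, contraction of an active conjunct is permuted below by \emph{duplicating the other premise's subderivation}: applying $\vlan$ to each of the two copies yields two occurrences of the conjunction (and two copies of that premise's context), which are then contracted further down. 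This is the step that forces the lower block to use genuine (possibly deep, non-atomic) contraction, and is exactly why $\Deri_2$ must be recorded in deep inference rather than in the sequent calculus.

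The main obstacle is the interaction of $\conr$ with the $\forall$-rule. When $\conr$ contracts the active formula $A$ of a $\forall$-rule introducing $\forall x.A$, permuting the $\forall$ above requires introducing the two copies with \emph{distinct} eigenvariables, since applying $\forall$ to one copy would violate the eigenvariable condition in the presence of the other, which still has $x$ free. I would handle this by $\alpha$-renaming one copy to a fresh variable $x'$ throughout the subderivation that produced it --- duplicating that subderivation when the two copies share upstream history --- so that the two $\forall$-introductions use $x$ and $x'$ and their conclusions are $\alpha$-equal, hence contractible. These eigenvariable-management permutations are the technically heaviest ingredient, and I would lean on the established treatment in \cite{brunnler:phd}.

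The second delicate point is termination: both the binary-rule duplication and the $\forall$-renaming enlarge the proof, so the rewriting does not obviously stop. I would control it with a well-founded measure --- for instance a multiset (or lexicographic) order on, for each resource instance, the number of linear-rule instances standing below it --- and check that every permutation step strictly decreases it; this is where the bookkeeping, rather than any conceptual difficulty, lies. Once the normal form is reached, the read-off described in the first paragraph yields $\Deri_1$, a $\FOMLL$ proof of the $\fff$-free formula $\form{\Gamma'}$, together with $\Deri_2$, a $\set{\wrD,\cD,\fequ}$-derivation from $\form{\Gamma'}$ to $\form\Gamma$, which is exactly the claimed decomposition.
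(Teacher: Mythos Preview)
Your overall architecture is reasonable, but you are working harder than necessary, and permuting sequent-level $\conr$ has gaps you underestimate. The paper's proof bypasses all three of your delicate points by one move: it replaces each $\conr$ \emph{before} permuting, via
\[
\vlinf{\conr}{}{\sqn{\Gamma,A}}{\sqn{\Gamma,A,A}}
\quad\leadsto\quad
\vlderivation{
  \vlin{\cD}{}{\sqn{\Gamma,A}}{
    \vlin{\vlor}{}{\sqn{\Gamma,A\vlor A}}{
      \vlhy{\sqn{\Gamma,A,A}}}}}
\]
and then permutes the deep rule $\cD$ under the $\FOMLL$ rules. Since $\cD$ rewrites a redex $A\vlor A$ to $A$ inside a single formula of the sequent, it commutes past every sequent rule without touching any other premise: under $\vlan$ one first forms $(A\vlor A)\vlan B$ and then applies $\cD$ in the context $\{\cdot\}\vlan B$; under $\forall$ one first forms $\forall x.(A\vlor A)$ and applies $\cD$ beneath the binder, so the eigenvariable condition is never at risk. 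No branch is ever duplicated, hence termination is immediate. The $\weakr/\vlor$ interaction is handled the same way, by passing to deep $\wrD$.

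By contrast, your multiset measure does not survive the $\vlan$ duplication: when you copy the sibling branch $\Pi_2$ to push a $\conr$ under a $\vlan$, one copy of $\Pi_2$ now sits above \emph{two} $\vlan$ instances on its path to the root, so every resource instance inside that copy has its ``linear rules below'' count increased by one; since any such instance already dominated the $\conr$ you moved, the multiset strictly grows. The $\conr/\vlor$ case is also not ``immediate'' as you claim (with two copies of $A$ but only one $B$ you cannot apply $\vlor$ twice at the sequent level), and your $\forall$ renaming presupposes that the two contracted copies have separable ancestries, which need not hold while other contractions remain above. All of this is exactly what the conversion to deep $\cD$ dissolves.
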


\begin{proof}
  First, we can replace every instance of the $\fff$-rule in $\Deri$ by
  $\weakr$. Then the instances of $\weakr$ and $\conr$ are replaced by
  $\wD$ and $\cD$, which can then be permuted down.
  Details are in
  Appendix~\ref{app:LK1-decompose}.
\end{proof}

\begin{lemma}\label{lem:MLS1-decomposition}
  For every derivation
  $\vlderivation{
      \vlde{\FOMLS}{}{A}{
        \vlhy{\ttt}}}$
  there are formulas $A'$ and $A''$ such that 
  \vspace{-2ex}\begin{equation*}\hspace{-2ex}
    \vlderivation{
      \vlde{\set{\exists}}{}{A}{
        \vlde{\set{\sw,\mix,\fequ}}{}{A'}{
          \vlde{\set{\forall,\aiD,\tttD}}{}{A''}{
            \vlhy{\ttt}}}}}
  \end{equation*}
\end{lemma}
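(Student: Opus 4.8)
The plan is to prove the statement by two successive rule-permutation arguments inside the linear system $\FOMLS = \{\faD, \aiD, \tttD, \sw, \mix, \exD, \fequ\}$. I read the given derivation as a sequence of formulas $\ttt = B_0, B_1, \dots, B_n = A$ running from its premise $\ttt$ to its conclusion $A$, where each step is one rule instance applied in a context $\Sconhole$. The target decomposition requires all $\exD$-instances to be applied last (closest to $A$, giving the block from $A'$ to $A$), all instances of $\faD, \aiD, \tttD$ to be applied first (closest to $\ttt$, giving the block from $\ttt$ to $A''$), and the remaining $\{\sw, \mix, \fequ\}$ to sit in between. Accordingly I will (1) permute every $\exD$-instance past the other rules until it is applied last, and then (2) inside the resulting $\exD$-free derivation from $\ttt$ to $A'$, permute every instance of $\faD, \aiD, \tttD$ until it is applied first.

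For step (1) I eliminate each ``bad pair'' consisting of an $\exD$-step immediately followed by a non-$\exD$ step $\rr \in \{\faD, \aiD, \tttD, \sw, \mix, \fequ\}$, replacing it by a segment in which $\rr$ comes first. If the redex of $\rr$ is disjoint from the $\exists x.A$ created by $\exD$, the two commute trivially. If the redex of $\rr$ lies strictly inside $A$, then since the substitution $\ssubst{x}{t}$ preserves the connective structure the same redex already occurs in $A\ssubst{x}{t}$; I fire $\rr$ there first --- adjusting, for instance, the atom of an $\aiD$-instance to $a\ssubst{x}{t}$ --- and then apply $\exD$. An $\exists$-rooted formula exposes no $\vlan/\vlor$ at its root, so $\sw$ and $\mix$ never clash with the fresh quantifier at top level; the only genuine top-level interaction is with $\fequ$ through its scope-extrusion and quantifier-commutation equations, and here the freeness side condition ($x$ not free in the side formula) makes $\fequ$ commute with $\ssubst{x}{t}$, so that a pair ``$\exD$ then $\fequ$'' collapses to a single relocated $\exD$ (for example $(A\vlan B)\ssubst{x}{t} = (A\ssubst{x}{t})\vlan B$, so one $\exD$ already yields $\exists x.(A\vlan B)$). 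Iterating under the lexicographic measure that counts first the $\exD$-before-non-$\exD$ inversions and then the total number of steps gathers all $\exD$-instances into a final block and isolates $A'$.

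For step (2) I work in the $\exD$-free derivation from $\ttt$ to $A'$ and remove each bad pair in which some $\rr \in \{\sw, \mix, \fequ\}$ is immediately followed by a creation rule $\gamma \in \{\faD, \aiD, \tttD\}$. Each such $\gamma$ fires at a leaf site --- a $\ttt$-occurrence for $\faD$ and $\aiD$, and an arbitrary subformula position for $\tttD$ --- whereas $\sw, \mix, \fequ$ are linear rearrangements that neither create nor destroy $\ttt$'s or atoms and treat subformulas as opaque blocks. Hence the site where $\gamma$ fires in the post-$\rr$ formula corresponds to a unique subformula-occurrence already present before $\rr$; I apply $\gamma$ there first and then $\rr$, which still performs the same rearrangement on the slightly enlarged leaf. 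There is no symmetric difficulty with the quantifier-shift equations of $\fequ$: in a pair of the form ``$\rr$ then $\gamma$'' the $\fequ$-step precedes the creation and therefore cannot mention the quantifier that $\faD$ is about to introduce. Iterating, again under an inversion-counting measure, pushes all of $\faD, \aiD, \tttD$ to the front, producing $A''$ and leaving only $\{\sw, \mix, \fequ\}$ between $A''$ and $A'$.

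I expect the main obstacle to be the bookkeeping in step (1) where $\exD$ meets $\fequ$: one must check that every quantifier-shift equation interacts with a freshly instantiated existential precisely through an identity of the form $C\ssubst{x}{t} = (C')\ssubst{x}{t}$ supplied by the freeness conditions, so that the composite always reduces to a single relocated $\exD$ instead of spawning new steps --- which is exactly what also secures termination. As an independent check, the same three-layer form can be read off semantically: translate the $\FOMLS$-proof into a cut-free $\FOMLL$ proof via Lemma~\ref{lem:MLS1->MLL1}, permute its $\exists$-rules down to the root, and translate back along Lemma~\ref{lem:MLL1->MLS1}; the axioms and $\ttt$-rules supply $\{\faD,\aiD,\tttD\}$, the multiplicative rules supply $\{\sw,\mix,\fequ\}$, and the $\exists$-rules supply $\{\exD\}$.
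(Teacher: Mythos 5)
Your proposal is correct and takes essentially the same route as the paper's proof: a two-phase rule-permutation argument that first pushes every $\exD$-instance toward the conclusion (justified, as in the paper, by the fact that $A\ssubst{x}{t}$ has the same connective structure as $A$ and no other rule's redex requires an $\exists$-rooted premise), and then permutes $\faD,\aiD,\tttD$ upward over the linear rearrangements $\sw,\mix,\fequ$. The paper is even terser than you are on the delicate $\exD$/$\fequ$ and quantifier-$\fequ$ interactions (it defers the propositional cases to the literature and disposes of the quantifier cases in one sentence), so your additional case analysis and termination measure only add detail to the same argument.
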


\begin{proof}
  First, observe that the $\exists$ rule can be permuted under all the
  other rules since $A\ssubst{x}{t}$ has the same structure as $A$ and
  none of the other rules has a premise of the form $\Scons{\exists
    x.A}$. It suffices now to prove that all rules in $\set{\forall,
    \aiD, \tttD}$ can be permuted over the rules in $\set{\sw, \mix,
    \fequ}$, which is straightforward. For $\sw$, $\mix$, and the $\fequ$-instances that
  do not involve the quantifiers, the details can be found
  in~\cite{dissvonlutz}. The $\fequ$-instances concerning the
  quantifiers are admissible if the \hbox{$\exists$-rule} is not present.
\end{proof}

\begin{lemma}\label{lem:cw-decomposition}
  For every derivation
  $\vlderivation{
      \vlde{\set{\wrD,\wfaD,\acD,\cfaD,\me,\mfaD,\mexD,\fequ}}{}{B}{
        \vlhy{A}}}$
  there are formulas $A'$ and $B'$ such that 
  \begin{equation*}\hspace{-3ex}
    \vlderivation{
      \vlde{\set{\wrD,\wfaD,\fequ}}{}{B}{
        \vlde{\set{\acD,\cfaD}}{}{B'}{
          \vlde{\set{\me,\mfaD,\mexD,\fequ}}{}{A'}{
            \vlhy{A}}}}}
  \end{equation*}
\end{lemma}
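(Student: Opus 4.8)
The plan is to prove the decomposition by rule permutations inside $\FOKS$, carried out in two phases: first pushing every weakening ($\wrD$, $\wfaD$) down towards the conclusion side, and then pushing every contraction ($\acD$, $\cfaD$) below every medial ($\me$, $\mfaD$, $\mexD$). Reading the given derivation from the premise $A$ at the top to the conclusion $B$ at the bottom, the target shape applies medials first, then contractions, then weakenings; the two intermediate formulas $A'$ (between the medial block and the contraction block) and $B'$ (between the contraction block and the weakening block) will emerge as the boundaries produced by these permutations. The equivalence rule $\fequ$, being a congruence, is permitted in both the top and the bottom block but must be cleared out of the middle contraction block; since $\fequ$ commutes with $\acD$ and $\cfaD$ up to renaming the active formula, any $\fequ$-instance trapped among the contractions can be slid upward into the medial block.

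\textbf{Phase A (weakening to the bottom).} I would show that each of $\acD,\cfaD,\me,\mfaD,\mexD,\fequ$ permutes above $\wrD$ and $\wfaD$, i.e.\ whenever a weakening is immediately followed below by one of these rules, the pair can be rewritten so that the weakening moves down. The passive cases are routine: if the rule below acts on a redex disjoint from the subformula introduced by the weakening, the two rules commute verbatim, and weakening past $\fequ$ is immediate. The only cases needing attention are those in which the weakened material overlaps a redex. Weakening past a medial is handled by performing the medial on the smaller (un-weakened) formula and reintroducing the extra disjunct afterwards, using that $\wrD$ may introduce an arbitrary formula; and weakening past a contraction is handled by noting that if $\wrD$ (resp.\ $\wfaD$) introduces exactly one of the two copies that the contraction then merges, the composite $\Scons{a}$ from $\Scons{a\vlor a}$ from $\Scons{a}$ (resp.\ the corresponding $\wfaD$/$\cfaD$ pair on $\forall x$) is the identity, so both rules are deleted and the derivation strictly shortens. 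A measure counting weakening instances sitting below non-weakening rules decreases at each step, so this phase terminates, leaving a derivation of shape $A$ to $B'$ using $\set{\acD,\cfaD,\me,\mfaD,\mexD,\fequ}$, followed by $B'$ to $B$ using $\set{\wrD,\wfaD,\fequ}$.

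\textbf{Phase B (contraction below medial).} On the upper part I would push $\acD$ and $\cfaD$ downward past $\me,\mfaD,\mexD$. The key observation is that an \emph{atomic} contraction can neither create nor destroy the connective skeleton that a medial needs as its redex: $\acD$ only rewrites $a\vlor a$ into $a$, so any medial redex $(A\vlan C)\vlor(B\vlan D)$ lying below a contraction is already present above it, with $a\vlor a$ in place of $a$ if the contracted atom sits inside one of the immediate subformulas. Hence the medial can always be performed first and the single contraction afterwards, and the two commute without any duplication. The same skeleton argument applies to $\cfaD$ against $\mfaD$ and $\mexD$, where one additionally checks that reshuffling the duplicated quantifier blocks respects the side condition on the bound variable. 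Clearing the residual $\fequ$-instances upward as described then yields $A$ to $A'$ using $\set{\me,\mfaD,\mexD,\fequ}$ followed by $A'$ to $B'$ using $\set{\acD,\cfaD}$, and stacking this above the weakening block from Phase A gives exactly the claimed derivation.

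\textbf{Main obstacle.} I expect the genuinely new work to lie entirely in the quantifier rules $\wfaD$ and $\cfaD$, which do not appear in Br\"unnler's system and so are not covered by the existing permutation lemmas; the propositional cases ($\wrD$ and $\acD$ against $\me$, and weakening past contraction) are standard and may be cited. Verifying that $\wfaD$ permutes below, and that $\cfaD$ permutes past $\mfaD$ and $\mexD$, requires care that the freeness side conditions ($x$ not free in $A$, resp.\ in $\forall x.A$) are preserved and that no variable capture is introduced when quantifier blocks are duplicated or reshuffled. A secondary point to pin down is the terminating measure governing the interleaved permutations; since no step of either phase increases the size of the derivation, a lexicographic rank on the positions of the weakening and contraction instances suffices.
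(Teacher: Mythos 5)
Your proposal follows essentially the same route as the paper: a two-phase rule-permutation argument that first pushes $\wrD,\wfaD$ to the bottom and then pushes $\acD,\cfaD$ below $\set{\me,\mfaD,\mexD}$, with the genuinely new work lying in the quantifier rules $\wfaD$ and $\cfaD$ against $\mfaD$ (where the critical pairs resolve into an $\fequ$, resp.\ into two $\mfaD$-instances plus an $\fequ$), exactly as you predict; the paper additionally spells out the three-rule cases where a weakening or contraction meets its partner redex only across an interposed $\fequ$. One correction: your termination argument rests on the claim that no permutation step increases the size of the derivation, which is false --- pushing $\wrD$ past $\me$ duplicates the weakening into two, and pushing $\cfaD$ past $\mfaD$ turns two rule instances into three --- so a plain size or position rank does not suffice and one should instead use a multiset measure, e.g.\ the multiset, over all weakening (resp.\ contraction) instances, of the number of offending rules below each, which strictly decreases under the multiset ordering even when an instance is duplicated.
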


\begin{proof}
  Permute all $\wD$ and $\wfaD$ instances to the bottom of
  the derivation, then permute all $\cD$ and
  $\cfaD$ below $\set{\me,\mfaD,\mexD}$. This involves a tedious but
  routine case analysis. However, unlike most other rule
  permutations in this paper, this has not been done before in the deep
  inference literature. 
  For this reason, we give the full case
  analysis in Appendix~\ref{app:cw-decomposition}.
  This
  Lemma is the reason for the presence of the rules $\wfaD$ and
  $\cfaD$, as without them the permutation cases in~\eqref{eq:wfa-cfa}
  could not be resolved.
\end{proof}

We can now complete the proof of Theorems~\ref{thm:decomposition} and~\ref{thm:decompositionA}.

\begin{proof}[Proof of Theorem~\ref{thm:decompositionA}]
  Assume we have a proof of $A$ in $\FOKS$. By
  Theorem~\ref{thm:LK1-KS1} we have a proof of $\sqn A$ in $\FOLK$ to
  which we can apply Theorem~\ref{thm:LK1-decompose}. Finally, we
  apply Lemma~\ref{lem:MLL1->MLS1} to get the desired shape.
\end{proof}

\begin{proof}[Proof of Theorem~\ref{thm:decomposition}]
  Assume we have a proof of $A$ in $\FOKS$. We first apply
  Theorem~\ref{thm:decompositionA}, and then
  Lemma~\ref{lem:MLS1-decomposition} to the upper half and
  Lemmas~\ref{lem:cw-atomic} and~\ref{lem:cw-decomposition} to the lower half.
\end{proof}


\section{Fonets and Linear Proofs}
\label{sec:linear}


\subsection{From $\FOMLL$ Proofs to Fonets}

Let $\Pi$ be a $\FOMLL$ proof of a rectified sequent $\sqn\Gamma$ not containing $\fff$. We now show
how $\Pi$ is translated into a linked fograph
$\fographof\Pi=\tuple{\graphof\Gamma,\linkingof\Pi}$. We proceed
inductively, making a case analysis on the last rule in $\Pi$. At the
same time we are constructing a dualizer $\dsubstof\Pi$, so that in the
end we can conclude that $\fographof\Pi$ is in fact a fonet.
\begin{enumerate}
  \addtolength\itemsep{.7ex}
\item $\Pi$ is $\upsmash{\vlinf{\axr}{}{\sqn{a,\cneg a}}{}}$ : Then the only
  link is $\set{a, \dual{a}}$, and $\dsubstof\Pi$ is empty.
\item $\Pi$ is $\smash{\vlinf{\ttt}{}{\sqn{\ttt}}{}}$ : Then
  $\linkingof\Pi$ and $\dsubstof\Pi$ are both empty.
\item The last rule in $\Pi$ is $\vliinf{\mixr}{}{\sqn{\Gamma',\Gamma''}}{\sqn\Gamma'}{\sqn\Gamma''}$ :
  By induction hypothesis, we have proofs $\Pi'$ and $\Pi''$ of $\Gamma'$ and $\Gamma''$, respectively. We have
  $\graphof{\Gamma}=\graphof{\Gamma'}+\graphof{\Gamma''}$ and we can let
  $\linkingof\Pi\;=\;\linkingof{\Pi'}\cup\linkingof{\Pi''}$ and
  $\dsubstof\Pi=\dsubstof{\Pi'}\cup\dsubstof{\Pi''}$.
\item The last rule in $\Pi$ is $\vlinf{\vlor}{}{\sqn{\Gamma_1,A\vlor
    B}}{\sqn{\Gamma_1,A,B}}$ : By induction hypothesis, there is
  a proof $\Pi'$ of $\Gamma'={\Gamma_1,A,B}$. We have
  $\graphof{\Gamma}=\graphof{\Gamma'}$ and let
  $\linkingof\Pi\;=\;\linkingof{\Pi'}$ and
  $\dsubstof\Pi=\dsubstof{\Pi'}$.
\item The last rule in $\Pi$ is $\vliinf{\vlan}{}{\sqn{\Gamma_1,A\vlan
    B, \Gamma_2}}{\sqn{\Gamma_1,A}}{\sqn{B,\Gamma_2}}$ : By induction
  hypothesis, we have proofs $\Pi'$ and $\Pi''$ of
  $\Gamma'=\Gamma_1,A$ and $\Gamma''=B,\Gamma_2$, respectively. We
  have $\graphof{\Gamma}=\graphof{\Gamma_1}+\mbox{$(\graphof A\times\graphof
  B)$}+\graphof{\Gamma_2}$ and we let
  $\linkingof\Pi\;=\;\linkingof{\Pi'}\cup\linkingof{\Pi''}$ and
  $\dsubstof\Pi=\dsubstof{\Pi'}\cup\dsubstof{\Pi''}$.
\item The last rule in $\Pi$ is
  $\vlinf{\exists}{}{\sqn{\Gamma_1,\exists x.A}}
  {\sqn{\Gamma_1,A\ssubst{x}{t}}}$ : By induction hypothesis, there is
  a proof $\Pi'$ of $\Gamma'={\Gamma_1,A\ssubst{x}{t}}$.  For each atom in
  $\Gamma'=\Gamma_1, A \ssubst{x}{t}$, there is a corresponding atom
  in $\Gamma=\Gamma_1, \exists x.A$. We can therefore define the
  linking $\linkingof{\Pi}$ from the linking $\linkingof{\Pi'}$ via
  this correspondence. Then, we let $\dsubstof\Pi$ be
  $\dsubstof{\Pi'}+\ssubst{x}{t}$. Since $\Gamma$ is rectified $x$
  does not yet occur in $\dsubstof{\Pi'}$. Hence $\dsubstof\Pi$ is a
  dualizer of $\fographof\Pi$.
\item The last rule in $\Pi$ is $\vlinf{\forall}{\text{($x$ not free
    in $\Gamma_1$)}}{\sqn{\Gamma_1,\forall x.A}}{\sqn{\Gamma_1,A}}$ :
  By induction hypothesis, there is a proof $\Pi'$ of
  $\Gamma'={\Gamma_1,A}$, which has the same atoms as in
  $\Gamma=\Gamma_1, \forall x.A$.  Hence, we can let
  $\linkingof\Pi\;=\;\linkingof{\Pi'}$ and
  $\dsubstof\Pi=\dsubstof{\Pi'}$. 
\end{enumerate}

\begin{thm}
  \label{thm:MLL1->fonet}
  If $\Pi$ is a $\FOMLL$ proof of a rectified $\fff$-free sequent $\sqn\Gamma$,
  then $\fographof\Pi$ is a fonet and $\dsubstof\Pi$ a dualizer for it.
\end{thm}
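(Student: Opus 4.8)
The plan is to proceed by induction on the last rule of $\Pi$, following the inductive construction of $\fographof\Pi$ and $\dsubstof\Pi$ given just above the statement. For each rule I must verify three things: that $\fographof\Pi$ is a \emph{linked} fograph, that $\dsubstof\Pi$ is a dualizer, and that $\fographof\Pi$ has \emph{no induced bimatching}. The first two are routine bookkeeping. For the linked-fograph condition, note that $\FOMLL$ has no weakening and $\sqn\Gamma$ is $\fff$-free, so every literal of $\graphof\Gamma$ is either $\ttt$-labelled (from the $\ttt$-rule) or descends from an $\axr$-instance and hence lies in a link; and pre-duality of each link is set up at the axiom and preserved on the way down, since the only rule that changes atoms is $\exists$, which acts by substitution and therefore leaves all predicate symbols, and thus pre-duality, intact. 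For the dualizer condition the only case with content is $\exists$, where $\dsubstof\Pi=\dsubstof{\Pi'}\cup\ssubst x t$: rectification makes $x$ fresh for $\dsubstof{\Pi'}$, so the union is a well-defined substitution, and since the witness $t$ only involves variables free in $\Gamma$ it is fixed by $\dsubstof{\Pi'}$, whence $\dsubstof\Pi$ sends each link of $\graphof\Gamma$ to the corresponding, already unified, link of $\graphof{\Gamma'}$.

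The real content is the absence of an induced bimatching, which I would obtain by showing that every rule preserves it. The cases $\axr$, $\ttt$, $\vlor$ and $\mixr$ are immediate: an $\axr$- or $\ttt$-leaf carries no conjunction edge, so no non-empty set is matched in the cograph; $\vlor$ changes neither the graph, the linking, nor the dualizer; and $\mixr$ (like the binary rules generally) joins two vertex-disjoint parts without adding any conjunction edge or cross-leap between them, so a bimatching would restrict to a bimatching of one premise. The decisive case is $\vlan$, where $\graphof\Gamma=\graphof{\Gamma_1}+(\graphof A\times\graphof B)+\graphof{\Gamma_2}$; write $V_1=\vgraphof{\Gamma_1,A}$ and $V_2=\vgraphof{B,\Gamma_2}$ for the two premises' (disjoint) vertex sets. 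The key observation is that \emph{no leap crosses the cut} $V_1\mid V_2$: links stay inside one premise, and a dependency cannot cross, because $\dsubstof\Pi=\dsubstof{\Pi'}\cup\dsubstof{\Pi''}$ assigns to a variable of one premise only a term over that premise's own variables, which---the two premises being variable-disjoint by rectification---can never contain a bound variable of the other. Hence the only cut-crossing edges of $\fographof\Pi$ are the conjunction edges of the complete bipartite join $\graphof A\times\graphof B$. If now $W$ were a bimatching meeting at most one of $\graphof A,\graphof B$, then no join edge has both ends in $W$, so $W$ uses no cut-crossing edge and restricts to a bimatching of a premise, against the induction hypothesis; and if $W$ meets both, completeness of the join forces $|W\cap\graphof A|=|W\cap\graphof B|=1$, so $W$ contains exactly one crossing edge. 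But a bimatching is the union of two perfect matchings, hence $2$-regular, hence a disjoint union of cycles alternating between conjunction and leap edges, and such a union crosses any cut an even number of times; one crossing edge is therefore impossible.

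The quantifier rules are where the difficulty concentrates. For $\forall$ the conclusion graph is $\graphof{\Gamma'}+\single x$ with $\single x$ a fresh vertex that is \emph{isolated} in the cograph; having no conjunction edge it cannot lie in any cograph-matching, and---decisively---$\forall$ performs no substitution, so the links' unification problem, the most general dualizer, and all dependency edges are unchanged, and a bimatching would live entirely in $\graphof{\Gamma'}$. The $\exists$ rule is the main obstacle, for two compounding reasons: it adjoins a fresh existential binder $\single x$ joined to all of $\graphof A$, and, by replacing $x$ with $t$ in every literal, it changes the links' unification problem. Since the dependency edges of the leap graph are defined through the \emph{most general} dualizer, the leap graph of $\graphof\Gamma$ is not simply that of $\graphof{\Gamma'}$ with $\single x$ adjoined, so the restriction trick used for $\vlan$ no longer applies directly. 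I expect the crux to be twofold: first, to rule out a bimatching through $\single x$, whose leap-partners are exactly the universal binders $\single y$ with $y$ occurring in its dualizer value, by arguing that closing an alternating cycle through such a dependency would demand a cyclic chain of quantifier dependencies incompatible with the occurs-check underlying the existence of the dualizer; and second, to show that the controlled change of the most general dualizer under $\ssubst x t$ creates no new dependency among the pre-existing binders that could complete a bimatching. Carrying out this bookkeeping of dependencies under unification is the genuinely first-order heart of the proof, and the step I would expect to cost the most work.
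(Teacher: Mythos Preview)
Your approach---induction on the last rule, verifying at each step that no bimatching is created---is the same as the paper's. The paper, however, is far terser: it declares cases 1--6 (which include both $\vlan$ and $\exists$) ``immediate'' and supplies an explicit argument only for case~7 ($\forall$), namely that the new universal binder $\single x$ is isolated in the cograph, so although fresh dependency edges to $\single x$ may appear, none can sit inside a cograph-matching. Your treatment of $\forall$ is exactly this, and your $\vlan$ argument is fine too, with one correction: the non-crossing of dependencies should be argued via the \emph{most general} dualizer, not $\dsubstof\Pi$ (that is how dependencies are defined); rectification still gives you disjoint sets of bound variables across the two premises, so the mgu of the combined problem splits and the conclusion stands.

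The substantive divergence is the $\exists$ case. The paper calls it immediate; you flag it as the hard one, and your caution is well placed. Substituting $\ssubst x t$ genuinely changes the unification problem, and one can check that the dependencies among the pre-existing binders in $\graphof\Gamma$ are only a \emph{subset} of those in $\graphof{\Gamma'}$ (from $\mu'\circ\ssubst x t=\tau\circ\mu$ with $\tau$ fixing every non-existential variable). Hence a leap-matching in $\graphof\Gamma$ that avoids $\single x$ need not remain a matching in $\graphof{\Gamma'}$'s leap graph, and the restriction argument you use elsewhere does not transfer verbatim. So the paper's ``immediate'' is arguably glossing over the point you isolate. That said, your proposal does not close the gap either: the two bullet points you sketch for $\exists$ are a plan, not a proof. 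The cleanest fix is the one the paper itself uses for the deep-inference analogue (Theorem~\ref{thm:MLS1->fonet}): pass to the \emph{frame}, which replaces dependencies by fresh propositional links, and reduce the no-bimatching claim to the known propositional result; carrying the same translation through the $\FOMLL$ rules (in particular translating each $\exists$/$\forall$ pair into the corresponding $\sw/\aiD$ block) would give you a complete argument without any delicate bookkeeping of how the mgu moves.
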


\begin{proof}
  We must show that none of the operations above introduces a
  bimatching. For cases 1--6, this is immediate. For case 7, observe
  that there is a potential dependency from each existential binder in
  $\graphof{\Gamma'}$ to the new $x$-binder $\single x$ in
  $\graphof{\Gamma}$. However, observe that this $\single x$ vertex is
  not connected to any vertex in $\graphof{\Gamma'}$, and hence no
  such new dependency can be extended to a bimatching. That
  $\dsubstof\Pi$ is a dualizer for $\fographof\Pi$ follows immediately
  from the construction. Hence,  $\fographof\Pi$ is a fonet.
\end{proof}

\subsection{From $\FOMLS$ Proofs to Fonets}

There is a more direct path from a $\FOMLL$ proof $\Pi$ of a rectified
sequent $\Gamma$ to the linked fograph $\fographof\Pi$: take
the fograph $\graphof\Gamma$, and let the equivalence classes of
$\linkingof\Pi$ be all the atom pairs that meet in an instance of
$\ax$, and $\dsubstof\Pi$ comprises the substitutions
at the $\exists$-rules in $\Pi$.
We chose the more cumbersome path above because it gives us a
direct proof of Theorem~\ref{thm:MLL1->fonet}.
However, for translating $\FOMLS$ derivation into fonets, we employ
exactly that direct path.

In a derivation in $\FOMLS$
where the conclusion is rectified, every line is also rectified, as
the only rules involving bound variables are $\forall$ and $\exists$
which (upwards) both remove a binder.
Therefore, we can call such a derivation \bfit{rectified}, and
for a non-rectified  $\FOMLS$ derivation $\Deri$ we can
define its \bfit{rectification} $\rectif\Deri$ inductively, by rectifying each
line, proceeding step-wise
from conclusion to premise.\footnote{As for formulas,
the rectification of a derivation is unique up to renaming of bound
variables.}

A rectified derivation $\vlder{\FOMLS}{\Deri}{A}{\ttt}$ determines a
substitution which maps the existential bound variables occurring in
$A$ to the terms substituted for them in the instances of the
$\exists$-rule in $\Phi$. We denote this substitution by
$\dsubstof\Phi$ and call it the \bfit{dualizer} of
$\Deri$. Furthermore, every atom occurring in the conclusion $A$ must
be consumed by a unique instance of the rule $\aiD$ in $\Deri$. This
allows us to define a (partial) equivalence relation $\linkingof\Deri$
on the atom occurrences in $A$ by $a\linkingof\Deri b$ if $a$ and $b$
are consumed by the same instance of $\aiD$ in~$\Deri$. We call
$\linkingof\Deri$ the \bfit{linking} of~$\Deri$, and define
$\fographof\Deri=\tuple{\graphof{A},\linkingof{\Deri}}$.


\begin{thm}\label{thm:MLS1->fonet}
  Let $\vlder{\FOMLS}{\Deri}{A}{\ttt}$ be a rectified derivation where $A$ is $\fff$-free. Then
  $\fographof\Deri$ is a fonet and $\dsubstof{\Deri}$ a dualizer for it.
\end{thm}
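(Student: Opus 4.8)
The plan is to deduce Theorem~\ref{thm:MLS1->fonet} from its sequent-calculus counterpart, Theorem~\ref{thm:MLL1->fonet}, by producing a $\FOMLL$ proof that induces \emph{exactly} the linked fograph $\fographof\Deri$. I would first dispatch the two easy requirements, that $\fographof\Deri=\tuple{\graphof A,\linkingof\Deri}$ is a linked fograph and that $\dsubstof\Deri$ is a dualizer for it. The only $\FOMLS$ rule that can ever create an occurrence of $\fff$ is an instance of $\aiD$ whose principal atom is $\ttt$ (yielding $\ttt\vlor\fff$), and since no $\FOMLS$ rule deletes an $\fff$, the hypothesis that $A$ is $\fff$-free forces every $\aiD$-instance to consume a genuine pair $a,\cneg a$ of pre-dual literals. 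As recorded in the construction preceding the statement, each non-$\ttt$ literal of $A$ is consumed by a unique $\aiD$, so every colour class of $\linkingof\Deri$ is a link of two pre-dual literals and every literal is $\ttt$-labelled or linked. For the dualizer, exactly as in case~6 of the proof of Theorem~\ref{thm:MLL1->fonet}, applying $\dsubstof\Deri$ to a linked pair reverses the substitutions recorded at the $\exD$-instances below the originating $\aiD$, returning the complementary atoms first introduced there; hence $\dsubstof\Deri$ unifies every link.

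The substance of the theorem is the absence of an induced bimatching, and here I would transfer the entire argument already made for $\FOMLL$. Since $\Deri$ witnesses that $A$ is provable in $\FOMLS$, Lemma~\ref{lem:MLS1->MLL1} supplies a $\FOMLL$ proof $\Pi$ of $\sqn A$, which we may take to be rectified (rectification does not change $\graphof{\cdot}$, the links, or the substitutions). Because an occurrence of $\fff$, once present, persists in every sequent below it when a $\FOMLL$ proof is read towards its conclusion, the $\fff$-freeness of $\sqn A$ guarantees that $\Pi$ is $\fff$-free throughout; thus Theorem~\ref{thm:MLL1->fonet} applies and tells us that $\fographof\Pi$ is a fonet with dualizer $\dsubstof\Pi$. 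The crux is then the identity $\fographof\Pi=\fographof\Deri$: I would check that the axiom pairs of $\Pi$ are precisely the classes of $\linkingof\Deri$ and that the terms at its $\exists$-instances are precisely those of $\dsubstof\Deri$. Organising $\Deri$ in the decomposed shape of Lemma~\ref{lem:MLS1-decomposition}, with all $\exD$-instances gathered into a single bottom block, makes this transparent: above that block the substitutions are trivial, so the links are literal dual pairs, and the bottom block contributes exactly the substitutions constituting $\dsubstof\Deri$. Granting the identity, $\fographof\Deri=\fographof\Pi$ is a fonet and $\dsubstof\Deri=\dsubstof\Pi$ one of its dualizers, which is the claim.

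The step I expect to be the main obstacle is making the passage $\Deri\mapsto\Pi$ genuinely linking-preserving rather than merely provability-preserving. The translation behind Lemma~\ref{lem:MLS1->MLL1} routes through $\cut$ and cut-elimination (Theorem~\ref{thm:cutelim}), and cut-elimination in $\MLL$ composes axiom links into chains; one must argue that the pairing this induces on the atoms of the conclusion is exactly the $\aiD$-linking of $\Deri$, and likewise that the existential substitutions survive unchanged. Keeping each literal occurrence tied to its originating $\aiD$ while the shallow derivations of Lemma~\ref{lem:shallow} are threaded through the context $\Sconhole$ by Lemma~\ref{lem:context} is the delicate bookkeeping at the heart of the proof. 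A self-contained alternative that sidesteps cut-elimination is a direct induction on $\Deri$ replaying the case analysis behind Theorem~\ref{thm:MLL1->fonet}, the genuinely new cases being $\sw$ and $\mix$: these rules delete edges of the fograph, and one must verify that no such deletion can simultaneously create a matching in $\gC$ and in its leap graph $\lgC$.
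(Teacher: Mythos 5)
Your route is genuinely different from the one the paper takes, and it is in fact the road the paper explicitly declines: the remark immediately following Theorem~\ref{thm:MLS1->fonet} describes exactly your strategy (translate $\Deri$ to an $\FOMLL$ proof $\Pi$ and invoke Theorem~\ref{thm:MLL1->fonet}) and points out that it requires extending the fograph construction to the $\cut$ rule and proving that linking and dualizer are invariant under cut elimination, ``similarly to unification nets.'' That invariance is precisely the step you flag as ``the main obstacle'' and ``delicate bookkeeping'' without carrying it out, so as written the proposal has a genuine gap at its centre. The passage through Lemma~\ref{lem:MLS1->MLL1} composes, at every rule instance of $\Deri$, the inductively obtained proof with a shallow proof of $\sqn{\cneg{\Scons A},\Scons B}$ via $\cutr$ and then eliminates the cut; the identity $\fographof\Pi=\fographof\Deri$ therefore depends on tracking how axiom links compose into chains across all these cut reductions and on how the $\exists$ witnesses are threaded through Lemmas~\ref{lem:shallow} and~\ref{lem:context}. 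Appealing to the decomposed form of Lemma~\ref{lem:MLS1-decomposition} does not dissolve this, since the translation to $\FOMLL$ still proceeds instance by instance through cut. Your ``self-contained alternative'' (direct induction on $\Deri$) has the same character: the only hard cases are $\sw$ and $\mix$, and showing that $\sw$ cannot create a bimatching \emph{is} the theorem, not a verification one can wave at; moreover a direct induction must also handle the fact that the dependency edges of the leap graph are computed from the \emph{most general} dualizer and so can change globally when an $\exD$ instance is added.

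The paper avoids all of this by a reduction to the propositional setting via the \emph{frame}: dependencies between binders are encoded as fresh propositional links $\set{q_{i,j},\dual q_{i,j}}$, quantifiers are erased, and each $\forall$/$\exists$ instance of $\Deri$ is replaced by a small $\set{\sw,\aiD,\tttD,\fequ}$ derivation, yielding a propositional derivation $\frameof\Deri$ with $\fographof{\frameof\Deri}=\frameof{\fographof\Deri}$. The known propositional result of Retor\'e and Stra{\ss}burger then excludes bimatchings in the frame, and Lemma~\ref{lem:frame} transfers this back to $\fographof\Deri$. What the frame buys, and what your proposal lacks, is a way to handle the quantifier dependencies without ever touching cut elimination or re-proving the switch case. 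Your preliminary observations (that $\fff$-freeness of $A$ forces every $\aiD$ instance to consume a genuine pre-dual pair, and that $\dsubstof\Deri$ unifies the links) are correct and match the paper's setup; the missing piece is a complete argument for the absence of bimatchings.
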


To prove this theorem, we have to show that no inference rule in
$\FOMLS$ can introduce a bimatching. To simplify the argument, we
introduce the \bfit{frame}~\cite{hughes:unifn} of the linked fograph $\gC$,
which is a linked (propositional) cograph in which the dependencies
between the binders in $\gC$ are encoded as links. 

More formally, let $C$ be a formula with $\graphof C=\gC$, to which we
exhaustively apply the following
subformula rewriting steps, to obtain a sequent $\frameof C$:
\begin{enumerate}
\item {\bf Encode dependencies as fresh links.} For each dependency
  $\set{\single {x_i},\single {y_j}}$ in $\gC$, with corresponding
  subformulas $\exists x_i. A$ and $\forall y_j. B$ in $C$, we pick a
  fresh (nullary) predicate symbol $q_{i,j}$, and then replace $\exists
  x_i. A$ by $\dual q_{i,j} \cand \exists x_i. A$, and replace $\forall y_j. B$ by
  $q_{i,j} \cor \forall y_j. B$.
\item {\bf Erase quantifiers.} After step 1, remove all the
  quantifiers, i.e., replace $\exists x_i.A$ by $A$ and replace $\forall
  y_j.B$ by $B$ everywhere.
\item {\bf Simplify atoms.} After step 2, replace every predicate
  $pt_1 \ldots t_n$ (resp. $\dual{p}t_1 \ldots t_n$) with a nullary
  predicate symbol $p$ (resp.~$\dual p$)
\end{enumerate}
Then $\linkingof{\frameof C}$ consists of the pairs induced by
$\linkingof\gC$ and the new pairs $\set{q_{i,j},\dual q_{i,j}}$ introduced in
step~1 above.
We call $\frameof C$ the \bfit{frame} of $C$ and we define the \bfit{frame} of $\gC$, denoted
$\frameof\gC$, as
$\tuple{\graphof{\frameof C},\linkingof{\frameof C}}$.

\begin{lemma}\label{lem:frame}
  If a linked fograph $\gC$ has an induced bimatching then so does its frame $\frameof\gC$. 
\end{lemma}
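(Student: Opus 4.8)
The plan is to transport a bimatching directly from $\gC$ to $\frameof\gC$ along the vertex correspondence implicit in the frame construction. Suppose $W\subseteq\vC$ induces a bimatching, so $W$ induces a matching both in $\gC$ and in $\lgC$. First I would record the structure of $W$: since dependencies only connect binders and links only connect literals, the matching condition in $\lgC$ forces each literal in $W$ to be paired, via its unique link partner, with another literal in $W$, and each binder in $W$ to be paired, via a dependency that is a genuine leap, with a binder of opposite kind in $W$. In particular every existential binder $\single{x_i}\in W$ is dependency-matched to exactly one universal binder $\single{y_j}\in W$, and conversely; and no $\ttt$-vertex lies in $W$.

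I would then define a replacement map $\mu$ from $W$ into the vertices of $\frameof\gC$: each literal is sent to its simplified copy (step 3 of the construction leaves the link structure among literals intact, since a pre-dual pair stays pre-dual after erasing arguments), and each binder $\single{x_i}$ (resp.\ $\single{y_j}$) paired by the dependency $\set{\single{x_i},\single{y_j}}$ is sent to the fresh literal $\dual q_{i,j}$ (resp.\ $q_{i,j}$) that step 1 introduces for that dependency. Because distinct dependencies use fresh predicate symbols, $\mu$ is injective, so $W'\coloneqq\mu(W)$ is in bijection with $W$.

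The heart of the argument is a neighbourhood-preservation claim: for all $u,w\in W$, the pair $uw$ is an edge of $\gC$ iff $\mu(u)\mu(w)$ is an edge of $\frameof\gC$. The point is that step 1 inserts $\dual q_{i,j}$ as a new conjunct at the very root of the subformula $\exists x_i.A$ and $q_{i,j}$ as a new disjunct at the root of $\forall y_j.B$, so that in the cograph decomposition tree the fresh literal occupies exactly the position of the erased quantifier node. Hence the least common ancestor of $\mu(u)$ and $\mu(w)$ has the same type ($\cand$ or $\cor$) as that of $u$ and $w$: an existential binder is joined to its whole scope, and so is its replacement $\dual q_{i,j}$; a universal binder is separated from its scope, and so is $q_{i,j}$; and for any vertex outside the relevant scope the ancestor node is literally unchanged by the insertion and the erasure. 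Checking this for each combination — literal/literal, literal/binder, and the nested and disjoint cases of binder/binder — is routine once this cotree picture is set up. Given the claim, the $\gC$-matching condition on $W$ transports verbatim to the cograph-matching condition on $W'$. For the leap-matching condition on $W'$ I note that $\frameof\gC$ has no binders, hence no dependencies, so its leaps are precisely its links: a persisting literal $a\in W'$ keeps its unique partner $\dual a\in W'$ and gains no new leap-neighbour, while a replacement literal $\dual q_{i,j}\in W'$ has exactly one partner $q_{i,j}$, which lies in $W'$ because $\single{y_j}\in W$. Thus every vertex of $W'$ has exactly one leap-neighbour in $W'$, and $W'$ is a bimatching of $\frameof\gC$.

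The main obstacle is the neighbourhood-preservation claim, specifically the binder/binder cases where both endpoints are relocated: one must show that inserting the two fresh literals at the roots of their respective quantifier subtrees preserves the $\cand$/$\cor$ type of their least common ancestor, including the subcase where one quantifier subtree is nested in the scope of the other. A secondary nuisance is that a single existential binder may participate in several dependencies, so several $\dual q_{i,j}$ are inserted at the same root; this is harmless because the $\lgC$-matching selects only one universal partner in $W$, so only one of these fresh literals enters $W'$.
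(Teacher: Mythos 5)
Your argument is correct and is precisely the expansion of what the paper leaves implicit: its own proof of this lemma is the one-liner ``Immediately from the construction of the frame.'' Your transport map (literals to their simplified copies, each matched binder to the fresh literal $q_{i,j}$ or $\dual q_{i,j}$ of its selected dependency), together with the cotree observation that the fresh literals occupy exactly the positions of the erased quantifier nodes, is the intended justification, and your handling of the two subtleties (multiple dependencies at one existential binder, and the mutual pairing of $\single{x_i}$ with $\single{y_j}$ forced by the singleton condition in $\lgraph{\gC}$) is sound.
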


\begin{proof}
  Immediately from the construction of the frame.
\end{proof}

\begin{proof}[Proof of Theorem~\ref{thm:MLS1->fonet}]
  From $\Deri$ we construct a derivation $\frameof\Deri$ of $\frameof
  A$ in the propositional fragment of $\FOMLS$, such that
  \hbox{$\fographof{\frameof\Deri}=\frameof{\fographof\Deri}$}.  The
  rules $\aiD,\tttD,\mix$ and $\sw$ are translated trivially, and for
  $\fequ$, it suffices to observe that the frame construction is
  invariant under $\fequ$. Finally, for the rules $\forall$ and
  $\exists$, proceed as follows. Every instance of $\forall$ is
  replaced by the derivation on the right below:\footnote{For better
  readability we omit superfluous parentheses, knowing that we always
  have $\fequ$ incorporating associativity and commutativity of
  $\vlan$ and $\vlor$.} \vadjust{\vskip-4ex}
  \begin{equation*}
    \scalebox{.95}{$
    \vlinf{\forall}{}{\Scons{\forall y_j.\ttt}}{\Scons{\ttt}}
    \;\leadsto
    \vlderivation{
      \vlde{\set{\sw,\fequ}}{\DDeri_2}{\Scons{q_{h_1,j}\vlor\cdots\vlor
          q_{h_n,j}\vlor(\dual q_{h_1,j}\vlan\cdots\vlan\dual
          q_{h_n,j}\vlan\ttt)}}{
        \vlde{\set{\aiD,\tttD}}{\DDeri_1}{
          \Scons{(q_{h_1,j}\vlor \dual q_{h_1,j})\vlan\cdots\vlan(q_{h_n,j}\vlor \dual q_{h_n,j})\vlan\ttt}}{
          \vlhy{\ttt}}}}
    $}
  \end{equation*}
  where $h_1,\ldots,h_n$ range over the indices of the existential
  binders dependent on that $y_j$. It is easy to see how $\DDeri_1$ is
  constructed. The construction of $\DDeri_2$, using $\sw$ and $\fequ$, is standard, see,
  e.g.~\cite{ATS:esslli2019,gug:str:01,brunnler:tiu:01,dissvonlutz}. Then, every
  occurrence of $\forall y_j.F$ is replaced by
  $q_{h_1,j}\vlor\cdots\vlor q_{h_n,j}\vlor (\dual
  q_{h_1,j}\vlan\cdots\vlan\dual q_{h_n,j}\vlan F)$ in the derivation
  below that $\forall$-instance. Now, observe that all instances of
  the $\exists$-rule introducing $x_i$ dependent on $y_j$ must occur
  below in the derivation (otherwise $\Deri$ would not be rectified).
  Now consider such an instance $\vlinf{\exists}{}{\Scons{\exists
      x_i.B}}{\Scons{B\ssubst {x_i} t}}$.  Its context $\Sconhole$
  must contain all the $\forall y_j$ the $\exists x_i$ depends on,
  such that $B$ is in their scope. Following the translation of the
  $\forall$ rules above, we can therefore translate the $\exists$-rule
  instance by the following derivation
  \begin{equation*}
    \vlder{\set{\sw,\fequ}}{\DDeri_3}{ S_0\cons{S_1\cons{\cdots
          S_{l-1}\cons{S_{l}\cons{\dual q_{i,k_1}\vlan\dual
              q_{i,k_2}\vlan\cdots q_{i,k_l}\vlan B'}}\cdots}}}{
      S_0\cons{\dual q_{i,k_1}\vlan S_1\cons{\dual
          q_{i,k_2}\vlan\cdots S_{l-1}\cons{\dual q_{i,k_l}\vlan
            S_{l}\cons{B'}}\cdots}}}
  \end{equation*}
  where $k_1,\ldots,k_l$ are the indices of the universal binders
  on which that $x_i$ depends, and $B'$ is $B$ in which all predicates
  are replaced by a nullary one (step~3 in the frame construction). The
  derivation $\DDeri_3$ can be constructed in the same way as
  $\DDeri_2$.

  Doing this to all instances of the rules $\forall$ and $\exists$ in
  $\Deri$ yields indeed a propositional derivation $\frameof\Deri$
  with
  \hbox{$\fographof{\frameof\Deri}=\frameof{\fographof\Deri}$}. It has
  been shown by Retor\'e~\cite{retore:pomset:RR} and
  rediscovered by Stra{\ss}burger~\cite{dissvonlutz} that
  $\fographof{\frameof\Deri}=\tuple{\graphof{\frameof
      C},\linkingof{\frameof\Deri}}$ cannot contain an induced
  bimatching. By Lemma~\ref{lem:frame}, $\fographof{\Deri}$ does not
  have an induced bimatching either. Furthermore, it follows from the
  definition of $\dsubstof\Deri$ that it is a dualizer for
  $\fographof{\Deri}$.
\end{proof}

\begin{remark}
  There is an alternative path of proving Theorem~\ref{thm:MLS1->fonet}
  by translating $\Deri$ to an $\FOMLL$-proof $\Pi$, observing that
  this process preserves the linking and the dualizer. However, for
  this, we have to extend the construction from the previous subsection to the $\cut$-rule,
  and then show that linking and dualizer of a sequent proof $\Pi$ are
  invariant under cut elimination. This can be done similarly to
  unification nets in~\cite{hughes:unifn}.
\end{remark}

\subsection{From Fonets to $\FOMLL$ Proofs}

Now we are going to show how from a given fonet
$\tuple{\gC,\linkingof\gC}$ we can construct a sequent proof $\Pi$ in
$\FOMLL$ such that $\fographof\Pi=\tuple{\gC,\linkingof\gC}$. In the
proof net literature, this operation is also called
\emph{sequentialization}. The basic idea behind our sequentialization
is to use the frame of $\gC$, to which  we can apply the \emph{splitting tensor
theorem}, and then reconstruct the sequent proof $\Pi$.

Let $\Gamma$ be a propositional sequent and $\linkingof\Gamma$ be a
linking for $\graphof\Gamma$. A conjunction formula $A\vlan B$ is
\bfit{splitting} or a \bfit{splitting tensor} if
$\Gamma=\Gamma',A\vlan B,\Gamma''$ and
$\mathord{\linkingof\Gamma}=\mathord{\linking_1}\cup\linking_2$, such
that $\linking_1$ is a linking for $\graphof{\Gamma',A}$ and
$\linking_2$ is a linking for $\graphof{B,\Gamma''}$, i.e., removing
the $\vlan$ from $A\vlan B$ splits the linked fograph $\tuple{
  \graphof\Gamma,\linkingof\Gamma}$ into two fographs.
We say that $\tuple{
  \graphof\Gamma,\linkingof\Gamma}$ is \bfit{mixed} iff 
$\Gamma=\Gamma',\Gamma''$ and
$\mathord{\linkingof\Gamma}=\mathord{\linking_1}\cup\linking_2$, such
that $\linking_1$ is a linking for $\graphof{\Gamma'}$ and
$\linking_2$ is a linking for $\graphof{\Gamma''}$.
Finally, $\tuple{
  \graphof\Gamma,\linkingof\Gamma}$ is \bfit{splittable} if it is mixed or has a splitting tensor.

\begin{thm}
  \label{thm:splitting}
  Let $\Gamma$ be a $\fff$-free propositional sequent containing only atoms
  and $\vlan$-formulas, and $\linkingof\Gamma$ be a
  linking for $\graphof\Gamma$. If $\tuple{
    \graphof\Gamma,\linkingof\Gamma}$ does not induce a bimatching then it is splittable.
\end{thm}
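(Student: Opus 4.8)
The plan is to run the classical splitting-tensor argument, specialised to the fact that here $\graphof\Gamma$ is a disjoint union of cliques. First I would record two structural facts. Since every formula of $\Gamma$ is built from atoms by $\vlan$ only, $\graphof\Gamma$ is the disjoint union of the cliques $\graphof{A_1},\dots,\graphof{A_n}$, one per formula $A_i$; call these cliques $C_1,\dots,C_n$. Next, a link can never join two adjacent vertices: if $v\linkingof\gC w$ and $v,w$ are adjacent in $\graphof\Gamma$, then $\set{v,w}$ induces a matching in both $\gC$ and $\lgC$, i.e.\ a bimatching, contradicting the hypothesis. Hence no link lies inside a clique, and every link joins two distinct $C_i$. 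More generally, unfolding the definitions shows that an induced bimatching is exactly an induced \emph{alternating cycle}: a cyclic sequence in which links of $\lgC$ and clique-edges of $\gC$ strictly alternate, each clique contributing at most one matched pair of vertices (indeed, each vertex carries at most one link, so after arriving along a link one is forced to leave along a clique-edge). This is the correctness criterion we shall contradict.

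The argument then splits on the \emph{link graph} $\mathcal H$, whose vertices are the cliques $C_1,\dots,C_n$ and which has an edge $C_iC_j$ whenever some link joins $C_i$ to $C_j$. If $\mathcal H$ is disconnected, partition the cliques into two nonempty link-closed groups; since no link crosses, the induced partition $\Gamma=\Gamma',\Gamma''$ splits the linking as $\linkingof\Gamma=\linking_1\cup\linking_2$ with $\linking_i$ a linking of each side, so $\tuple{\graphof\Gamma,\linkingof\Gamma}$ is mixed and we are done.

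It remains to treat the connected case, where I would produce a splitting tensor. For a clique $C$ with $|C|\ge 2$, call $u,v\in C$ \emph{externally connected} if there is a path from $u$ to $v$ in $\tuple{\graphof\Gamma,\linkingof\Gamma}$ whose interior avoids $C$ (such a path necessarily leaves $u$ and enters $v$ along links, and alternates throughout); write $\mathrm{Ext}(C)$ for the resulting graph on the vertex set $C$. The key equivalence is: $C$ is a splitting tensor — i.e.\ admits a bipartition $C=S\sqcup T$ inducing a split $\Gamma=\Gamma',(\bigwedge S)\vlan(\bigwedge T),\Gamma''$ of the linking, the conjunction being regrouped up to $\fequ$ — if and only if $\mathrm{Ext}(C)$ is disconnected; indeed a bipartition works exactly when no external path joins its two blocks, which is achievable with both blocks nonempty precisely when $\mathrm{Ext}(C)$ has at least two components (put each component, and every other clique, on the corresponding side). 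To find such a $C$, suppose for contradiction that $\mathrm{Ext}(C)$ is connected for every $C$ with $|C|\ge 2$, and pick any such $C$ (one exists unless every formula is an atom, in which case connectedness of $\mathcal H$ forces $\Gamma=a,\cneg a$, the axiom that is the base case of sequentialisation). Connectivity of $\mathrm{Ext}(C)$ yields distinct $u,v\in C$ joined by an external path $P$; together with the clique-edge $uv$, $P$ closes into an alternating cycle, which will contradict the absence of bimatchings.

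The main obstacle is turning this closed alternating walk into a genuine \emph{induced} bimatching. Link-chords cannot occur, since $\lgC$ is itself a matching and each vertex's unique link is already used by $P$; the only possible chords are clique-edges between two non-consecutive interior vertices lying in a common clique (no chord can touch $u$ or $v$, whose clique $C$ is avoided by the interior). I would rule these out by choosing $P$ of minimal length over all $u,v\in C$: any such chord would permit a shortcut through the offending clique along a single clique-edge, producing a strictly shorter external path or a smaller alternating cycle, contradicting minimality. Hence $P\cup\set{uv}$ induces a matching in both $\gC$ and $\lgC$, the desired induced bimatching. (A $\ttt$-labelled vertex of $C$ is the trivial sub-case: carrying no link it is isolated in $\mathrm{Ext}(C)$, which is then immediately disconnected, giving the splitting tensor $\ttt\vlan(\cdots)$.) Executing this chord-elimination carefully — and noting that both sides of a split inherit the no-bimatching property, since induced subgraphs with the restricted linking remain bimatching-free, which is what the subsequent recursive sequentialisation needs — is the only delicate point; the rest is bookkeeping with the clique structure.
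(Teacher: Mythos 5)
The paper does not actually prove this statement---it is imported from the literature (Girard, Danos--Regnier, Retor\'e, Hughes)---so your self-contained argument, which is essentially Retor\'e's alternating-cycle criterion for RB-cographs specialised to disjoint unions of cliques, is a reasonable thing to supply and most of it is sound: links cannot be internal to a clique, the mixed case via a disconnected link graph is right, and the chord-elimination is the standard delicate point. One presentational wrinkle there: minimising the external path between two fixed vertices of $C$ does not quite cover the case where your shortcut produces a shorter alternating cycle that no longer passes through $C$; it is cleaner to minimise over \emph{all} alternating cycles, observe that a shortest one meets each clique in at most one matched pair (otherwise two pairs in the same clique allow a strictly shorter reclosure), and conclude it is chordless, hence an induced bimatching.

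The one genuine gap is your splitting criterion. The theorem, and its use in Theorem~\ref{thm:fonet->MLL1} where the $\vlan$-rule of $\FOMLL$ is applied, requires the split to occur at the actual root connective of a formula $A\vlan B$ of $\Gamma$, yielding premises $\Gamma',A$ and $B,\Gamma''$; the sequent calculus has no $\fequ$-rule, so ``a bipartition of the clique, regrouped up to $\fequ$'' is weaker than what is claimed, and your contradiction argument only delivers that \emph{some} clique admits \emph{some} separating bipartition. Your own cycle construction closes this gap if you push it one step further: under the no-bimatching hypothesis, $\mathrm{Ext}(C)$ is not merely disconnected for some $C$---it is \emph{edgeless} for every $C$, since a single edge of $\mathrm{Ext}(C)$ already gives an external path which, closed by the clique edge and chord-eliminated, is an induced bimatching. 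Once every $\mathrm{Ext}(C)$ is edgeless, each connected component of the link structure outside $C$ is linked to at most one vertex of $C$ (that vertex carries at most one link), so \emph{every} bipartition of $C$, in particular the one given by the root $\vlan$ of the formula, splits the linking. So restructure the conclusion: rather than contradicting connectivity of all the $\mathrm{Ext}(C)$, show they are all edgeless and read off the root split directly, with the all-atoms case being either mixed or the axiom $\sqn{a,\cneg a}$, which, as you correctly note, the theorem statement silently leaves to the base case of sequentialisation.
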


This is the well-known splitting-tensor-theorem
\cite{girard:87,danos:regnier:89}, adapted for the presence of
$\mix$. In the setting of linked cographs, it has first been proved by
Retor\'e~\cite{retore:03,retore:99} and then rediscovered by Hughes~\cite{hughes:pws}.
We use it now for our sequentialization:

\begin{thm}
  \label{thm:fonet->MLL1}
  Let $\tuple{\gC,\linkingof\gC}$ be a fonet, and let $\Gamma$ be a sequent with $\graphof\Gamma=\gC$. Then there is an
  $\FOMLL$-proof $\Pi$ of $\Gamma$, such that
  $\fographof\Pi=\tuple{\gC,\linkingof\gC}$.
\end{thm}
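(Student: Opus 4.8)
The plan is to prove this by \emph{sequentialization}, inducting on the number of vertices of $\gC$ and peeling off the bottom-most rule of $\Pi$. Throughout I would fix a most general dualizer $\dsubstof\gC$, which determines the dependencies (hence the leaps) of $\gC$, and maintain the invariant that every fograph arising in the induction is again a fonet: deleting a vertex from $\gC$ yields an induced subgraph, and any induced bimatching of such a subgraph is already an induced bimatching of $\gC$ (since membership in the bimatching witness $W$ is unaffected by adding back vertices outside $W$), so ``no induced bimatching'' is inherited by every sub-fograph we produce, while $\dsubstof\gC$ restricts to a dualizer on each piece.

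First I would dispose of the purely structural steps. Whenever $\Gamma$ has a top-level disjunction $A\vlor B$, apply the $\vlor$-rule: this leaves $\graphof\Gamma$, its linking, and its dualizer unchanged, so we recurse immediately. We thus reduce to the case where $\Gamma$ is a multiset of atoms, $\vlan$-formulas, and quantified formulas. A top-level universal $\forall x.A$ whose binder $\single x$ is, in the current fograph, the target of no dependency can be removed by the $\forall$-rule: rectification guarantees $x$ is not free in the rest of $\Gamma$, so the eigenvariable condition holds, and deleting the (disconnected) binder vertex again yields a fonet. A top-level $\exists x.A$ all of whose universal dependencies have already been peeled can be removed by the $\exists$-rule with the witness $\dsubstof\gC(x)$, whose free variables are by then all available.

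The heart of the argument is to locate the bottom rule when no such immediate step applies, and here I would use the frame. Passing to $\frameof\gC$ turns $\gC$ into a propositional linked cograph in which each dependency $\set{\single x,\single y}$ is recorded by a fresh link $\set{q_{i,j},\dual q_{i,j}}$, with $\dual q_{i,j}\cand(\cdot)$ guarding the existential and $q_{i,j}\cor(\cdot)$ guarding the universal. Because these fresh links encode exactly the dependency-leaps of $\gC$, an induced bimatching of $\frameof\gC$ pulls back to one of $\gC$ (the converse direction of Lemma~\ref{lem:frame}, established by the same inspection of the construction); since $\gC$ is a fonet it has none, hence neither does $\frameof\gC$. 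After removing the frame's top-level disjunctions we obtain a $\fff$-free propositional sequent built only from atoms and $\vlan$-formulas, so Theorem~\ref{thm:splitting} applies: $\frameof\gC$ is mixed or has a splitting tensor. I would translate this back to $\gC$: a mixed frame yields a $\mixr$-rule splitting $\Gamma$ into two sub-fonets; a splitting tensor coming from an original $\vlan$-formula of $\Gamma$ yields a $\vlan$-rule splitting along it; and a splitting tensor of the guarding form $\dual q_{i,j}\cand(\cdots)$ signals that the universal $\forall y$ hosting the partner $q_{i,j}$ has had its dependent existentials separated off, so that $\forall y$ may now be peeled, feeding back into the quantifier steps above.

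Assembling these steps bottom-up produces $\Pi$. Two bookkeeping facts finish the proof: by construction each axiom of $\Pi$ consumes exactly one link of $\gC$, so $\linkingof\Pi=\linkingof\gC$, and each $\exists$-rule uses the mgu witness, so $\dsubstof\Pi=\dsubstof\gC$; hence $\fographof\Pi=\tuple{\graphof\Gamma,\linkingof\Pi}=\tuple{\gC,\linkingof\gC}$. I expect the main obstacle to lie in the correctness of the translation from the frame's splitting to a first-order rule: one must verify that the guarding-tensor case always corresponds to a universal binder that is genuinely removable (no surviving dependency), that the eigenvariable conditions and the existential witnesses drawn from $\dsubstof\gC$ are simultaneously respected by the induced quantifier order, and that each resulting sub-fograph is again a fonet so that the induction hypothesis applies.
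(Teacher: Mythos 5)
Your overall architecture is the same as the paper's: greedily peel $\vlor$, $\forall$, and dependency-free $\exists$ bottom-up, and when stuck pass to the frame and invoke the splitting-tensor theorem (Theorem~\ref{thm:splitting}). The genuine gap is in your treatment of universal quantifiers. You only peel a top-level $\forall y.A$ when $\single y$ is the target of no dependency, but this condition is backwards: a dependency $\set{\single x,\single y}$ means the witness term for $x$ contains $y$, so in the sequent proof the $\exists$-rule for $x$ must sit \emph{above} the $\forall$-rule for $y$; read bottom-up, a universal that is a dependency target should be peeled \emph{eagerly}, and since rectification already guarantees the eigenvariable condition there is never an obstruction to peeling a top-level universal. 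The paper peels all top-level universals unconditionally, so that by the time the frame is invoked the sequent contains no $\forall$ at all.

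Because of your restriction, your ``stuck'' configuration may still contain top-level universals, whose frames are $\vlor$-rooted guards $q_{i,j}\cor B'$, so Theorem~\ref{thm:splitting} does not apply directly; and after dissolving those disjunctions inside the frame, a mixed split or a splitting tensor of the frame need no longer respect the formula boundaries of $\Gamma$. Worse, your resolution of the guard-tensor case is wrong: if $\dual q_{i,j}\cand(\cdots)$ is splitting, the split sends $\dual q_{i,j}$ (and hence, via its link, $q_{i,j}$ and the universal $\forall y_j$) to one component and the body of the dependent existential $\exists x_i$ to the other, i.e., it separates an existential from the very universal its witness term mentions --- this cannot be translated into any valid inference on $\Gamma$, and in particular does not license peeling $\forall y_j$. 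The paper's point is precisely that, once all universals and all dependency-free existentials have been peeled, this case is \emph{impossible} (it contradicts the assumed shape of $\Gamma$), which is what makes the induction go through. A minor further issue: your induction measure (number of vertices of $\gC$) does not decrease under the $\vlor$-rule, which leaves the graph unchanged; the paper inducts on the size of $\Gamma$ instead.
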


\begin{proof}
  Let $\dsubstof\gC$ be the dualizer of $\tuple{\gC,\linkingof\gC}$.  We
  proceed by induction on the size of $\Gamma$ (i.e., the number of
  symbols in it, without counting the commas). If $\Gamma$ contains a
  formula with $\vlor$-root, or a formula $\forall x.A$, we can
  immediately apply the $\vlor$-rule or the $\forall$-rule of $\FOMLL$
  and proceed by induction hypothesis. If $\Gamma$ contains a formula
  $\exists x.A$ such that the corresponding binder $\single x$ in $\gC$
  has no dependency, then we can apply the $\exists$-rule, choosing the
  term $t$ as determined by $\dsubstof\gC$, and proceed by induction
  hypothesis.  Hence, we can now assume that $\Gamma$ contains only
  atoms, $\vlan$-formulas, or formulas of shape $\exists x.A$, where
  the vertex $\single x$ has dependencies. Then the frame
  $\tuple{\graphof{\frameof\Gamma},\linkingof{\frameof\Gamma}}$ does
  not induce a bimatching and contains only atoms and
  $\vlan$-formulas, and is therefore splittable. If it is mixed, then
  we can apply the $\mix$-rule to $\Gamma$ and apply the induction
  hypothesis to the two components. If it is not mixed then there must
  be a splitting tensor. If the splitting $\vlan$ is already in
  $\Gamma$, then we can apply the $\vlan$-rule and proceed by
  induction hypothesis on the two branches. However, if
  $\frameof\Gamma$ is not mixed and all splitting tensors are
  $\vlan$-formulas introduced in step~1 of the frame construction,
  then we get a contradiction as in that case there must be a $\vlor$- or $\forall$-formula in $\Gamma$.
\end{proof}

\subsection{From Fonets to $\FOMLS$ Proofs}

We can now straightforwardly obtain the same result for $\FOMLS$:

\begin{thm}
  \label{thm:fonet->MLS1}
  Let $\tuple{\gC,\linkingof\gC}$ be a fonet, and let $C$ be a formula with $\graphof C=\gC$. Then there is
  a derivation $\vlder{\FOMLS}{\Deri}{C}{\ttt}$ such that $\fographof\Deri=\tuple{\gC,\linkingof\gC}$.
\end{thm}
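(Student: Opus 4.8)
The plan is to deduce Theorem~\ref{thm:fonet->MLS1} from the already-established Theorem~\ref{thm:fonet->MLL1}, rather than redoing the sequentialization argument from scratch in the deep-inference setting. Given a fonet $\tuple{\gC,\linkingof\gC}$ and a formula $C$ with $\graphof C=\gC$, Theorem~\ref{thm:fonet->MLL1} applied to the sequent $\Gamma=C$ (a one-formula sequent with $\graphof\Gamma=\graphof C=\gC$) yields an $\FOMLL$-proof $\Pi$ of $\sqn C$ with $\fographof\Pi=\tuple{\gC,\linkingof\gC}$. So the task reduces to translating this $\FOMLL$-proof into an $\FOMLS$-derivation while preserving the associated linked fograph.

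For the translation $\FOMLL\to\FOMLS$, I would invoke Lemma~\ref{lem:MLL1->MLS1}, which already gives that provability of $\sqn\Gamma$ in $\FOMLL$ implies provability of $\form\Gamma$ in $\FOMLS$; here $\form\Gamma=C$ since $\Gamma$ is the single formula $C$. Thus we obtain a derivation $\vlder{\FOMLS}{\Deri}{C}{\ttt}$. The remaining point is to check that this derivation carries the correct linking and dualizer, i.e.\ that $\fographof\Deri=\tuple{\gC,\linkingof\gC}$. Here I would appeal to the explicit correspondence described just before Theorem~\ref{thm:MLS1->fonet}: the linking $\linkingof\Deri$ is read off from the pairs of atoms meeting in the $\aiD$-instances, and the dualizer $\dsubstof\Deri$ from the substitutions at the $\exists$-instances. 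The inductive construction in Lemma~\ref{lem:MLL1->MLS1} translates each $\FOMLL$ rule into corresponding $\FOMLS$ steps in a way that preserves these data: axiom instances $\ax$ become $\aiD$-instances pairing the same two atoms, and $\exists$-rule instances in $\FOMLL$ become $\exD$-instances in $\FOMLS$ substituting the same terms. Hence the $\aiD$-linking and $\exD$-dualizer of $\Deri$ agree with $\linkingof\Pi$ and $\dsubstof\Pi$, which by construction equal $\linkingof\gC$ and $\dsubstof\gC$.

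The main obstacle I anticipate is the bookkeeping around rectification and the $\forall$-case of the translation. In Lemma~\ref{lem:MLL1->MLS1}, the $\forall$-rule is handled by prefixing every line of the sub-derivation with $\forall x$ and then using an $\fequ$-step; I must verify that this manipulation neither creates spurious $\aiD$-pairings nor disturbs the existing ones, and that the resulting derivation is still rectified so that $\fographof\Deri$ is well-defined via Theorem~\ref{thm:MLS1->fonet}. Since $\gC$ is a fonet (hence has no induced bimatching) and $C$ is $\fff$-free whenever $\gC$ is a linked fograph, the hypotheses needed to speak of $\fographof\Deri$ as a fonet are met. I would therefore conclude by observing that $\Deri$ is (or can be rectified to) a rectified $\FOMLS$-derivation of $C$ from $\ttt$ whose induced linked fograph is exactly $\tuple{\gC,\linkingof\gC}$, completing the proof.
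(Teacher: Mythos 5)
Your proposal is correct and follows essentially the same route as the paper: apply Theorem~\ref{thm:fonet->MLL1} to obtain an $\FOMLL$ proof of $\sqn C$ with the right linked fograph, then translate via Lemma~\ref{lem:MLL1->MLS1}, observing that the translation preserves linking and dualizer. Your additional remarks on why the $\forall$-case and rectification are harmless are a more explicit spelling-out of the paper's one-line ``observing that the translation \ldots{} preserves linking and dualizer.''
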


\begin{proof}
  We apply Theorem~\ref{thm:fonet->MLL1} to obtain a sequent
  proof~$\Pi$ of \hbox{$\sqn C$} with
  $\fographof\Pi=\tuple{\gC,\linkingof\gC}$. Then we apply
  Lemma~\ref{lem:MLL1->MLS1}, observing that the translation from
  $\FOMLL$ to $\FOMLS$ preserves linking and dualizer.
\end{proof}

\begin{remark}
  Note that it is also possible to do a direct ``sequentialization''
  into the deep inference system $\FOMLS$, using the techniques
  presented in \cite{dissvonlutz} and \cite{str:MLL2}.
\end{remark}


\section{Skew Bifibrations and Resource Management}
\label{sec:skew}

In this section we establish the relation between skew bifibrations and derivations in
$\set{\wrD,\wfaD,\acD,\cfaD,\me,\mfaD,\mexD,\fequ}$. However, 
if a derivation $\Deri$ contains instances of the rules $\cfaD$, $\mfaD$, and
$\mexD$ we can no longer naively define the rectification
$\rectif\Deri$ as in the previous section for $\FOMLS$, as these two
rules cannot be applied if premise and conclusion are rectified. For
this reason we define here rectified versions $\rectif\cfaD$,  $\rectif\mfaD$ and
$\rectif\mexD$, shown below:\vadjust{\vskip-1ex}
\begin{equation*}
  \scalebox{1}{$
  \vlinf{\rectif\cfaD}{}{\Scons{\forall x.Ax}}{\Scons{\forall y.\forall x.Ax}}
  \hskip3em
  \begin{array}{c}
    \vlinf{\rectif\mfaD}{}{\Scons{\forall x.(Ax\vlor Bx)}}{\Scons{(\forall y.Ay)\vlor(\forall z.Bz)}}
    \\ \\[-1ex]
    \vlinf{\rectif\mexD}{}{\Scons{\exists x.(Ax\vlor Bx)}}{\Scons{(\exists y.Ay)\vlor(\exists z.Bz)}}
  \end{array}
  $}
\end{equation*}
Here, we use the notation $A\cdot$ for a formula $A$ with occurrences
of a placeholder $\cdot$ for a variable. Then $Ax$ stands for the
results of replacing that placeholder with $x$, and also indicating
that $x$ must not occur in $A\cdot$. Then $\forall x.Ax$ and $\forall
y.Ay$ are the same formula modulo renaming of the bound variable bound
by the outermost $\forall$-quantifier. We also demand that the variables $x$, $y$, and $z$ do
not occur in the context $\Sconhole$.

Note that in an instance of $\rectif\mfaD$ or $\rectif\mexD$ (as shown
above), we can have $x=y$ or $x=z$, but not both if the premise is
rectified. If $x=y$ and $x=z$ we have $\mfaD$ and
$\mexD$ as special cases of $\rectif\mfaD$ and $\rectif\mexD$, respectively. And similarly,  if $x=y$ then $\cfaD$
is a special case of $\rectif\cfaD$.

For a derivation $\Deri$ in
$\set{\wrD,\wfaD,\acD,\cfaD,\me,\mfaD,\mexD,\fequ}$, we can now construct the
\bfit{rectification} $\rectif\Deri$ by rectifying each line of $\Deri$, yielding
a derivation in
$\set{\wrD,\wfaD,\acD,\rectif\cfaD,\me,\rectif\mfaD,\rectif\mexD,\fequ}$.

For each instance $\vlinf{\rr}{}{P}{Q}$ of an inference rule in
$\set{\wrD,\wfaD,\acD,\rectif\cfaD,\me,\rectif\mfaD,\rectif\mexD,\fequ}$ we can
define the \bfit{induced map} $\mapof{\rr}\colon\vgraphof
Q\to\vgraphof P$ which acts as the identity for
$\rr\in\set{\me,\fequ}$ and as the canonical injection for
$\rr\in\set{\wrD,\wfaD}$. For $\rr=\acD$ it maps the vertices
corresponding to the two atoms in the premise to the vertex of the
contracted atom in the conclusion, and for
$\rr\in\set{\rectif\cfaD,\rectif\mfaD,\rectif\mexD}$ it maps the two vertices
corresponding to the quantifiers in the premise to the one in the
conclusion (and acts as the identity on all other vertices).
For a
derivation $\Deri$ in
$\set{\wrD,\wfaD,\acD,\rectif\cfaD,\me,\rectif\mfaD,\rectif\mexD,\fequ}$ we can then define
the \bfit{induced map} $\mapof\Deri$ as the composition of the induced
maps of the rule instances in $\Deri$.

\begin{lemma}\label{lem:cw->rectif}
  Let $\downsmash{\vlderivation{
    \vlde{\set{\wrD,\wfaD,\acD,\cfaD,\me,\mfaD,\mexD,\fequ}}{\Deri}{B}{
      \vlhy{A}}}}$ be given.
  Then there is a rectified derivation
  $\vlderivation{
    \vlde{\set{\wrD,\wfaD,\acD,\rectif\cfaD,\me,\rectif\mfaD,\rectif\mexD,\fequ}}{\rectif\Deri}{\rectif
      B}{ \vlhy{\rectif A}}}$, such that the induced maps
  $\mapof\Deri\colon\graphof{A}\to\graphof{B}$ and
  $\mapof{\rectif\Deri}\colon\graphof{\rectif A}\to\graphof{\rectif
    B}$ are equal up to a variable renaming of the vertex labels.
\end{lemma}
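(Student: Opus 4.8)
The plan is to induct on the number $n$ of rule instances in $\Deri$, building $\rectif\Deri$ one line at a time from conclusion to premise, exactly as the rectification of a derivation was sketched just before the lemma. To thread a single coherent choice of bound-variable names through the whole derivation, I would prove the slightly stronger statement: for \emph{every} rectification $\rectif B$ of the conclusion $B$ there is a rectified derivation $\rectif\Deri$ from some rectification $\rectif A$ of $A$ to that \emph{fixed} $\rectif B$, with $\mapof{\rectif\Deri}$ and $\mapof\Deri$ equal up to a renaming of the vertex labels. The extra generality is what lets the induction hypothesis be applied with the conclusion-rectification dictated by the step below it, so that adjacent rectified lines automatically agree on the shared formula.

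For the base case $n=0$ we have $A=B$, and we take $\rectif\Deri$ trivial on $\rectif A \coloneqq \rectif B$; both induced maps are the identity. For the inductive step, peel off the bottommost rule instance, with conclusion $B$, premise $B_1$, and rule $\rr$, and let $\Deri'$ be the remaining derivation from $A$ to $B_1$. Given $\rectif B$, the heart of the argument is a case analysis on $\rr$ that produces a rectification $\rectif{B_1}$ of $B_1$ together with a valid instance of the corresponding rectified rule with conclusion $\rectif B$ and premise $\rectif{B_1}$, whose induced map agrees with $\mapof\rr$ up to renaming. Applying the strengthened induction hypothesis to $\Deri'$ with this chosen $\rectif{B_1}$ yields $\rectif{\Deri'}$, and stacking the rectified bottom instance beneath it gives $\rectif\Deri$. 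Since $\mapof\Deri=\mapof\rr\circ\mapof{\Deri'}$ and likewise $\mapof{\rectif\Deri}=\mapof{\rectif\rr}\circ\mapof{\rectif{\Deri'}}$, and since both factors agree up to renaming with their unhatted counterparts, compatibly at the shared interface $\graphof{B_1}$, the composite maps agree up to renaming as well.

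The case analysis itself is routine for the rules that introduce no binder. For $\fequ$ the fograph is unchanged, by the theorem that $A\fequ B$ iff $\graphof A=\graphof B$ on rectified formulas, so I rectify premise and conclusion with matching binder names and take the identity map. For $\acD$ and $\me$ no binder is duplicated or moved, so propagating $\rectif B$ keeps the line rectified and the map is the prescribed one. For $\wrD$ and $\wfaD$ the premise is a sub-part of the conclusion, so its binder names are inherited from $\rectif B$ and the map is the evident injection. The substantive cases are $\cfaD$, $\mfaD$, $\mexD$, whose premises $\Scons{\forall x.\forall x.A}$, $\Scons{(\forall x.A)\vlor(\forall x.B)}$, and $\Scons{(\exists x.A)\vlor(\exists x.B)}$ are never rectified: here $\rectif{B_1}$ must rename the clashing binders apart, choosing variables fresh for $\rectif B$, and this is precisely what turns the instance into one of $\rectif{\cfaD}$, $\rectif{\mfaD}$, $\rectif{\mexD}$. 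By construction the two premise-binder vertices are sent to the single conclusion-binder vertex exactly as under $\mapof\rr$, so the induced map is again preserved up to renaming.

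The main obstacle is bookkeeping rather than conceptual: one must check that in the $\cfaD/\mfaD/\mexD$ cases the fresh names can always be chosen so that each $\rectif{B_1}$ is genuinely rectified, i.e.\ distinct from all bound and free variables already present, and that these local choices assemble into one globally rectified derivation. The strengthened induction hypothesis resolves exactly this, since the rectification of the shared line $B_1$ is fixed once and passed upward, so no later choice can disturb it; and the ``up to variable renaming'' clause absorbs any discrepancy between the names picked here and those implicit in the original $\mapof\Deri$, because $\graphof{\rectif A}$ and $\graphof{\rectif B}$ differ from $\graphof A$ and $\graphof B$ only in the bound-variable labels carried by their binder vertices.
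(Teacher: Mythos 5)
Your proposal is correct and follows essentially the same route as the paper, which simply declares the lemma ``immediate from the definition'' of the line-by-line rectification (conclusion to premise) introduced just before the statement. You have merely made explicit the induction, the strengthened hypothesis fixing the conclusion's rectification, and the case analysis showing that only $\cfaD$, $\mfaD$, $\mexD$ force the passage to their hatted variants --- all of which matches the construction the paper takes for granted.
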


\begin{proof}
  Immediate from the definition.
\end{proof}


\subsection{From Contraction and Weakening to Skew Bifibrations}

\begin{lemma}\label{lem:cw->skew}
  Let $\vlderivation{
    \vlde{\set{\wrD,\wfaD,\acD,\rectif\cfaD,\me,\rectif\mfaD,\rectif\mexD,\fequ}}{\Deri}{B}{
      \vlhy{A}}}$ be a rectified derivation. Then the induced map
  $\mapof\Deri\colon\graphof{A}\to\graphof{B}$ is a skew bifibration.
\end{lemma}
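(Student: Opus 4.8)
The plan is to prove Lemma~\ref{lem:cw->skew} by reducing it to the rule-by-rule analysis enabled by the induced-map decomposition. Since $\mapof\Deri$ is the composition of the induced maps $\mapof\rr$ of the individual rule instances, and since the class of skew bifibrations is closed under composition (each of the three defining conditions---existential-preservation, skew fibration on the underlying graphs, and fibration on the binding graphs---is preserved by composing two such maps), it suffices to show that $\mapof\rr\colon\graphof{Q}\to\graphof{P}$ is a skew bifibration for each single rule instance $\vlinf{\rr}{}{P}{Q}$ in the rectified system $\set{\wrD,\wfaD,\acD,\rectif\cfaD,\me,\rectif\mfaD,\rectif\mexD,\fequ}$. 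I would state this closure-under-composition fact as a preliminary observation (or a small separate lemma) and then carry out the case analysis.

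**First I would** dispatch the easy cases. For $\rr\in\set{\me,\fequ}$ the induced map is the identity on vertices, which is trivially a skew bifibration. For $\rr\in\set{\wrD,\wfaD}$ the induced map is the canonical injection of $\graphof Q$ as an induced subgraph into $\graphof P$ obtained by adding a disjoint summand (a weakened formula, or a fresh $\forall$-binder in the $\wfaD$ case). Here I must check that adding a graph-union summand yields a skew fibration: given $v\in\graphof Q$ and an edge $w\,\mapof\rr(v)$ in $\graphof P$, since $\graphof P = \graphof Q + \graphof{(\text{new part})}$, any such neighbour $w$ of an image vertex lies in $\graphof Q$ (there are no edges across a union), so the required lift $\tilde w = w$ exists and, being in the image, satisfies $\mapof\rr(\tilde w)w\notin\eG$ vacuously or by the subgraph condition. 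Existential-preservation holds because the embedding carries binders to binders of the same polarity, and the binding-graph fibration condition holds because the added summand contributes no new binding edges into the image. The $\wfaD$ case is where the rules $\wfaD$ (and $\cfaD$) earn their keep: the newly added $\forall$-binder must be handled so that the binding fibration condition is met precisely on the nose.

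**The core of the argument** is the quantifier-merging and atom-merging rules $\set{\acD,\rectif\cfaD,\rectif\mfaD,\rectif\mexD}$, where $\mapof\rr$ is genuinely non-injective, identifying the two premise vertices (two dual-free atoms, or two quantifiers of the same sort binding the same body) with a single conclusion vertex. For these I would verify the skew-fibration lifting condition directly from the shape of the rule: the two merged vertices have, by rectification and the side conditions on $x,y,z$, neighbourhoods that are compatible in exactly the way the skew-fibration inequality $\mapof\rr(\tilde w)w\notin\eH$ demands, and the fibration condition on binding graphs follows because the two merged binders bind corresponding literals, so each binding edge in the target lifts uniquely. Existential-preservation is immediate since merging two existential (resp. universal) binders produces an existential (resp. universal) binder. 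The case $\rr=\me$ as the one linear ``medial'' move deserves care in confirming the underlying map really is the identity on vertices despite the structural rearrangement.

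**The main obstacle** I expect is the uniqueness half of the binding-graph fibration condition in the contraction-style rules $\set{\acD,\rectif\cfaD}$: when two subtrees are identified, one must check that the preimage of each binding edge is a single edge rather than a pair, which is where the rectification hypothesis (distinct bound variables, the side conditions $x\neq y$, $x\neq z$ not both, and the demand that $x,y,z$ not occur in the context $\Sconhole$) is essential. I would lean on Lemma~\ref{lem:cw->rectif} to justify that working in the rectified system loses no generality, and I would organize the verification as a table of the four defining clauses (underlying skew fibration: existence of lift; existential-preservation; binding fibration: existence and uniqueness of lift) against the seven rule types, presenting in detail only $\wfaD$, $\acD$, and $\rectif\mfaD$ as representative of the three behaviours (disjoint embedding, atom merge, binder merge) and remarking that the remaining cases are analogous.
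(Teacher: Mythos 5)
Your overall strategy---verify each rule instance separately and then compose---diverges from the paper's proof in a way that conceals the one genuinely delicate point. The paper composes only the \emph{fibration-on-binding-graphs} and \emph{existential-preservation} properties rule by rule (these do compose trivially), and then handles the \emph{skew fibration} property by a completely different device: it translates the whole derivation $\Deri$ into a propositional derivation $\PE\Deri$ in $\set{\wrD,\acD,\rectif\acD,\me,\PE\fequ}$ via the propositional encoding, checks $\mapof{\PE\Deri}=\mapof\Deri$, and invokes the global result of \cite{str:07:RTA} that the map induced by an \emph{entire} such propositional derivation is a skew fibration. Your proposal instead rests on the ``preliminary observation'' that skew bifibrations are closed under composition because ``each of the three defining conditions is preserved by composing two such maps.'' For the skew-fibration condition this is not an observation but a substantial claim, and the obvious diagram chase fails: given skew fibrations $\phi\colon\gG\to\gH$ and $\psi\colon\gH\to\gK$, a vertex $v$ and an edge $u\,\psi(\phi(v))$ in $\gK$, the skew lift $w$ of $u$ through $\psi$ satisfies $w\,\phi(v)\in\eH$ and $\psi(w)u\notin E_{\gK}$, and the skew lift $\tilde w$ of $w$ through $\phi$ satisfies $\tilde w v\in\eG$ and $\phi(\tilde w)w\notin\eH$ --- but nothing forces $\psi(\phi(\tilde w))u\notin E_{\gK}$, since $\phi(\tilde w)$ need not equal $w$. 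Closure of skew fibrations under composition (for cographs) is true, but it is essentially equivalent to the characterization theorem of \cite{str:07:RTA} that the paper cites; it cannot be waved through as a triviality, and until you either prove it or cite it, your argument has a hole exactly where the paper's authors routed around one.

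Two smaller points. First, your claim that for $\rr\in\set{\me,\fequ}$ the induced map is ``trivially a skew bifibration'' because it is the identity on vertices is wrong for $\me$: the premise and conclusion graphs have \emph{different} edge sets ($(\graphof A\graphjoin\graphof C)\graphunion(\graphof B\graphjoin\graphof D)$ versus $(\graphof A\graphunion\graphof B)\graphjoin(\graphof C\graphunion\graphof D)$), so the identity on vertices is a non-injective-on-edges homomorphism whose skew-lifting condition genuinely needs checking (you do flag this later, but the check itself must be done, e.g.\ lifting a new $A$--$D$ edge to an old $A$--$C$ edge and verifying non-adjacency of $C$ and $D$ in the conclusion). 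Second, your per-rule analysis of $\rectif\mfaD$ and $\rectif\mexD$ would have to be carried out by hand, whereas the paper's propositional encoding reduces these to $\rectif\acD$ composed with $\fequ$ (resp.\ $\me$), which is both shorter and reuses the propositional machinery; if you repair the composition gap by proving or citing the closure lemma, your route becomes a legitimate, more self-contained alternative, but as written the proposal does not constitute a proof.
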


Before we show the proof of this lemma, we introduce another useful
concept: the \bfit{propositional encoding} $\PE{A}$ of a formula $A$,
which is a propositional formula with the property that $\graphof{\PE
  A}=\graphof{A}$. For this, we introduce new propositional variables
that have the same names as the (first-order) variables
$x\in\VAR$. Then $\PE{A}$ is defined inductively by:
\begin{equation*}
  \begin{array}{r@{\;=\;}l}
    \PE{a} & a\\
    \PE{(A \cor B)} &  \PE{A} \cor \PE{B} \\
    \PE{(A \cand B)} & \PE{A} \cand \PE{B}
  \end{array}
  \qquad
  \begin{array}{r@{\;=\;}l}
    \PE{(\forall x A)} & x \cor \PE{A}\\
    \PE{(\exists x A)} & x \cand \PE{A}
  \end{array}
\end{equation*}

\begin{lemma}
  \label{lem:PE}
  For every formula $A$, we have $\graphof{\PE A}=\graphof{A}$.
\end{lemma}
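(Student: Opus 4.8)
The plan is to prove this by a straightforward structural induction on $A$, checking that each inductive clause defining $\PE{\cdot}$ is matched by the corresponding clause of $\graphof{\cdot}$. In the base case $A = a$ is an atom, and $\PE a = a$ by definition, so $\graphof{\PE a} = \graphof a$ is immediate. For $A = B \cor C$ I would unfold $\PE{B \cor C} = \PE B \cor \PE C$ and compute $\graphof{\PE{B \cor C}} = \graphof{\PE B} + \graphof{\PE C}$, which by the induction hypothesis equals $\graphof B + \graphof C = \graphof{B \cor C}$; the case $A = B \cand C$ is identical with $\times$ replacing $+$.

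The quantifier cases carry the only real content. If $A = \forall x.B$, then $\PE{\forall x.B} = x \cor \PE B$, where $x$ is the fresh propositional variable bearing the same name as the bound variable, and so $\graphof{\PE{\forall x.B}} = \graphof{x} + \graphof{\PE B}$. The point is that $\graphof{x} = \single x$: the labelled singleton built from the propositional variable $x$ is exactly the binder vertex $\single x$ appearing in the clause $\graphof{\forall x.B} = \single x + \graphof B$. Combined with the induction hypothesis $\graphof{\PE B} = \graphof B$, this gives $\graphof{\PE{\forall x.B}} = \single x + \graphof B = \graphof{\forall x.B}$. The case $A = \exists x.B$ is symmetric, using $\cand$, $\times$, and the clause $\graphof{\exists x.B} = \single x \times \graphof B$.

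The hard part is not any computation but rather making the reading of $\graphof\cdot$ explicit in the quantifier cases: one must treat the fresh propositional variable $x$ exactly as $\graphof\cdot$ treats a binder labelled $x$, so that both yield the same labelled singleton $\single x$. Once this convention is fixed, every case closes at once and the induction is complete.
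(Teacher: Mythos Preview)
Your proposal is correct and follows exactly the approach the paper takes: the paper's proof is simply ``A straightforward induction on $A$,'' and your case analysis spells out precisely that induction, including the key observation that the propositional variable $x$ and the binder both contribute the same labelled singleton $\single x$.
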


\begin{proof}
  A straightforward induction on $A$. 
\end{proof}

We use $\PE\fequ$ to denote the restriction of $\fequ$ to
propositional formulas, i.e., the first two lines in~\eqref{eq:fequ}.

\begin{proof}[Proof of Lemma~\ref{lem:cw->skew}]
  First, observe that for every inference rule
  $\rr\in\set{\wrD,\wfaD,\acD,\rectif\cfaD,\me,\rectif\mfaD,\rectif\mexD,\fequ}$
  the induced map $\mapof{\rr}\colon\vgraphof Q\to\vgraphof P$ defines
  an existential-preserving graph homomorphism $\graphof Q\to\graphof
  P$ and a fibration on the corresponding binding graphs.  Therefore,
  their composition $\mapof\Deri$ has the same properties of fibration.

  For showing that it is also a skew fibration, we construct for
  $\Deri$ its propositional encoding $\PE\Deri$ by
  translating every line into its propositional encoding.
  The instances
  of the rules $\rectif\mfaD$ and $\rectif\mexD$ are replaced by:
  \begin{equation*}
    \scalebox{.85}{$
    \vlderivation{
      \vlin{\rectif\acD}{}{\Scons{x\vlor(\PE{(Ax)}\vlor\PE{(Bx)})}}{
        \vlin{\fequ}{}{\Scons{(y\vlor z)\vlor(\PE{(Ay)}\vlor\PE{(Bz)})}}{
          \vlhy{\Scons{(y\vlor\PE{(Ay)})\vlor (z\vlor\PE{(Bz)})}}}}}
    \qquad\!
    \vlderivation{
      \vlin{\rectif\acD}{}{\Scons{x\vlan(\PE{(Ax)}\vlor\PE{(Bx)})}}{
        \vlin{\me}{}{\Scons{(y\vlor z)\vlan(\PE{(Ay)}\vlor\PE{(Bz)})}}{
          \vlhy{\Scons{(y\vlan\PE{(Ay)})\vlor (z\vlan\PE{(Bz)})}}}}}
    $}
  \end{equation*}
  respectively, where $\rectif\acD$ is a $\acD$ that renames the
  variables---the propositional variable, as well as the first-order
  variable of the same name---as everything is rectified, there is no
  ambiguity here. Any instance of a rule $\wD$, $\acD$, $\me$, or
  $\fequ$ is translated to an instance of the same rule,
  $\rectif\cfaD$ is translated to $\rectif\acD$, and $\wfaD$ is
  translated to $\wD$.

  This gives us a derivation $\vlderivation{
    \vlde{\set{\wrD,\acD,\rectif\acD,\me,\PE\fequ}}{\PE\Deri}{\PE B}{
      \vlhy{\PE A}}}$ such that $\mapof{\PE\Deri}=\mapof\Deri$. It has
  been shown in~\cite{str:07:RTA} that $\mapof{\PE\Deri}$ is a skew
  fibration.
  Hence,
  $\mapof\Deri$ is a skew fibration.
\end{proof}

\subsection{From Skew Bifibrations to Contraction and Weakening}

\begin{lemma}\label{lem:skew->cw}
  Let $\gA$ and $\gB$ be fographs, let $\phi\colon\gA\to\gB$ be a skew
  bifibration, and let $A$ and $B$ be formulas with $\graphof A=\gA$
  and $\graphof B=\gB$. Then there are derivations
  \begin{equation*}
    \vlderivation{
      \vlde{\set{\wrD,\wfaD,\acD,\rectif\cfaD,\me,\rectif\mfaD,\rectif\mexD,\fequ}}{\rectif\Deri}{B}{
        \vlhy{A}}}
    \quand
    \vlderivation{
      \vlde{\set{\wrD,\wfaD,\acD,\cfaD,\me,\mfaD,\mexD,\fequ}}{\Deri}{B}{
        \vlhy{A\rsubstof\phi}}}
  \end{equation*}
  such that $\mapof{\rectif\Deri}=\phi$ and $\rectif{\Deri}$ is a
  rectification of $\Deri$, and $\rsubstof\phi$ is the substitution
  induced by $\phi$.
\end{lemma}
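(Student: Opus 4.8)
The plan is to first construct the rectified derivation $\rectif\Deri$, where the induced map is easy to control, and then read off $\Deri$ from it. For $\rectif\Deri$ I would replay the propositional sequentialization of skew fibrations of~\cite{str:07:RTA}---the same result that powers the forward direction, Lemma~\ref{lem:cw->skew}---but carry the binding structure along through the whole construction. By Lemma~\ref{lem:PE} we have $\graphof{\PE A}=\gA$ and $\graphof{\PE B}=\gB$, so the underlying graph homomorphism of $\phi$ is already a skew fibration between the propositional cographs of $\PE A$ and $\PE B$; the point is that the extra data of $\phi$---existential-preservation and being a fibration on the binding graphs---is exactly what is needed to make every propositional rule produced by that construction lift back to a first-order rule.

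Concretely, I would induct on $|\vA|+|\vB|$, decomposing the target $\gB$ along the cograph operations $+$ and $\times$ (equivalently the $\vlor$/$\vlan$ structure of $B$) together with its outermost quantifiers, and at each stage apply the rule dictated by the local shape of $\phi$. The propositional rules $\wrD$, $\acD$, $\me$ and $\fequ$ acting purely on literals lift verbatim; a contraction or weakening of a universal-binder vertex lifts instead to $\rectif\cfaD$ or $\wfaD$; and a medial involving binders lifts to $\rectif\mfaD$ or $\rectif\mexD$ according to whether the binders are universal or existential. Existential-preservation ensures that a binder and its image agree in quantifier kind, and the binding-graph fibration pins down the correspondence between the literals bound by each binder; together these guarantee that the only operation ever applied to an existential binder is a splitting by $\rectif\mexD$, and that every lifted rule instance is legal. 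Verifying $\mapof{\rectif\Deri}=\phi$ is then immediate, since the induced map of each lifted instance coincides with its propositional counterpart, which realizes $\phi$ by construction.

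The main obstacle is precisely this lifting. The propositional theorem of~\cite{str:07:RTA} only asserts the existence of \emph{some} derivation realizing a given skew fibration, whereas here I need one whose contractions, weakenings and medials are compatible with the binding structure---never, for instance, contracting an existential-binder vertex against a literal, nor merging two universal binders whose scopes would force an illegal identification. Overcoming this means either reproving the propositional sequentialization in a structure-respecting form, or, more economically, inspecting its inductive case analysis and checking, using the two bifibration conditions on $\phi$, that at every step the vertices of $\gB$ selected for a rule and their $\phi$-preimages constitute binders and scopes for which the corresponding first-order rule is applicable. This is where the hypotheses that $\phi$ is existential-preserving and a binding-graph fibration do the real work.

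Finally, $\Deri$ is obtained from $\rectif\Deri$ by replacing each instance of $\rectif\cfaD$, $\rectif\mfaD$, $\rectif\mexD$ with its non-rectified special case $\cfaD$, $\mfaD$, $\mexD$ and undoing the bound-variable renamings introduced by rectification. Since rectification is anchored at the conclusion and accumulates fresh names going upward, collapsing those names leaves the conclusion $B$ unchanged while substituting the source labels of $A$ by their images along $\phi$---that is, it turns the top formula $A$ into $A\rsubstof\phi$, as $\rsubstof\phi$ records exactly this identification of source with target labels. By construction $\rectif\Deri$ is then a rectification of $\Deri$, which is the required relationship and completes the proof of the lemma.
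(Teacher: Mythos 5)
Your overall strategy---reduce to the propositional skew-fibration sequentialization via the encoding $\PE{(\cdot)}$ and then lift, using existential-preservation and the binding fibration to legitimize each lifted rule---is the right one and matches the paper in spirit. But the proposal has a genuine gap exactly where you flag ``the main obstacle'': you observe that one must either reprove the propositional result in a structure-respecting form or inspect its case analysis step by step, and then you do neither. That inspection \emph{is} the mathematical content of the lemma, and it is not routine. The paper carries it out in Lemmas~\ref{lem:fullinj->w} and~\ref{lem:surj->cm} (with Appendix~\ref{app:surj->cm}), and the work is substantive: weakening instances must first be permuted so that a binder-atom $x$ is never introduced above a weakening that creates literals lying in its scope; ``illegal'' $\me$-instances, which rearrange a subformula $(x\vlan(E\vlan C))\vlor(x\vlan(F\vlan D))$ in a way that has no first-order reading, must be permuted away, using the binding-fibration hypothesis to locate a second $\me$ lower in the derivation to permute against; and contractions of a variable-atom must be permuted upward and classified into the cases lifting to $\rectif\mfaD$, $\rectif\cfaD$, or $\rectif\mexD$, with existential-preservation ruling out the configurations (a medial with exactly one principal formula equal to $x$; a weakening of shape $(x\vlor D)\vlan C$ over $x\vlan C$) that would silently turn an existential binder into a universal one. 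Your proposal asserts that these checks succeed; it does not perform them.

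There is also a structural device you miss which is what makes the lifting tractable: the paper first factors $\phi$ through its image, as a surjection $\gA\to\gB_1$ followed by a full injection $\gB_1\to\gB$, both still skew bifibrations between fographs. This lets the weakening rules $\set{\wrD,\wfaD}$ and the contraction/medial rules $\set{\acD,\cfaD,\me,\mfaD,\mexD}$ be produced by two separate propositional propositions (\cite[Proposition~7.6.1]{str:07:RTA} for the injective part, \cite[Proposition~7.5]{str:ral:tableaux19} for the surjective part), and---more importantly---lets each lifting argument run on a derivation containing only one family of rules, which is what keeps the permutation analysis finite and manageable. Your single induction on $|\vA|+|\vB|$ would have to interleave all of these cases at once; if you pursue your route you should either add this factorization or explain how the interleaved induction avoids the illegal-$\me$ and scope-ordering problems. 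Your final paragraph (recovering $\Deri$ from $\rectif\Deri$ by collapsing the rectifying renamings, so that the premise becomes $A\rsubstof\phi$) is fine and agrees with the paper, which merely runs that step in the opposite direction.
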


In the proof of this lemma, we make use of the following concept: Let
$\vlder{\sysS}{\DDeri}{Q}{P}$ be a derivation where $P$ and $Q$ are
propositional formulas (possibly using variable $x\in\VAR$ at the
places of atoms).  We say that $\DDeri$ can be \bfit{lifted} to
$\sysS'$ if there are (first-order) formulas $C$ and $D$ such that $P=\PE C$ and
$Q=\PE D$ and there is a derivation $\vlder{\sysS'}{\DDeri'}{D}{C}$.

We say a fograph homomorphism $\phi\colon\gG\to\gH$ is \bfit{full} if
for all $v,w\in\vG$, we have that $\phi(v)\phi(w)\in\eH$ implies
$vw\in\eG$.

\begin{lemma}\label{lem:fullinj->w}
  Let $\phi\colon\gG\to\gH$ be full and injective skew bifibration
  such that $\rsubstof\phi$ is the identity substitution, and let $G$
  and $H$ be formulas with $\graphof G=\gG$ and $\graphof H=\gH$. Then
  there is a derivation $\vlder{\set{\wrD,\wfaD,\fequ}}{\Deri}{H}{G}$.
\end{lemma}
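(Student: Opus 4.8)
The plan is to induct on the number of vertices $|\vH|$ of $\gH$ (equivalently, on the cotree of the cograph $\gH$ given by Theorem~\ref{thm:cograph}), peeling off by weakenings exactly the material of $\gH$ that lies outside the image of $\phi$. The starting observation is that, since $\phi$ is a graph homomorphism that is both injective and full, the image $\phi(\vG)$ is an \emph{induced} subgraph $\gH[\phi(\vG)]$ of $\gH$ and $\phi$ is an isomorphism onto it; thus $\gG$ sits inside $\gH$ as an induced subgraph, and the extra vertices $\vH\setminus\phi(\vG)$ are precisely what must be introduced, reading the derivation downwards from $G$ to $H$. Throughout I would freely rewrite $G$ and $H$ modulo $\fequ$ (using the equivalence theorem relating $\fequ$ to fograph equality) so that their top connective matches the chosen root of the cotree of $\gG$, resp.\ $\gH$, and I would rely on deep-inference rules applying in any positive context in order to splice sub-derivations into a disjunct or conjunct.

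The case analysis is on the root of the cotree of $\gH$. If $\gH$ is a single vertex it is a single literal (a fograph has a literal, and a lone binder is not legal), so $\gG$, injecting into it as a nonempty fograph, is the same single literal, $G\fequ H$, and $\Deri$ is one $\fequ$-step. If $\gH=\gH_1+\gH_2$ there are two subcases. When $\phi(\vG)$ meets both $\gH_1$ and $\gH_2$, the absence of edges across a union forces $\gG=\gG_1+\gG_2$ with $\phi$ mapping $\gG_i$ into $\gH_i$; writing $G\fequ G_1\vlor G_2$ and $H\fequ H_1\vlor H_2$, I apply the induction hypothesis to the restrictions $\phi_i\colon\gG_i\to\gH_i$ and splice the two sub-derivations into the respective disjuncts. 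When instead $\phi(\vG)$ lies in a single component, say $\gH_2$, the other component $\gH_1$ is entirely outside the image and is removed by a single weakening: if $\gH_1$ contains a literal it is itself a fograph $\graphof{H_1}$ and I use $\wrD$ to add the disjunct $H_1$; if $\gH_1$ is a lone vertex that is a binder then, being an isolated top-level vertex, it is adjacent to nothing in its (whole-graph) scope and hence \emph{universal}, so I use $\wfaD$, whose added vacuous $\forall$-binder is exactly this vertex. In both subcases I recurse on $\phi\colon\gG\to\gH_2$.

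The crucial case is the join $\gH=\gH_1\times\gH_2$. Here I first show, using the skew-fibration property, that $\phi(\vG)$ must meet \emph{both} factors: if $\gH_1$ were disjoint from the image, then for any $v\in\vG$ and any vertex $w$ of $\gH_1$ we have $w\phi(v)\in\eH$, so skewness yields $\tilde w\in\vG$ with $\tilde w v\in\eG$ and $\phi(\tilde w)w\notin\eH$; but $\phi(\tilde w)$ lies in the image, hence in $\gH_2$, and in a join every $\gH_2$-vertex is adjacent to every $\gH_1$-vertex, so $\phi(\tilde w)w\in\eH$, a contradiction (and $\gG$ is nonempty). Since the image meets both factors and is induced, it again splits as a join, giving $\gG=\gG_1\times\gG_2$ with $\phi_i\colon\gG_i\to\gH_i$, and I recurse into the two factors inside the contexts coming from $H_1\vlan H_2$ — or, when a factor is a lone vertex, from $\exists x.\,H'$ — introducing \emph{no} weakening at this node. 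The lone-factor subcase uses that a single $\times$-factor vertex is adjacent to the whole other factor, hence existential, and that existential-preservation of $\phi$ makes its preimage in $\gG$ existential too, so the source really carries the matching $\exists$-structure.

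The main obstacle — and the conceptual heart of the lemma — is exactly this join step: it is where one must rule out ever having to ``weaken into a conjunction,'' which $\wrD$ and $\wfaD$ cannot do, and the skew-fibration condition is precisely what prevents a $\times$-factor from being vacated by $\phi$, so the induction never stalls at a join. The remaining work is routine but must be checked: at each decomposition one verifies that the restrictions $\phi_i$ are again full, injective skew bifibrations (injectivity and fullness restrict trivially; the skew-fibration witness $\tilde w$ stays in the relevant factor or component because its defining edge $\tilde w v\in\eG$ keeps it there; the binding-graph fibration $\bG\to\bH$ and existential-preservation restrict because scopes do not cross a top-level $+$, and because removing a top $\exists$-binder deletes the corresponding binding edges from $\bG$ and $\bH$ uniformly). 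Finally, since $\rsubstof\phi$ is the identity the labels match exactly, so the base-case $\fequ$ and all the spliced sub-derivations compose into the required derivation $\vlder{\set{\wrD,\wfaD,\fequ}}{\Deri}{H}{G}$.
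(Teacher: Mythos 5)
Your route is genuinely different from the paper's: the paper does not induct on the cotree at all, but instead invokes the known propositional result (\cite[Proposition~7.6.1]{str:07:RTA}) on the propositional encodings $\PE G,\PE H$, then permutes the resulting $\wrD$-instances and lifts them one by one to $\wrD$ or $\wfaD$. Your join case is correct and well argued --- the skew-fibration condition does exactly rule out a vacated $\graphjoin$-factor --- but that is also the part that the propositional result already handles. The genuinely first-order content of this lemma lives in the union case, and that is where your proof has a real gap.

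Concretely: your union subcase (b) removes a non-image lone universal binder $\single x$ by applying $\wfaD$ \emph{after} recursively deriving $H_2$, calling the introduced $\forall x$ ``vacuous'' and asserting that ``scopes do not cross a top-level $+$''. Both claims are false. The scope of a binder is the smallest proper strong module containing it, and in an edgeless union this is typically the whole union; e.g.\ in $\graphof{\forall x.(px\vlor q)}=\single x\graphunion\single{px}\graphunion\single q$ the binder $\single x$ binds $px$, which sits in a different $\graphunion$-component. Take the image to be $\set{\single q}$ (so $G=q$): this is a legitimate full injective skew bifibration, yet your recipe would first build $H_2$ containing $px$ and then try $\wfaD$ to add $\forall x$ --- but $x\tightin\fv(H_2)$, so the side condition of $\wfaD$ fails. (Relatedly, not every partition of the $\graphunion$-children corresponds to a rectified formula decomposition $H\fequ H_1\vlor H_2$; grouping $\single x$ away from $\single{px}$ yields $(\forall x.q)\vlor px$, which is not rectified and is not $\fequ$ to $H$.) The derivation does exist --- e.g.\ $q\,{\to}_{\wrD}\,q\vlor\forall x.px\,{\to}_{\fequ}\,\forall x.(px\vlor q)$, or $q\,{\to}_{\wfaD}\,\forall x.q\,{\to}_{\wrD}\,\forall x.(q\vlor px)$ --- but obtaining it requires either grouping each non-image binder with all the non-image literals it binds into a single weakened-in disjunct, or ordering the steps so that every vacuous $\wfaD$ occurs \emph{above} the $\wrD$-instances introducing the literals it will capture. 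That ordering argument is precisely the ``scope''-based permutation of weakenings that the paper's proof performs explicitly before lifting, and it is the missing ingredient in your induction. (A small additional slip: in the lone-$\graphjoin$-factor subcase, the preimage binder is existential because $\phi$ is full, not because $\phi$ preserves existentials --- preservation goes in the other direction.)
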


\begin{proof}
   By \cite[Proposition~7.6.1]{str:07:RTA}, we have a derivation
   $\vlder{\set{\wrD,\PE\fequ}}{\DDeri}{\PE H}{\PE G}$. In order to
   lift $\DDeri$, we need to reorganize the instances of
   $\wD$. If $H$ contains a subformula $\forall x.A$ which is not
   present in $G$, the $\wD$-instances in $\DDeri$ could introduce the
   parts of the propositional encoding $x\vlor A$ independently.  We
   say that an instance $\rr_1$ of $\wD$ in $\Deri$ is \emph{in the
   scope} of an instance $\rr_2$ of $\wD$ if $\rr_1$ introduced
   formulas that contain a free variable $x$ (i.e., $x$ occurs in a
   term in a predicate) and $\rr_2$ introduces the atom $x$ as a
   subformula (i.e. the propositional encoding of the binder $x$).  We
   can now permute the $\wD$-instances in $\DDeri$ such that whenever
   a rule instance $\rr_1$ is in the scope of an instance $\rr_2$, then
   $\rr_2$ occurs below $\rr_1$ in $\DDeri$.
   Then we can lift $\DDeri$ stepwise. First, observe that each line of
   $\DDeri$ is $\PE\fequ$-equivalent to the propositional encoding $\PE P$ of a
   first-order formula $P$.  We now have to show
   that each instance of $\wrD$ in $\DDeri$ is indeed the image of a correct
   application of $\wD$ or $\wfaD$ in first-order logic.
   If we have a $\wD$ of the form
   \begin{equation*}
     \scalebox{1}{$
     \vlinf{\wD}{}{\PE S\cons{x\vlor \PE A}}{\PE S\cons{\PE A}}
     \qquor
     \vlinf{\wD}{}{\PE S\cons{(x\vlor \PE B)\vlor \PE A}}{\PE S\cons{\PE A}}
     $}
   \end{equation*}
   then $x$ cannot occur freely in $A$, as otherwise the fibration
   property would be violated. We can therefore lift these instances to
   \begin{equation*}
      \scalebox{1}{$
    \vlinf{\wfaD}{}{S\cons{\forall x. A}}{S\cons{A}}
     \qquor
     \vlinf{\wD}{}{S\cons{(\forall x. B)\vlor A}}{S\cons{A}}
     $}
   \end{equation*}
   respectively.     
  If a  weakening happens inside a subformula $x\vlor \PE C$ or $x\vlan
  \PE C$ in $\DDeri$, then there are the following cases:
  \begin{equation*}
    \scalebox{.9}{$
    \vlinf{\wD}{}{\PE S\cons{x\vlor\PE  D\vlor \PE C}}{\PE S\cons{x\vlor\PE  C}}
    \quad
    \vlinf{\wD}{}{\PE S\cons{x\vlan (\PE D\vlor \PE C)}}{\PE S\cons{x\vlan\PE  C}}
    \quad
    \vlinf{\wD}{}{\PE S\cons{(x\vlor\PE  D)\vlan \PE C}}{\PE S\cons{x\vlan \PE C}}
    $}
  \end{equation*}
  The first two cases can be lifted to
  \begin{equation*}
    \vlinf{\wD}{}{S\cons{\forall x.( D\vlor C)}}{S\cons{\forall x.  C}}
    \quand
     \vlinf{\wD}{}{S\cons{\exists x.( D\vlor C)}}{S\cons{\exists x.  C}}
  \end{equation*}
  respectively. But in the third case,  an $\exists$-quantifier would be
  transformed into an $\forall$-quantifier. But as $\phi$ has to
  preserve existentials, this third case cannot occur. All other situations can be lifted trivially, giving us
  $\vlder{\set{\wrD,\wfaD,\fequ}}{\Deri}{H}{G}$ as desired.
\end{proof}

\begin{lemma}\label{lem:surj->cm}
  Let $\phi\colon\gG\to\gH$ be a surjective skew
  bifibration, and let $G$ and $H$ be formulas with $\graphof G=\gG$
  and $\graphof H=\gH$. Then there is a derivation
  \begin{equation*}
    \vlderivation{
      \vlde{\set{\acD,\cfaD,\me,\mfaD,\mexD,\fequ}}{\Deri}{H}{
        \vlhy{G\rsubstof\phi}}}
  \end{equation*}
  where  $\rsubstof\phi$ is the substitution
  induced by $\phi$.
\end{lemma}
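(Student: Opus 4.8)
The plan is to mirror the proof of Lemma~\ref{lem:fullinj->w}: first absorb the induced renaming, then descend to the propositional encoding where a result of~\cite{str:07:RTA} applies, and finally lift the resulting derivation back to first-order logic, this time realizing $\phi$ by contractions and multiplexings rather than by weakenings. Concretely, I would set $G'=G\rsubstof\phi$ and regard $\phi$ as a surjective skew bifibration $\graphof{G'}\to\gH$ whose induced substitution is the identity: by the homomorphism condition $\labelof{\phi(v)}=\rsubstof\phi(\labelof v)$, each vertex $v$ carries in $\graphof{G'}$ exactly the label $\labelof{\phi(v)}$, so starting from $G'$ no further renaming is needed. Passing to propositional encodings and using $\graphof{\PE{G'}}=\graphof{G'}$ and $\graphof{\PE H}=\gH$ (Lemma~\ref{lem:PE}), the same vertex map $\phi$ is a surjective skew fibration of the propositional cographs $\graphof{\PE{G'}}\to\graphof{\PE H}$.

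Next I would invoke the propositional counterpart of the statement in~\cite{str:07:RTA}: a surjective skew fibration of cographs is realized by a derivation using only atomic contraction, medial, and equivalence. This yields $\vlder{\set{\acD,\me,\PE\fequ}}{\DDeri}{\PE H}{\PE{G'}}$ with $\mapof\DDeri=\phi$. The heart of the argument is then to lift $\DDeri$ to a first-order derivation in $\set{\acD,\cfaD,\me,\mfaD,\mexD,\fequ}$, reversing the translations exhibited in the proof of Lemma~\ref{lem:cw->skew}: a contraction of two literal atoms lifts to $\acD$; a contraction of two binder atoms $x\vlor x$ (two universal binders) lifts to $\cfaD$; a block consisting of $\fequ$ followed by the contraction of two universal binders lifts to $\mfaD$; a block consisting of $\me$ followed by the contraction of two existential binders lifts to $\mexD$; and isolated $\me$ and $\fequ$ instances lift directly. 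Because $\phi$ preserves existentials (and the binding-graph fibration $\bG\to\bH$ fixes the existential/universal character of merged binders), the only propositional gluings of binder atoms that occur are the ones just listed, so the illegal pattern that would turn an existential into a universal—the excluded ``third case'' of Lemma~\ref{lem:fullinj->w}—never arises.

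The main obstacle is the reorganization of $\DDeri$ required before this pattern matching can be carried out, and I expect it to absorb the bulk of the work. I would first permute the instances of $\DDeri$ so that each contraction of a pair of binder atoms is made adjacent to the corresponding recombination of their scopes, guaranteeing both that every lifted step is a genuine first-order inference and that every intermediate line is the encoding of a rectified formula. As in Lemma~\ref{lem:fullinj->w}, the order in which binders may be merged is constrained by the binding graph: I would use the fibration property of $\bG\to\bH$ to schedule the binder-contractions so that a binder is contracted only after the binders it depends on have been aligned, which is exactly what keeps each merged binder with a single connected scope and hence keeps each line a fograph. This scheduling is delicate precisely because, unlike the weakening case, contraction identifies vertices and must therefore simultaneously fuse their surrounding structure; the detailed verification parallels, but is heavier than, the case analysis for $\set{\wrD,\wfaD,\fequ}$.

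Finally, the lift produces the rectified derivation $\vlder{\set{\wrD,\wfaD,\acD,\rectif\cfaD,\me,\rectif\mfaD,\rectif\mexD,\fequ}}{\rectif\Deri}{\rectif H}{\rectif{G'}}$ in which no weakening occurs (since $\phi$ is surjective) and whose induced map is $\phi$. De-rectifying $\rectif\Deri$—renaming the bound variables back, the inverse of the rectification construction used in Lemma~\ref{lem:cw->rectif}—collapses each $\rectif\cfaD$, $\rectif\mfaD$, $\rectif\mexD$ to its special case $\cfaD$, $\mfaD$, $\mexD$ and returns the required derivation $\vlder{\set{\acD,\cfaD,\me,\mfaD,\mexD,\fequ}}{\Deri}{H}{G\rsubstof\phi}$, completing the proof.
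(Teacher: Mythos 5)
Your proposal follows essentially the same route as the paper's proof: obtain the propositional derivation in $\set{\acD,\me,\PE\fequ}$ from the known surjective-skew-fibration result (the paper cites \cite[Proposition~7.5]{str:ral:tableaux19} rather than \cite{str:07:RTA}), then lift it to $\set{\acD,\cfaD,\me,\mfaD,\mexD,\fequ}$ by permuting rule instances so that each contraction of binder atoms pairs with the appropriate $\fequ$ or $\me$ step to become $\cfaD$, $\mfaD$ or $\mexD$, with existential preservation ruling out the polarity-flipping case. The permutation argument you defer as the bulk of the work is exactly what the paper's appendix carries out, including one wrinkle you only gesture at: certain ``illegal'' $\me$ instances occurring inside the scope of a not-yet-contracted existential binder must first be permuted past other $\me$ instances before the pattern matching goes through.
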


\begin{proof}
  By \cite[Proposition~7.5]{str:ral:tableaux19}, there is a derivation
  $\vlder{\set{\acD,\me,\PE\fequ}}{\DDeri}{\PE
    H}{\PE{(G\substof\phi)}}$.  We can lift $\DDeri$ to a first-order
  derivation in $\set{\acD,\cfaD,\me,\mfaD,\mexD,\fequ}$, in a similar
  way as in the previous lemma.
  The technical details are in
  Appendix~\ref{app:surj->cm}.
\end{proof}
  
\begin{proof}[Proof of Lemma~\ref{lem:skew->cw}]
  Let $\vB'\subseteq\vB$ be the image of $\phi$, and let $\gB_1$ be
  the subgraph of $\gB$ induced by $\vB'$. Hence, we have two maps
  $\phi''\colon\gA\to\gB_1$ being a surjection and $\phi'\colon
  \gB_1\to\gB$ being a full injection.  Both, $\phi'$ and $\phi''$
  remain skew bifibrations.  Furthermore, $\gB_1$ is also a fograph.
  Let $B_1$ be a formula with $\graphof{B_1}=\gB_1$.
    We can apply Lemmas~\ref{lem:fullinj->w}
  and~\ref{lem:surj->cm} to obtain derivations
  \begin{equation*}
    \vlder{\set{\wrD,\wfaD,\fequ}}{\Deri'}{B}{B_1}
    \qquand
     \vlder{\set{\acD,\cfaD,\me,\mfaD,\mexD,\fequ}}{\Deri''}{B_1}{
        A\rsubstof{\phi''}}
  \end{equation*}
  As $\rsubstof{\phi'}$ is the identity, we have
  $\rsubstof{\phi''}=\rsubstof{\phi}$. Hence, the composition of
  $\Deri''$ and $\Deri'$ is the desired derivation $\Deri$.  Then
  $\rectif\Deri$ can be constructed by rectifying $\Deri$, where the
  variables to be used in $A$ are already given. That
  $\phi=\mapof{\rectif\Deri}$ follows immediately from the
  construction.
\end{proof}


\section{Summary and Proof of Main Result}
\label{sec:summary}

The only theorem of Section~\ref{sec:main} that has not yet been
proved is Theorem~\ref{thm:CP-DI} establishing the full correspondence
between decomposed proofs in $\FOKS$ and combinatorial proofs. We show
the proof here, by summarizing the results of the previous two
Sections~\ref{sec:linear} and~\ref{sec:skew}.

\begin{proof}[Proof of Theorem~\ref{thm:CP-DI}]
  First, assume we have a combinatorial proof $\phi\colon\gC\to\gA$
   and a formula $A$ with $\gA=\graphof A$.
  Let $C$ be a formula with $\graphof C=\gC$, and let $\rsubstof\phi$
  be the substitution induced by $\phi$.
  By Lemma~\ref{lem:skew->cw} there is a derivation
  \begin{equation*}
    \vlder{\set{\wrD,\wfaD,\acD,\cfaD,\me,\mfaD,\mexD,\fequ}}{\Deri_2}{A}{C\rsubstof\phi}
  \end{equation*}
  Since $\gC$ is a fonet, we have by Theorem~\ref{thm:fonet->MLS1} a derivation
  \begin{equation*}
    \vlder{\FOMLS}{\Deri'_1}{C}{\ttt}
  \end{equation*}
  This derivation remains valid if we apply the substitution
  $\rsubstof\phi$ to every line in $\Deri'_1$, yielding the derivation
  $\Deri_1$ of $C\rsubstof\phi$ as desired.

  Conversely, assume we have a decomposed derivation
  \begin{equation}
    \label{eq:decomx}
    \vlderivation{
      \vlde{\set{\wrD,\wfaD,\acD,\cfaD,\me,\mfaD,\mexD,\fequ}}{\Deri_2}{A}{
        \vlde{\FOMLS}{\Deri_1}{A'}{
          \vlhy{\ttt}}}}
  \end{equation}
  Then we can transform $\Deri_1$ into a rectified form
  $\rectif\Deri_1$, proving $\rectif A'$. By
  Theorem~\ref{thm:MLS1->fonet}, the linked fograph
  $\fographof{\rectif\Deri_1}=\tuple{\fographof{\rectif
      A'},\linkingof{\rectif\Deri_1}}$ is a fonet.  Then, by
  Lemma~\ref{lem:cw->rectif}, there is a rectified derivation
  $\vlderivation{
    \vlde{\set{\wrD,\wfaD,\acD,\rectif\cfaD,\me,\rectif\mfaD,\rectif\mexD,\fequ}}{\rectif{\Deri_2}}{\rectif
      A}{ \vlhy{\rectif{A'}}}}$ whose induced map
  $\mapof{\rectif{\Deri_2}}\colon\graphof{\rectif{A'}}\to\graphof{\rectif
    A}$ is the same as the induced map
  $\mapof{\Deri_2}\colon\graphof{A'}\to\graphof{A}$ of $\Deri_2$. By
  Lemma~\ref{lem:cw->skew}, this map is a skew bifibration. Hence, we
  have a combinatorial proof $\phi\colon\gC\to\graphof{A}$ with
  $\gC=\graphof{\rectif{A'}}$.
\end{proof}

Note that Theorem~\ref{thm:CP-DI} shows at the same time soundness, completeness, and full completeness, as
\begin{enumerate}
\item every proof in $\FOKS$ can be translated into a combinatorial proof, and
\item every combinatorial proof is the image of a $\FOKS$-proof under that translation.
\end{enumerate}


\section{Conclusion}

We uncovered a close correspondence between
first-order combinatorial proofs and decomposed deep inference
derivations of system $\FOKS$, and showed that every proof in
$\FOKS$ has such a decomposed form.

The most surprising discovery for us was that all technical
difficulties in our work could be reduced (in a non-trivial way) to
the propositional setting.

The obvious next step in our research is to investigate proof
composition and normalisation of first-order combinatorial
proofs. Even in the propositional setting, the normalisation of
combinatorial proofs is underdeveloped. There exist two different
procedures for cut elimination for combinatorial proofs in classical
propositional logic~\cite{hughes:invar,str:fscd17}, but both have
their insufficiencies, and have not been extended to other
logics.

We hope to garner new insights on the normalisation of
classical first-order proofs through our work on combinatorial proofs.






%

\bibliographystyle{IEEEtran}
\bibliography{refs}







\newpage
\appendix

\subsection{Proof of Theorem~\ref{thm:LK1-decompose}}
\label{app:LK1-decompose}

\begin{proof}[Proof of Theorem~\ref{thm:LK1-decompose}]

Write $\fv(A)$ for the set of variables which occur free in $A$.

Note that the instances of $\wrD,\cD$ in $\Deri_2$ are deep, but inside sequent contexts.

First, if an instance of $\vlinf{\weakr}{}{\sqn{\Gamma,A}}{\sqn{\Gamma}}$
 is followed by a rule in which $A$ is not in the principal
formula, it can be permuted downwards.
Otherwise, the proof can be transformed using the following rewriting rules.

\begin{equation*}
\vlderivation{
  \vliin{\vlan}{}{\sqns{\Gamma, A \vlan B, \Delta}}{
    \vlin{\weakr}{}{\sqns{\Gamma, A}}{
      \vlhy{\sqns{\Gamma}}}}{
    \vlhy{\sqns{B, \Delta}}}}
\leadsto
\vlderivation{
  \vlin{\weakr}{}{\sqns{\Gamma, A \vlan B, \Delta}}{
    \vlin{\weakr}{}{\vlvdots}{
      \vlhy{\sqns{\Gamma}}}}}
\end{equation*}
\begin{equation*}
\vlderivation{
  \vlin{\vlor}{}{\sqns{\Gamma, A \vlor B}}{
    \vlin{\weakr}{}{\sqns{\Gamma, A, B}}{
      \vlhy{\sqns{\Gamma, A}}}}}
\leadsto
\vlderivation{
  \vlin{\wrD}{}{\sqns{\Gamma, A \vlor B}}{
    \vlhy{\sqns{\Gamma, A}}}}
\end{equation*}
\begin{equation*}
\vlderivation{
  \vlin{\exists}{}{\sqns{\Gamma, \exists x.A}}{
    \vlin{\weakr}{}{\sqns{\Gamma, A\ssubst{x}{t}}}{
      \vlhy{\sqns{\Gamma}}}}}
\leadsto
\vlderivation{
  \vlin{\weakr}{}{\sqns{\Gamma, \exists x.A}}{
    \vlhy{\sqns{\Gamma}}}}
\end{equation*}
\begin{equation*}
\vlderivation{
  \vlin{\forall}{}{\sqns{\Gamma, \forall x.A}}{
    \vlin{\weakr}{}{\sqns{\Gamma, A}}{
      \vlhy{\sqns{\Gamma}}}}}
\leadsto
\vlderivation{
  \vlin{\weakr}{}{\sqns{\Gamma, \forall x.A}}{
    \vlhy{\sqns{\Gamma}}}}
\end{equation*}
\begin{equation*}
\vlderivation{
  \vlin{\conr}{}{\sqns{\Gamma, A}}{
    \vlin{\weakr}{}{\sqns{\Gamma, A, A}}{
      \vlhy{\sqns{\Gamma, A}}}}}
\leadsto
\vlderivation{
  \vlhy{\sqns{\Gamma, A}}}
\end{equation*}
Note that in the case of $\vlor$, we use the deep rule $\wrD$ which can be
permuted under all the rules. By using these rewriting rules, we can eventually get
a derivation with all the instances of $\weakr$ and $\wrD$ at the bottom. Now observe
that the instances of $\conr$ in $\Deri$ can be transformed using the following
rule:

\begin{equation*}
\vlderivation{
  \vlin{\conr}{}{\sqns{\Gamma, A}}{
    \vlhy{\sqns{\Gamma, A, A}}}}
\leadsto
\vlderivation{
  \vlin{\cD}{}{\sqns{\Gamma, A}}{
    \vlin{\vlor}{}{\sqns{\Gamma, A \vlor A}}{
      \vlhy{\sqns{\Gamma, A, A}}}}}
\end{equation*}

Knowing that $\cD$ can be permuted under all the rules of $\FOMLL$, we
eventually obtain a derivation:
\begin{equation*}
\vlderivation{
  \vlde{\set{\weakr, \wrD,\cD,\fequ}}{\Deri'_2}{\sqns{\Gamma}}{
    \vltrl{}{\FOMLL}{\Deri'_1}{\sqns{\Gamma'}}{
      \vlhy{\quad}}{
      \vlhy{}}{
      \vlhy{}}}}
\end{equation*}
Note that $\fequ$ is required here since the permutation of formulas is implicit
in $\FOMLL$.

By transforming each sequent of $\Deri'_2$ into its corresponding formula, and by
considering the following rewriting rule:
\begin{equation*}
\vlderivation{
  \vlin{\weakr}{}{\sqns{\Gamma, A}}{
    \vlhy{\sqns{\Gamma}}}}
\leadsto
\vlderivation{
  \vlin{\wrD}{}{\sqns{\form{\Gamma} \vlor A}}{
    \vlhy{\sqns{\form{\Gamma}}}}}
\end{equation*},
we obtain a derivation
\begin{equation*}
\vlderivation{
  \vlde{\set{\wrD,\cD,\fequ}}{\Deri_2}{\sqns{\form{\Gamma}}}{
    \vltrl{}{\FOMLL}{\Deri_1}{\sqns{\form{\Gamma'}}}{
      \vlhy{\quad}}{
      \vlhy{}}{
      \vlhy{}}}}
\end{equation*}
where $\Deri_1$ can be obtained from $\Deri'_1$
by applying the $\vlor$ rule.
\end{proof}

\subsection{Rule permutation for the proof of Lemma~\ref{lem:cw-decomposition}}
\label{app:cw-decomposition}
We construct a rewriting system based on rule permutation on derivations in $\set{\wrD, \wfaD, \acD,
\cfaD, \me, \mfaD, \mexD, \fequ}$ that allows us to reach a derivation of the
form 
\begin{equation*}
    \vlderivation{
      \vlde{\set{\wrD,\wfaD,\fequ}}{}{B}{
        \vlde{\set{\acD,\cfaD}}{}{B'}{
          \vlde{\set{\me,\mfaD,\mexD,\fequ}}{}{A'}{
            \vlhy{A}}}}}
\end{equation*}
from any derivation. Intuitively, we want to move all the instances of $\rr \in
\set{\wrD, \wfaD}$ downwards and all the instances of $\rr' \in \set{\me,
\mfaD, \mexD}$ upwards.

We first study the interactions between two rules. Certain cases are unsolved at
this stage, and they are considered later when we study the interactions between
two non-$\fequ$ rule instances separated by $\fequ$. Only non-trivial cases are
presented here:
\begin{itemize}
\item $\rr_1/\rr_2$, where $\rr_1 \in \set{\wrD, \wfaD}$ and $\rr_2 \in
\set{\acD, \cfaD, \me, \mfaD, \mexD}$:

\begin{equation*}
\vlderivation{
  \vlin{\acD}{}{a}{
    \vlin{\wrD}{}{a \vlor a}{ 
      \vlhy{a}}}}
\leadsto
\vlderivation{
  \vlhy{a}}
\end{equation*}

\begin{equation*}
\vlderivation{
  \vlin{\me}{}{(A \vlor B) \vlan (C \vlor D)}{
    \vlin{\wrD}{}{(A \vlan C) \vlor (B \vlan D)}{
      \vlhy{A \vlan C}}}}
\leadsto 
\vlderivation{
  \vlin{\wrD}{}{(A \vlor B) \vlan (C \vlor D)}{
    \vlin{\wrD}{}{(A \vlor B) \vlan C}{
      \vlhy{A \vlan C}}}}
\end{equation*}

\begin{equation*}
\vlderivation{
  \vlin{\mfaD}{}{\forall x.(A \vlor B)}{
    \vlin{\wrD}{}{(\forall x.A) \vlor (\forall x.B)}{
      \vlhy{\forall x.A}}}}
\leadsto
\vlderivation{
  \vlin{\wrD}{}{\forall x.(A \vlor B)}{
    \vlhy{\forall x.A}}}
\end{equation*}

\begin{equation*}
\vlderivation{
  \vlin{\cfaD}{}{\forall x.A}{
    \vlin{\wfaD}{}{\forall x.\forall x.A}{
      \vlhy{\forall x.A}}}}
\leadsto
\vlderivation{
  \vlhy{\forall x.A}}
\end{equation*}

\begin{equation*}
\vlderivation{
  \vlin{\mfaD}{}{\forall x.(A \vlor B)}{
    \vlin{\wfaD}{}{(\forall x.A) \vlor (\forall x.B)}{
      \vlhy{A \vlor (\forall x.B)}}}}
\leadsto
\vlderivation{
  \vlin{\fequ}{}{\forall x.(A \vlor B)}{
    \vlhy{A \vlor (\forall x.B)}}} 
\end{equation*}
where in the last case, $x$ is not free in $A$.

\item $\rr_1/\rr_2$, where $\rr_1 \in \set{\acD, \cfaD}$ and $\rr_2 \in
\set{\me, \mfaD, \mexD}$: 
  \begin{equation*}
  \scalebox{.9}{$
  \vlderivation{
    \vlin{\mfaD}{}{\Scons{\forall x.(A \vlor B)}}{
      \vlin{\cfaD}{}{\Scons{(\forall x.A) \vlor (\forall x.B)}}{
        \vlhy{\Scons{(\forall x.\forall x.A) \vlor (\forall x.B)}}}}}
  \leadsto
  \vlderivation{
    \vlin{\mfaD}{}{\Scons{\forall x.(A \vlor B)}}{
      \vlin{\fequ}{}{\Scons{(\forall x.A) \vlor (\forall x.B)}}{
        \vlin{\mfaD}{}{\Scons{\forall x.(\forall x.A \vlor B)}}{
          \vlhy{\Scons{(\forall x.\forall x.A) \vlor (\forall x.B)}}}}}}$}
  \end{equation*}

\item $\cfaD/\fequ$:

\begin{equation*}
\vlderivation{
  \vlin{\fequ}{}{\forall y.\forall x.A}{
    \vlin{\cfaD}{}{\forall x.\forall y.A}{
      \vlhy{\forall x.\forall x.\forall y.A}}}}
\leadsto
\vlderivation{
  \vlin{\cfaD}{}{\forall y.\forall x.A}{
    \vlin{\fequ}{}{\forall y.\forall x.\forall x.A}{
      \vlhy{\forall x.\forall x.\forall y.A}}}}
\end{equation*}

\begin{equation*}
\vlderivation{
  \vlin{\fequ}{}{(\forall x.A) \vlor B}{
    \vlin{\cfaD}{}{\forall x.(A \vlor B)}{
      \vlhy{\forall x.\forall x.(A \vlor B)}}}}
\leadsto
\vlderivation{
  \vlin{\cfaD}{}{(\forall x.A) \vlor B}{
    \vlin{\fequ}{}{(\forall x.\forall x. A) \vlor B}{
      \vlhy{\forall x.\forall x.(A \vlor B)}}}}
\end{equation*}

\begin{equation*}
\vlderivation{
  \vlin{\fequ}{}{\forall x.(A \vlor B)}{
    \vlin{\cfaD}{}{(\forall x.A) \vlor B}{
      \vlhy{(\forall x.\forall x.A) \vlor B}}}}
\leadsto
\vlderivation{
  \vlin{\cfaD}{}{\forall x.(A \vlor B)}{
    \vlin{\fequ}{}{\forall x.\forall x.(A \vlor B)}{
      \vlhy{(\forall x.\forall x.A) \vlor B}}}}
\end{equation*}
where in the last two cases, $x$ is not free in $B$.

\item $\wrD/\fequ$:

\begin{equation*}
\vlderivation{
  \vlin{\fequ}{}{B \vlor A}{
    \vlin{\wrD}{}{A \vlor B}{
      \vlhy{A}}}}
\end{equation*}

\begin{equation*}
\vlderivation{
  \vlin{\fequ}{}{A \vlor (B \vlor C)}{
    \vlin{\wrD}{}{(A \vlor B) \vlor C)}{
      \vlhy{A \vlor C}}}}
\end{equation*}

\begin{equation*}
\vlderivation{
  \vlin{\fequ}{}{(\forall x.A) \vlor B}{
    \vlin{\wrD}{}{\forall x.(A \vlor B)}{
      \vlhy{\forall x.A}}}}
\leadsto
\vlderivation{
  \vlin{\wrD}{}{(\forall x. A) \vlor B}{
    \vlhy{\forall x.A}}}
\end{equation*}

\begin{equation*}
\vlderivation{
  \vlin{\fequ}{}{(\forall x.A) \vlor B}{
    \vlin{\wrD}{}{\forall x.(B \vlor A)}{
      \vlhy{\forall x.B}}}}
\end{equation*}

\begin{equation*}
\vlderivation{
  \vlin{\fequ}{}{\forall x.(A \vlor B)}{
    \vlin{\wrD}{}{(\forall x.A) \vlor B}{
      \vlhy{\forall x.A}}}}
\leadsto
\vlderivation{
  \vlin{\wrD}{}{\forall x.(A \vlor B)}{
    \vlhy{\forall x.A}}}
\end{equation*}

\begin{equation*}
\vlderivation{
  \vlin{\fequ}{}{\forall x.(A \vlor B)}{
    \vlin{\wrD}{}{B \vlor (\forall x.A)}{
      \vlhy{B}}}}
\end{equation*}
where in the last four cases, $x$ is not free in $B$.
\item $\wfaD/\fequ$:

In the following two cases, we assume $x \neq y$ (otherwise they are trivial).

\begin{equation*}
\scalebox{.9}{$
\vlderivation{
  \vlin{\fequ}{}{\forall y.\forall x.A}{
    \vlin{\wfaD}{(x \notin \fv(\forall y.A))}{\forall x.\forall y.A}{
      \vlhy{\forall y.A}}}}
\leadsto
\vlderivation{
  \vlin{\wfaD}{(x \notin \fv(A))}{\forall y.\forall x.A}{
    \vlhy{\forall y.A}}}$}
\end{equation*}

\begin{equation*}
\scalebox{.9}{$
\vlderivation{
  \vlin{\fequ}{}{\forall x.\forall y.A}{
    \vlin{\wfaD}{(x \notin \fv(A))}{\forall y.\forall x.A}{
      \vlhy{\forall y.A}}}}
\leadsto
\vlderivation{
  \vlin{\wfaD}{(x \notin \fv(\forall y.A))}{\forall x.\forall y.A}{
    \vlhy{\forall y.A}}}$}
\end{equation*}

\begin{equation*}
\vlderivation{
  \vlin{\fequ}{}{(\forall x.A) \vlor B}{
    \vlin{\wfaD}{}{\forall x.(A \vlor B)}{
      \vlhy{A \vlor B}}}}
\leadsto
\vlderivation{
  \vlin{\wfaD}{}{(\forall x.A) \vlor B}{
    \vlhy{A \vlor B}}}
\end{equation*}

\begin{equation*}
\vlderivation{
  \vlin{\fequ}{}{\forall x.(A \vlor B)}{
    \vlin{\wfaD}{}{(\forall x.A) \vlor B}{
      \vlhy{A \vlor B}}}}
\leadsto
\vlderivation{
  \vlin{\wfaD}{}{\forall x.(A \vlor B)}{
    \vlhy{A \vlor B}}}
\end{equation*}
where in the last two cases, the constraint on $x$ on the left-hand side implies
that of the right-hand side.

\item $\fequ/\cfaD$:

\begin{equation*}
\vlderivation{
  \vlin{\cfaD}{}{\forall x.\forall y.A}{
    \vlin{\fequ}{}{\forall x.\forall x.\forall y. A}{
      \vlhy{\forall x.\forall y.\forall x.A}}}}
\end{equation*}

\begin{equation*}
\vlderivation{
  \vlin{\cfaD}{}{\forall y.\forall x.A}{
    \vlin{\fequ}{}{\forall y.\forall x.\forall x.A}{
      \vlhy{\forall x.\forall y.\forall x.A}}}}
\end{equation*}

\begin{equation*}
\vlderivation{
  \vlin{\cfaD}{}{(\forall x.A) \vlor B}{
    \vlin{\fequ}{(x \notin \fv(B))}{(\forall x.\forall x.A) \vlor B}{
      \vlhy{\forall x.((\forall x.A) \vlor B)}}}}
\end{equation*}

\begin{equation*}
\vlderivation{
  \vlin{\cfaD}{}{\forall x.(A \vlor B)}{
    \vlin{\fequ}{(x \notin \fv(B))}{\forall x.\forall x.(A \vlor B)}{
      \vlhy{\forall x.((\forall x.A) \vlor B)}}}}
\end{equation*}

\item $\fequ/\me$:

\begin{equation*}
\vlderivation{
  \vlin{\me}{}{(A \vlor B) \vlan (C \vlor D)}{
    \vlin{\fequ}{}{(A \vlan C) \vlor (B \vlan D)}{
      \vlhy{(C \vlan A) \vlor (B \vlan D)}}}}
\end{equation*}

\begin{equation*}
\vlderivation{
  \vlin{\me}{}{(A \vlor B) \vlan (C \vlor D)}{
    \vlin{\fequ}{}{(A \vlan C) \vlor (B \vlan D)}{
      \vlhy{(B \vlan D) \vlor (A \vlan C)}}}}
\leadsto
\vlderivation{
  \vlin{\fequ}{}{(A \vlor B) \vlan (C \vlor D)}{
    \vlin{\me}{}{(B \vlor A) \vlan (D \vlor C)}{
      \vlhy{(B \vlan D) \vlor (A \vlan C)}}}}
\end{equation*}

\begin{equation*}
\vlderivation{
  \vlin{\me}{}{(A \vlor B) \vlan ((C \vlan E) \vlor D)}{
    \vlin{\fequ}{}{(A \vlan (C \vlan E)) \vlor (B \vlan D)}{
      \vlhy{((A \vlan C) \vlan E) \vlor (B \vlan D)}}}}
\end{equation*}

\begin{equation*}
\vlderivation{
  \vlin{\me}{}{\forall x.((A \vlor B) \vlan (C \vlor D))}{
    \vlin{\fequ}{(x \notin \fv(B \vlan D))}{\forall x.((A \vlan C) \vlor (B \vlan D))}{
      \vlhy{(\forall x.(A \vlan C)) \vlor (B \vlan D)}}}}
\end{equation*}

\item $\fequ/\mfaD$:

\begin{equation*}
\vlderivation{
  \vlin{\mfaD}{}{\forall x.(A \vlor B)}{
    \vlin{\fequ}{}{(\forall x.A) \vlor (\forall x.B)}{
      \vlhy{(\forall x.B) \vlor (\forall x.A)}}}}
\leadsto
\vlderivation{
  \vlin{\fequ}{}{\forall x.(A \vlor B)}{
    \vlin{\mfaD}{}{\forall x.(B \vlor A)}{
      \vlhy{(\forall x.B) \vlor (\forall x.A)}}}}
\end{equation*}

\begin{equation*}
\vlderivation{
  \vlin{\mfaD}{}{\forall x.((\forall y.A) \vlor B)}{
    \vlin{\fequ}{}{(\forall x.\forall y.A) \vlor (\forall x.B)}{
      \vlhy{(\forall y.\forall x.A) \vlor (\forall x.B)}}}}
\end{equation*}

\begin{equation*}
\vlderivation{
  \vlin{\mfaD}{}{\forall x.(A \vlor B)}{
    \vlin{\fequ}{}{(\forall x.A) \vlor (\forall x.B)}{
      \vlhy{\forall x.(A \vlor (\forall x.B))}}}}
\end{equation*}

\item $\fequ/\mexD$: similar to $\fequ/\mfaD$
 
\end{itemize}

Interactions between two non-$\fequ$ rules with the presence of $\fequ$ in
between:

\begin{itemize}
\item $\cfaD/\fequ/\rr$ where $\rr \in \set{\me, \mfaD, \mexD}$: First permute
$\cfaD$ under $\fequ$ and then permute $\cfaD$ under $\rr$.

\item $\acD/\fequ/\rr$ where $\rr \in \set{\me, \mfaD, \mexD}$: First permute
$\acD$ under $\fequ$ and then permute $\acD$ under $\rr$.

\item $\wrD/\fequ/\cfaD$:
  \begin{equation*}
  \vlderivation{
    \vlin{\cfaD}{}{(\forall x.A) \vlor B}{
      \vlin{\fequ}{}{(\forall x.\forall x.A) \vlor B}{
        \vlin{\wrD}{}{\forall x.((\forall x.A) \vlor B)}{
          \vlhy{\forall x.\forall x.A}}}}}
  \leadsto
  \vlderivation{
    \vlin{\wrD}{}{(\forall x.A) \vlor B}{
      \vlin{\cfaD}{}{(\forall x.A) \vlor B}{
        \vlhy{\forall x.\forall x.A}}}}
  \end{equation*}
  \begin{equation*}
  \vlderivation{
    \vlin{\cfaD}{}{(\forall x.A) \vlor B}{
      \vlin{\fequ}{}{(\forall x.\forall x.A) \vlor B}{
        \vlin{\wrD}{}{\forall x.(B \vlor (\forall x.A))}{
          \vlhy{\forall x.B}}}}}
  \leadsto
  \vlderivation{
    \vlin{\fequ}{}{(\forall x.A) \vlor B}{
      \vlin{\wrD}{}{\forall x.(B \vlor A)}{
        \vlhy{\forall x.B}}}}
  \end{equation*}

  \begin{equation*}
  \vlderivation{
    \vlin{\cfaD}{}{\forall x.(A \vlor B)}{
      \vlin{\fequ}{}{\forall x.\forall x.(A \vlor B)}{
        \vlin{\wrD}{}{\forall x.((\forall x.A) \vlor B)}{
          \vlhy{\forall x.\forall x.A}}}}}
  \leadsto
  \vlderivation{
    \vlin{\wrD}{}{\forall x.(A \vlor B)}{
      \vlin{\cfaD}{}{\forall x.A}{
        \vlhy{\forall x.\forall x.A}}}}
  \end{equation*}

  \begin{equation*}
  \vlderivation{
    \vlin{\cfaD}{}{\forall x.(A \vlor B)}{
      \vlin{\fequ}{}{\forall x.\forall x.(A \vlor B)}{
        \vlin{\wrD}{}{\forall x.(B \vlor (\forall x.A))}{
          \vlhy{\forall x.B}}}}}
  \leadsto
  \vlderivation{
    \vlin{\fequ}{}{\forall x.(A \vlor B)}{
      \vlin{\wrD}{}{\forall x.(B \vlor A)}{
        \vlhy{\forall x.B}}}}
  \end{equation*}
  where in all four cases, $x$ is not free in $B$.
\item $\wrD/\fequ/\acD$:
  \begin{equation*}
  \vlderivation{
    \vlin{\acD}{}{a \vlor B}{
      \vlin{\fequ}{}{(a \vlor a) \vlor B}{
        \vlin{\wrD}{}{(a \vlor B) \vlor a}{
          \vlhy{a \vlor B}}}}}
  \leadsto
  \vlderivation{
    \vlhy{a \vlor B}}
  \end{equation*}

  \begin{equation*}
  \vlderivation{
    \vlin{\acD}{}{a \vlor B}{
      \vlin{\fequ}{}{(a \vlor a) \vlor B}{
        \vlin{\wrD}{}{a \vlor (a \vlor B)}{
          \vlhy{a}}}}}
  \leadsto
  \vlderivation{
    \vlin{\wrD}{}{a \vlor B}{
      \vlhy{a}}}
  \end{equation*}
  
  \begin{equation*}
  \vlderivation{
    \vlin{\acD}{}{\forall x.a}{
      \vlin{\fequ}{(x \notin \fv(a))}{\forall x.(a \vlor a)}{
        \vlin{\wrD}{}{(\forall x.a) \vlor a}{
          \vlhy{\forall x.a}}}}}
  \leadsto
  \vlderivation{
    \vlhy{\forall x.a}}
  \end{equation*}

  \begin{equation*}
  \vlderivation{
    \vlin{\acD}{}{\forall x.a}{
      \vlin{\fequ}{(x \notin \fv(a))}{\forall x.(a \vlor a)}{
        \vlin{\wrD}{}{a \vlor (\forall x.a)}{
          \vlhy{a}}}}}
  \leadsto
  \vlderivation{
    \vlin{\wfaD}{(x \notin \fv(a))}{\forall x.a}{
      \vlhy{a}}}
  \end{equation*}

\item $\wrD/\fequ/\me$:
  \begin{equation*}
  \vlderivation{
    \vlin{\me}{}{(A \vlor B) \vlan (C \vlor D)}{
      \vlin{\fequ}{}{(A \vlan C) \vlor (B \vlan D)}{
        \vlin{\wrD}{}{(C \vlan A) \vlor (B \vlan D)}{
          \vlhy{C \vlan A}}}}}
  \leadsto
  \vlderivation{
    \vlin{\wrD}{}{(A \vlor B) \vlan (C \vlor D)}{
      \vlin{\wrD}{}{(A \vlor B) \vlan C}{
        \vlin{\fequ}{}{A \vlan C}{
          \vlhy{C \vlan A}}}}}
  \end{equation*}

  \begin{equation*}
  \scalebox{.9}{$
  \vlderivation{
    \vlin{\me}{}{\forall x.((A \vlor B) \vlan (C \vlor D))}{
      \vlin{\fequ}{}{\forall x.((A \vlan C) \vlor (B \vlan
D))}{
        \vlin{\wrD}{}{(B \vlan D) \vlor (\forall x.(A \vlan C))}{
          \vlhy{B \vlan D}}}}}
  \leadsto
  \vlderivation{
    \vlin{\fequ}{}{\forall x.((A \vlor B) \vlan (C \vlor D))}{
      \vlin{\wrD}{}{\forall x.((B \vlor A) \vlan (D \vlor C))}{
        \vlin{\wrD}{}{\forall x.((B \vlor A) \vlan D)}{
          \vlin{\faD}{}{\forall x.(B \vlan D)}{
            \vlhy{B \vlan D}}}}}}$}
  \end{equation*}
  where in the second case, $x$ is free in $B \vlan D$.

\item $\wrD/\fequ/\mfaD$:
  \begin{equation*}
  \vlderivation{
    \vlin{\mfaD}{}{\forall x.(A \vlor B)}{
      \vlin{\fequ}{}{(\forall x.A) \vlor (\forall x.B)}{
        \vlin{\wrD}{}{(\forall x.B) \vlor (\forall x.A)}{
          \vlhy{\forall x.B}}}}}
  \leadsto
  \vlderivation{
    \vlin{\fequ}{}{\forall x.(A \vlor B)}{
      \vlin{\wrD}{}{\forall x.(B \vlor A)}{
        \vlhy{\forall x.B}}}}
  \end{equation*}
  
  \begin{equation*}
  \vlderivation{
    \vlin{\mfaD}{}{\forall x.(A \vlor B)}{
      \vlin{\fequ}{}{(\forall x.A) \vlor (\forall x.B)}{
        \vlin{\wrD}{}{\forall x.((\forall x.A) \vlor B)}{
          \vlhy{\forall x.\forall x.A}}}}}
  \leadsto
  \vlderivation{
    \vlin{\wrD}{}{\forall x.(A \vlor B)}{
      \vlin{\cfaD}{}{\forall x.A}{
        \vlhy{\forall x.\forall x.A}}}}
  \end{equation*}
  
\item $\wrD/\fequ/\mexD$:

  \begin{equation*}
  \vlderivation{
    \vlin{\mexD}{}{\exists x.(A \vlor B)}{
      \vlin{\fequ}{}{(\exists x.A) \vlor (\exists x.B)}{
        \vlin{\wrD}{}{(\exists x.B) \vlor (\exists x.A)}{
          \vlhy{\exists x.B}}}}}
  \leadsto
  \vlderivation{
    \vlin{\fequ}{}{\exists x.(A \vlor B)}{
      \vlin{\wrD}{}{\exists x.(B \vlor A)}{
        \vlhy{\exists x.B}}}}
  \end{equation*}

\end{itemize}


\subsection{Proof of Lemma~\ref{lem:surj->cm}}
\label{app:surj->cm}

\begin{proof}[Proof of Lemma~\ref{lem:surj->cm}]
  By
  \cite[Proposition~7.5]{str:ral:tableaux19}, there is a derivation
  $\vlder{\set{\acD,\me,\PE\fequ}}{\DDeri}{\PE H}{\PE{(G\rsubstof\phi)}}$,
   We plan to show that $\DDeri$ can be lifted to
  $\set{\acD,\cfaD,\me,\mfaD,\mexD,\fequ}$. However, observe that not
  every formula occurring in $\DDeri$ is a propositional
  encoding. There are two reasons for this: (i) we might have
  $P\PE\fequ Q$ where $P$ is a propositional encoding but $Q$ is not,
  and (ii) the rule $\acD$ can duplicate an atom
  $x\in\VAR$. Let us write $\acDx$ for such instances.
  The problem with (i) is that we could have the following situation
  \begin{equation}
    \label{eq:m-illegal}
    \vlderivation{
      \vlin{\me}{}{\Scons{((x\vlan E)\vlor(x\vlan F))\vlan(C\vlor D)}}{
        \vlin{\PE\fequ}{}{\Scons{((x\vlan E)\vlan C)\vlor((x\vlan F)\vlan D)}}{
          \vlhy{\Scons{(x\vlan (E\vlan C))\vlor(x\vlan (F\vlan D))}}}}}
  \end{equation}
  where $x$ occurs in $C\vlor D$. Then premise and conclusion are both
  propositional encodings, but the whole derivation cannot be
  lifted. However, since we demand that the mapping is a fibration
  (and therefore a homomorphism) on the binding graphs, there must be
  another instance of $\me$ further below in the derivation:
  \vspace{-2ex}\begin{equation}
    \vlderivation{
      \vlin{\me}{}{S'\cons{(x\vlor x)\vlan(E\vlor F)}}{
        \vlhy{S'\cons{(x\vlan E)\vlor(x\vlan F)}}}}
  \end{equation}
  We can permute both instances via the following more general scheme
  (see~\cite{str:07:RTA,lamarche:gap} for a general discussion on
  permutations of the $\me$-rule):
  \begin{equation}
    \label{eq:m-permutation}
    \scalebox{.66}{$
    \vlderivation{
      \vlin{\me}{}{\Scons{(G\vlor H)\vlan(E\vlor F)\vlan(C\vlor D)}}{
        \vlin{\me}{}{\Scons{((G\vlan E)\vlor(H\vlan F))\vlan(C\vlor D)}}{
          \vlhy{\Scons{(G\vlan E\vlan C)\vlor(G\vlan F\vlan D)}}}}}
    \;
    \leftrightarrow
    \;
    \vlderivation{
      \vlin{\me}{}{\Scons{(G\vlor H)\vlan(E\vlor F)\vlan(C\vlor D)}}{
        \vlin{\me}{}{\Scons{(G\vlor H)\vlan((E\vlan C)\vlor(F\vlan D))}}{
          \vlhy{\Scons{(G\vlan E\vlan C)\vlor(G\vlan F\vlan D)}}}}}    
    $}
  \end{equation}
  We omitted some instances of $\PE\fequ$ and some parentheses.  We
  now call instances of $\me$ as in~\eqref{eq:m-illegal}
  \bfit{illegal}, and we can transform $\DDeri$ through
  $\me$-permutations~\eqref{eq:m-permutation} into a derivation that
  does not contain any illegal $\me$-instances.
  To
  address (ii),
  we also apply a permutation argument, permuting all
  instances of $\acDx$ up until they either reach the top of the
  derivation or an instance of $\me$ which separates the two atoms in
  the premise. More precisely, we consider the following inference rule
  \begin{equation}
    \label{eq:acDeq}
    \vlinf{\acDeq}{}{\Scons{x}}{S_0\cons{S_1\cons{x}\vlor S_2\cons{x}}}
  \end{equation}
  where $S_1\conhole\fequ\conhole\vlor E$ and
  $S_2\conhole\fequ\conhole\vlor F$ and $S\conhole\fequ
  S_0\cons{\conhole\vlor E\vlor F}$ for some formulas $E$ and $F$, where $E$ or $F$ or both might be empty. The
  rule $\acDeq$ permutes over $\fequ$, $\acD$, and other instances of $\acDeq$,
  and over instances of $\me$ if they occur inside $S_0$ or $S_1$ or
  $S_2$. The only situation in which $\acDeq$ cannot be permuted up is
  the following:
  \begin{equation}
    \label{eq:mac}
    \vlderivation{
      \vlin{\acDeq}{}{\Scons{R\cons{x}\vlan(C\vlor D)}}{
        \vlin{\me}{}{\Scons{(R_1\cons{x}\vlor R_2\cons{x})\vlan(C\vlor D)}}{
          \vlhy{\Scons{(R_1\cons{x}\vlan C)\vlor (R_2\cons{x}\vlan D)}}}}}
  \end{equation}
  We can therefore assume that all instances of $\acDx$, that contract
  an atom $x\in\VAR$ are either at the top of $\DDeri$ or below a
  $\me$-instance as in~\eqref{eq:mac}. We now lift $\DDeri$ to
  $\set{\acD,\cfaD,\me,\mfaD,\mexD,\fequ}$, proceed by induction on
  the height of $\DDeri$, beginning at the top, making a case
  analysis on the topmost rule that is not a~$\fequ$.
  \begin{itemize}
  \item $\acDx$: We know that the premise of~\eqref{eq:acDeq} is a
    propositional encoding. Hence, $S_1\conhole=\conhole\vlor \PE E$ and
    $S_2\conhole=\conhole\vlor \PE F$ and both $x$ are universals, and
    $\PE E\vlor \PE F$ contains all occurrences of $x$ bound by that
    universal. We have the following subcases:
    \begin{itemize}
    \item $E$ and $F$ are both non-empty: We have
      \begin{equation*}
        \vlinf{\acDeq}{}{\PE S\cons{x\vlor (\PE E\vlor \PE F)}}{\PE S\cons{(x\vlor \PE E)\vlor (x\vlor\PE F)}}
      \end{equation*}
      which can be lifted to
      \begin{equation*}
        \vlinf{\mfaD}{}{\Scons{\forall x.(E\vlor F)}}{S\cons{(\forall x.E)\vlor (\forall x.F)}}
      \end{equation*}
      where $\PE S\conhole, \PE E, \PE F$ are the propositional encodings of $\Sconhole,E,F$, respectively.
    \item $\PE E$ is empty and $\PE F$ is non-empty: We have
      \begin{equation*}
        \vlinf{\acDeq}{}{\PE S\cons{x\vlor \PE F)}}{\PE S\cons{x\vlor (x\vlor \PE F)}}
      \end{equation*}
      which can be lifted to
      \begin{equation*}
        \vlinf{\cfaD}{}{\Scons{\forall x.F}}{S\cons{\forall x.\forall x.F}}
      \end{equation*}
    \item $\PE E$ is non-empty and $\PE F$ is empty: This is similar to the previous case.
    \item $\PE E$ and $\PE F$ are both empty: This is impossible as the
      premise would not be a propositional encoding.
    \end{itemize}
  \item $\acD$ (contracting an ordinary atom): This can trivially be lifted.
  \item $\me$: There are several cases to consider.
    \begin{itemize}
    \item If none of the four principal formulas in the premise is $x$
      or $x\vlor F$ for some formula $F$ and $x\in\VAR$, then this
      instance of $\me$ can trivially be lifted, and we can
      proceed by induction hypothesis.
  \item If exactly one of the four principal formulas in the premise
    is $x$ for some $x\in\VAR$, then this $x$ is the encoding of an existential in the
    premise and of an universal in the conclusion. This is impossible,
    as $\phi$ has to preserve existentials.
  \item If two of the four principal formulas in the premise
    are $x$ for some $x\in\VAR$, then we are in the following special case of~\eqref{eq:mac}:
    \begin{equation*}
      \vlderivation{
        \vlin{\acDeq}{}{\Scons{x\vlan(C\vlor D)}}{
          \vlin{\me}{}{\Scons{(x\vlor x)\vlan(C\vlor D)}}{
            \vlhy{\Scons{(x\vlan C)\vlor (x\vlan D)}}}}}
    \end{equation*}
    which can be lifted immediately to
    \begin{equation*}
      \vlinf{\mexD}{}{\Scons{\exists x.(C\vlor D)}}{\Scons{(\exists x.C)\vlor(\exists x.D)}}
    \end{equation*}
  \item We have a situation~\eqref{eq:mac} where
    $R_1\cons{x}\fequ x\vlor E$ for some $E$ and $R_2\cons{x}\fequ
    x\vlor F$ for some $F$ with $R\cons{x}\fequ x\vlor E\vlor
    F$ (Otherwise, the application of $\acDeq$ would not be correct.)
    That means, we have: 
    \begin{equation*}
      \vlderivation{
        \vlin{\acDeq}{}{\Scons{(x\vlor E\vlor F)\vlan(C\vlor D)}}{
          \vlin{\me}{}{\Scons{((x\vlor E)\vlor (x\vlor F))\vlan(C\vlor D)}}{
            \vlhy{\Scons{((x\vlor E)\vlan C)\vlor ((x\vlor F)\vlan D)}}}}}
    \end{equation*}
    which can be lifted to
    \begin{equation*}
      \vlderivation{
        \vlin{\mfaD}{}{\Scons{(\forall x.(E\vlor F))\vlan(C\vlor D)}}{
          \vlin{\me}{}{\Scons{((\forall x. E)\vlor (\forall x. F))\vlan(C\vlor D)}}{
            \vlhy{\Scons{((\forall x. E)\vlan C)\vlor ((\forall x. F)\vlan D)}}}}}
    \end{equation*}
  \item In all other cases (e.g.\ exactly one of the principal formulas
    is of shape $x\vlor F$ (and none is $x$), we can trivially lift
    the $\me$-instance, as the quantifier structure is not affected.
    \end{itemize}
  \end{itemize}
  Thus $\DDeri$ can be lifted to
  $\vlder{\set{\acD,\cfaD,\me,\mfaD,\mexD,\fequ}}{\Deri}{H}{G\rsubstof\phi}$. 
\end{proof}
\end{document}